\newcommand{\refP}[1]{%
	\def\InputString{#1}%
	\IfBeginWith{\InputString}{Equation}{%
		(\ref{#1})}{%
	\IfBeginWith{\InputString}{Section}{%
		Section \ref{#1}}{%
	\IfBeginWith{\InputString}{Subsection}{%
		Subsection \ref{#1}}{%
	\IfBeginWith{\InputString}{Chapter}{%
		Chapter \ref{#1}}{%
	\IfBeginWith{\InputString}{Subsubsection}{%
		Subsubsection \ref{#1}}{%
	\IfBeginWith{\InputString}{Problem}{%
		(\ref{#1})}{%
	\IfBeginWith{\InputString}{Property}{%
		property (\ref{#1})}{%
	\IfBeginWith{\InputString}{Algorithm}{%
		Algorithm \ref{#1}}{%
	\IfBeginWith{\InputString}{Figure}{%
		Figure \ref{#1}}{%
	\IfBeginWith{\InputString}{Table}{%
		Table \ref{#1}}{%
	\IfBeginWith{\InputString}{Question}{%
		Question (\ref{#1})}{%
	\IfBeginWith{\InputString}{Footnote}{%
		Footnote \ref{#1}}{%
		\ref{#1}}}}}}}}}}}}}%
}
\definecolor{TodoRed}{RGB}{150,50,0}
\newcommand{\TextForAll}{\hspace{2pt} \text{ for all } \hspace{2pt}}
\newcommand{\TextSuchThat}{\hspace{2pt}\text{ such that }\hspace{2pt}}
\newcommand{\TextIf}{\hspace{2pt}\text{ if }\hspace{2pt}}
\newcommand{\TextAnd}{\hspace{2pt}\text{ and }\hspace{2pt}}
\newcommand{\TextElse}{\hspace{2pt}\text{ else }\hspace{2pt}}
\newcommand{\abs}[1]{\left|#1\right|}
\newcommand{\norm}[1]{\left\|#1\right\|}
\newcommand{\scprod}[2]{\langle#1,#2\rangle}
\newcommand{\ProjToIndex}[2]{\left.#2\right|_{#1}}
\newcommand{\ProjToSet}[2]{\mathcal{P}_{#1}\left(#2\right)}
\newcommand{\SetOf}[1]{\left[#1\right]}
\newcommand{\SetSize}[1]{\#\left(#1\right)}
\newcommand{\argmax}[1]{\underset{#1}{\textnormal{argmax}}}
\newcommand{\argmin}[1]{\underset{#1}{\textnormal{argmin}}}
\newcommand{\Expect}[1]{\mathbb{E}\left[#1\right]}
\def\AddBasicFunction#1#2{
	\expandafter\def\csname #1\endcsname##1{
		\def\InputString{##1}
		\def\CheckString{}
		\ifx\InputString\CheckString 
			#2
		\else
			#2 \left(##1\right)
		\fi
	}
}
\newcommand{\LogBase}[2]{\log_{#1}\left(#2\right)}
\newcommand{\ExpE}{\mathrm{e}}
\DeclareFontFamily{U}{mathx}{\hyphenchar\font45}
\DeclareFontShape{U}{mathx}{m}{n}{
      <5> <6> <7> <8> <9> <10>
      <10.95> <12> <14.4> <17.28> <20.74> <24.88>
      mathx10
      }{}
\DeclareSymbolFont{mathx}{U}{mathx}{m}{n}
\DeclareMathSymbol{\bigtimes}{1}{mathx}{"91}
\newcommand{\normRHS}[1]{\norm{#1}}
\newcommand{\boundEmpty}{C}
\newcommand{\bound}[2]{\boundEmpty\left(#1,#2\right)}
\newcommand{\normDual}[1]{\norm{#1}_{\ast}}
\newcommand{\OrderOf}[1]{\mathcal{O}\left(#1\right)}
\newcommand{\Prox}[2]{\textnormal{Prox}_{#1}\left(#2\right)}
\newcommand{\tol}{\epsilon}
\newcommand{\Mean}[1]{\text{Mean}\left(#1\right)}
\newcommand{\MeanN}{\Mean{\text{N$\ell_1$E}}}
\newcommand{\MeanLN}[1]{\Mean{\text{LN$\ell_{#1}$E}}}
\newcommand{\MeanLNR}[1]{\Mean{\text{LN$\ell_{#1}$R}}}
\newcommand{\MeanNperNP}{\Mean{\text{N$\ell_1$E/$\ell_1$NP}}}
\newtheorem{CounterTheorem}{}[section]
\newtheorem{Definition}[CounterTheorem]{Definition}
\newtheorem{Theorem}[CounterTheorem]{Theorem}
\newtheorem{Proposition}[CounterTheorem]{Proposition}
\newtheorem{Lemma}[CounterTheorem]{Lemma}
\newtheorem{Corollary}[CounterTheorem]{Corollary}
\newtheorem{Remark}[CounterTheorem]{Remark}
\newtheorem{Experiment}{Experiment}
\newtheorem{AlgorithmEnvirorementForNTheorem}[CounterTheorem]{Algorithm}
\newenvironment{Algorithm}[2][]{\begin{AlgorithmEnvirorementForNTheorem}[#1]\label{#2}\hspace{1pt}\\\begin{algorithm}[H]}{\end{algorithm}\end{AlgorithmEnvirorementForNTheorem}}
\renewcommand{\Vec}[1]{\mathbf{#1}}
\newcommand{\eVec}{\Vec{e}}
\newcommand{\tVec}{\Vec{t}}
\newcommand{\uVec}{\Vec{u}}
\newcommand{\vVec}{\Vec{v}}
\newcommand{\wVec}{\Vec{w}}
\newcommand{\xVec}{\Vec{x}}
\newcommand{\yVec}{\Vec{y}}
\newcommand{\zVec}{\Vec{z}}
\newcommand{\Mat}[1]{\mathbf{#1}}
\newcommand{\AMat}{\Mat{A}}
\newcommand{\WMat}{\Mat{W}}
\newcommand{\IDVec}{\mathbbm{1}}
\title{Efficient Tuning-Free $\ell_1$-Regression of Nonnegative Compressible Signals}
\date{}
\author{
	Hendrik Bernd Petersen
	\footnote{
		Communications and Information Theory Group,
		Technische Universtität Berlin, Berlin,
		\href{mailto:petersen@tu-berlin.de}{petersen@tu-berlin.de}
	}
	\and
	Bubacarr Bah
	\footnote{
		African Institute for Mathematical Sciences (AIMS) South Africa,
		Cape Town,
		and Division of Applied Mathematics,
		Stellenbosch University, Stellenbosch
		\href{mailto:bubacarr@aims.ac.zae}{bubacarr@aims.ac.za}
	}
	\and
	Peter  Jung
	\footnote{
		Communications and Information Theory Group,
		Technische Universtität Berlin, Berlin,
		\href{mailto:peter.jung@tu-berlin.de}{peter.jung@tu-berlin.de}
	}
}
\begin{document}
	\maketitle
\begin{abstract}
In compressed sensing the goal is to recover a signal from as few as possible noisy, linear measurements.
The general assumption is that the signal has only a few non-zero entries.
The recovery can be performed by multiple different decoders, however most of them rely on some tuning.
Given an estimate for the noise level a common convex approach to recover the signal is basis pursuit denoising.
If the measurement matrix has the robust null space property with respect to the $\ell_2$-norm, basis pursuit denoising
obeys stable and robust recovery guarantees.
In the case of unknown noise levels, nonnegative least squares
recovers non-negative signals if the measurement matrix fulfills an additional property (sometimes called the $M^+$-criterion).
However, if the measurement matrix is the biadjacency matrix of a random left regular bipartite graph
it obeys with a high probability the null space property with respect to the $\ell_1$-norm with optimal parameters.
Therefore, we discuss non-negative least absolute deviation (NNLAD).
For these measurement matrices, we prove a uniform, stable and robust recovery guarantee without the need for tuning.
Such guarantees are important, since binary expander matrices are sparse and thus allow for fast sketching and recovery.
We will further present a method to solve the NNLAD numerically and show that this is comparable to state of the art methods.
Lastly, we explain how the NNLAD can be used for viral detection in the recent COVID-19 crisis.
\end{abstract}
\section{Introduction}
Since it has been realized that many signals admit a sparse representation in some frames,
the question arose whether or not such signals can be recovered from less samples than the dimension of the domain
by utilizing the low dimensional structure of the signal.
The question was already answered positively in the beginning of the millennium \cite{cs_candes}\cite{cs_donoho}.
By now there are multiple different decoders to recover a sparse signal from noisy measurements
with robust recovery guarantees.
Most of them however rely on some form of tuning, depending on either the signal or the noise.\\
The basis pursuit denoising requires an upper bound on the norm of the noise \cite[Theorem~4.22]{Introduction_CS},
the least shrinkage and selection operator
an estimate on the $\ell_1$-norm of the signal \cite[Theorem~11.1]{book_LASSO}
and the Lagrangian version of least shrinkage and selection operator
allegedly needs to be tuned
to the order of the the noise level \cite[Theorem~11.1]{book_LASSO}.
The expander iterative hard thresholding needs the sparsity of the signal
or an estimate of the order of the expansion property \cite[Theorem~13.15]{Introduction_CS}.
The order of the expansion property
can be calculated from the measurement matrix, however there is no polynomial time method known
to do this.
Variants of these methods have similar drawbacks.
The non-negative basis pursuit denoising requires the same tuning parameter as the basis pursuit denoising
\cite{non-neg_BP}.
Other thresholding based decoders like sparse matching pursuit and expander matching pursuit
have the same limitations as the expander iterative hard thresholding \cite{combinatorial_l1NSP_gilbert}.\\
If these side information is not known a priori, many decoders yield either no recovery guarantees
or, in their imperfect tuned versions, yield sub-optimal estimation errors \cite[Theorem~11.12]{Introduction_CS}.
Even though the problem of sparse recovery from under-sampled measurements has been answered long ago,
finding tuning free decoders that achieve robust recovery guarantees is still a topic of interest.\\
The most prominent achievement for that is the non-negative least squares (NNLS)
\cite{Bruckstein2008a}\cite{Donoho2010}\cite{Wang2011} \cite{Slawski2011}\cite{Slawski2013a}.
It is completely tuning free \cite{low_rank} and in \cite{NNLS_first}\cite{NNLS}
it was proven that it achieves robust recovery guarantees
if the measurement matrix consists of certain independent sub-Gaussian random variables.
\subsection{Our Contribution}
We will replace the least squares in the NNLS with an arbitrary norm
and obtain the non-negative least residual (NNLR).
By adapting \cite{NNLS_first} we prove a
recovery guarantees under similar conditions as the NNLS.
In particular, we consider the case where we minimize the $\ell_1$-norm of the residual (NNLAD)
and give a recovery guarantee if the measurement matrix is a random walk matrix of a uniformly at random drawn $D$-left regular bipartite graph.\\
In general, our result states that if a certain measurement is present,
the basis pursuit denoising can be replaced by the tuning-less NNLR
for non-negative signals.
While sub-Gaussian measurement matrices rely on
a probabilistic argument to verify that this measurement is present,
random walk matrices of left regular graphs naturally
have the measurement.
The tuning-less nature gives the NNLR an advantage over other decoders if the
noise power can not be estimated, which is for instance the case if the noise
components are multiplicative,
i.e. a random variable times the true measurements, or when the noise
is Laplacian distributed.
The latter noise distribution or the existence of outliers
also favors an $\ell_1$ regression approach over an $\ell_2$ regression approach
and thus motivate to use the NNLAD over the NNLS.\\
Further, the sparse structure of left regular graphs
can reduce the encoding and decoding time to a fraction.
Using \cite{APP} we can solve the NNLAD with a first order method of a single optimization problem with
a sparse measurement matrix. Other state of the art decoders often use non-convex optimization,
computationally complex projections or need to solve multiple different optimization problems.
For instance, to solve the basis pursuit denoising given a tuning parameter a common approach is to solve a sequence of
LASSO problems to approximate where the Pareto curve attains
the value of the tuning parameter of basis pursuit denoising \cite{SPGL1}.\\
\subsection{Relations to Other Works}
We build on the theory of \cite{NNLS_first} that uses
the $\ell_2$ null space property and the $M^+$ criterion.
These methods have also been used in \cite{low_rank}\cite{NNLS}.
To the best of the authors knowledge the
$M^+$ criterion has not been used with an $\ell_1$ null space property before.
Other works have used adjacency matrices of graphs
as measurements matrices including
\cite{combinatorial_Jafarpour}\cite{combinatorial_Xu}\cite{l1NSP_berinde}\cite{combinatorial_l1NSP_gilbert}\cite{combinatoril_l1NSP_khaje}.
The works \cite{combinatorial_Jafarpour}\cite{combinatorial_Xu}
did not consider noisy observations.
The decoder in \cite{l1NSP_berinde}
is the basis pursuit denoising and thus requires tuning depending on the
noise power. \cite{combinatoril_l1NSP_khaje} proposes two decoders for
non-negative signals.
The first is the non-negative basis pursuit which could be extended
to the non-negative basis pursuit denoising. However this again
needs a tuning parameter depending on the noise power.
The second decoder, the Reverse Expansion Recovery algorithm,
requires the order of the expansion property,
which is not known to be calculatable in a polynomial time.
The survey \cite{combinatorial_l1NSP_gilbert}
contains multiple decoders including the basis pursuit,
which again needs tuning depending on the noise power for robustness,
the expander matching pursuit and the sparse matching pursuit,
which need the order of the expansion property.
Further, \cite{non-neg_BP} considered sparse regression of non-negative signals
and also used the non-negative basis pursuit denoising as decoder,
which again needs tuning dependent on the noise power.
To the best of the authors knowledge, this is the first work
that considers tuning-less sparse recovery for random walk matrices of left
regular bipartite graphs.
The NNLAD has been considered in \cite{super_res} with a structured sparsity model
without the use of the $M^+$ criterion.
\section{Preliminaries}
For $K\in\mathbb{N}$ we denote the set of integers from $1$ to $K$ by $\SetOf{K}$.
For a set $T\subset\SetOf{N}$ we denote the number of elements in $T$ by $\SetSize{T}$.
Vectors are denoted by lower case bold face symbols,
while its corresponding components are denoted by lower case italic letters.
Matrices are denoted by upper case bold face symbols,
while its corresponding components are denoted by upper case italic letters.
For $\xVec\in\mathbb{R}^N$ we denote its $\ell_p$-norms by $\norm{\xVec}_p$.
Given $\AMat\in\mathbb{R}^{M\times N}$ we denote its operator norm as
operator from $\ell_q$ to $\ell_p$ by
$\norm{\AMat}_{q\rightarrow p}:=\sup_{\vVec\in\mathbb{R}^N,\norm{\vVec}_q\leq 1}\norm{\AMat\vVec}_p$.
By $\mathbb{R}_+^N$ we denote the non-negative orthant.
Given a closed convex set $C\subset\mathbb{R}^N$, we denote the projection onto $C$,
i.e. the unique minimizer of $\argmin{\zVec\in C}\frac{1}{2}\norm{\zVec-\vVec}_2^2$, by 
$\ProjToSet{C}{\vVec}$.
For a vector $\xVec\in\mathbb{R}^N$ and a set $T\subset\SetOf{N}$, $\ProjToIndex{T}{\xVec}$
denotes the vector in $\mathbb{R}^N$, whose $n$-th component is $x_n$ if $n\in T$ and $0$ else.
Given $N,S\in\mathbb{N}$ we will often need sets $T\subset\SetOf{N}$ with $\SetSize{T}\leq S$
and we abbreviate this by $\SetSize{T}\leq S$ if no confusion is possible.\\
Given a measurement matrix $\AMat\in\mathbb{R}^{M\times N}$
a decoder is any map $\Q{}:\mathbb{R}^M\rightarrow\mathbb{R}^N$.
A signal is any possible $\xVec\in\mathbb{R}^N$.
If $\xVec\in\mathbb{R}^N_+=\left\{\zVec\in\mathbb{R}^N:z_n\geq 0 \TextForAll n\in\SetOf{N}\right\}$,
we say the signal is non-negative and write shortly $\xVec\geq 0$.
If additionally $x_n>0$ for all $n\in\SetOf{N}$, we write $\xVec>0$.
An observation is any possible input of a decoder, i.e. all $\yVec\in\mathbb{R}^M$.
We allow all possible inputs of the decoder as observation, since in general the transmitted
codeword $\AMat\xVec$ is disturbed by some noise.
Thus, given a signal $\xVec$ and an observation $\yVec$ we call
$\eVec:=\yVec-\AMat\xVec$ the noise.
A signal $\xVec$ is called $S$-sparse if $\norm{\xVec}_0:=\SetSize{\left\{n\in\SetOf{N}:x_n\neq 0\right\}}\leq S$.
We denote the set of $S$-sparse vectors by
\begin{align}\notag
	\Sigma_S:=\left\{\zVec\in\mathbb{R}^N:\norm{\zVec}_0\leq S\right\}.
\end{align}
Given some $S\in\SetOf{N}$ the compressibility of a signal $\xVec$
can be measured by $d_1\left(\xVec,\Sigma_S\right):=\inf_{\zVec\in\Sigma_S}\norm{\xVec-\zVec}_1$.\\
Given $N$ and $S$, the general non-negative compressed sensing task is to find a measurement matrix
$\AMat\in\mathbb{R}^{M\times N}$ and a decoder $\Q{}:\mathbb{R}^{M}\rightarrow\mathbb{R}^{N}$
with $M$ as small as possible such that the following holds true:
There exists a $q\in\left[1,\infty\right]$ and a continuous function $\boundEmpty:\mathbb{R}\times\mathbb{R}^M\rightarrow\mathbb{R}_+$
with $\bound{0}{0}=0$ such that
\begin{align}\notag
	\norm{\Q{\yVec}-\xVec}_q\leq \bound{d_1\left(\xVec,\Sigma_S\right)}{\yVec-\AMat\xVec}
	\TextForAll\xVec\in\mathbb{R}_+^N \TextAnd \yVec\in\mathbb{R}^M
\end{align}
holds true.
This will ensure that if we can control the compressibility and the noise,
we can also control the estimation error and in particular decode every noiseless observation of $S$-sparse signals exactly.
\section{Main Results}\label{Section:main_results}
Given a measurement matrix $\AMat\in\mathbb{R}^{M\times N}$ and a norm $\normRHS{\cdot}$ on $\mathbb{R}^M$
we propose to define the decoder as follows:
Given $\yVec\in\mathbb{R}^M$ set $\Q{\yVec}$ as any minimizer of
\begin{align}\label{Problem:NNLN}\tag{NNLR}
	\argmin{\zVec\geq 0}\normRHS{\AMat\zVec-\yVec}.
\end{align}
We call this problem non-negative least residual (NNLR).
In particular, for $\normRHS{\cdot}=\norm{\cdot}_1$ this problem is called
non-negative least absolute deviation (NNLAD) and for $\normRHS{\cdot}=\norm{\cdot}_2$ 
this problem is known as the non-negative least squares (NNLS) studied in \cite{NNLS_first}.
In fact, we can translate the proof techniques fairly simple.
We just need to introduce the dual norm.
\begin{Definition}
	Let $\normRHS{\cdot}$ be a norm on $\mathbb{R}^M$. The norm $\normDual{\cdot}$ on $\mathbb{R}^M$ defined by
	$
		\normDual{\vVec}:=\sup_{\normRHS{\uVec}\leq 1}\scprod{\vVec}{\uVec},
	$
	is called dual norm to $\normRHS{\cdot}$.
\end{Definition}
Note that the dual norm is actually a norm.
To obtain a recovery guarantee for NNLR we have certain requirements on the measurement matrix $\AMat$.
As for most other convex optimization problems in compressed sensing, we use a null space property.
\begin{Definition}
	Let $S\in\SetOf{N}$, $q\in\left[1,\infty\right)$ and $\normRHS{\cdot}$ be any norm on $\mathbb{R}^M$.
	Further let $\AMat\in\mathbb{R}^{M\times N}$.
	Suppose there exists constants $\rho\in\left[0,1\right)$ and $\tau\in\left[0,\infty\right)$ such that
	\begin{align}\notag
		\norm{\ProjToIndex{T}{\vVec}}_q
		\leq \rho S^{\frac{1}{q}-1}\norm{\ProjToIndex{T^c}{\vVec}}_1 + \tau\normRHS{\AMat \vVec}
		\TextForAll \vVec\in\mathbb{R}^N \TextAnd \SetSize{T}\leq S.
	\end{align}
	Then, we say $\AMat$ has the $\ell_q$-robust null space property of order $S$ with respect to
	$\normRHS{\cdot}$ or in short $\AMat$ has the $\ell_q$-RNSP of order $S$ with respect to $\normRHS{\cdot}$
	with constants $\rho$ and $\tau$.
	$\rho$ is called stableness constant and $\tau$ is called robustness constant.
\end{Definition}
In order to deal with the non-negativity, we need $\AMat$ to be biased in a certain way.
In \cite{NNLS_first} this bias was guaranteed with the $M^+$ criterion.
\begin{Definition}
	Let $\AMat\in\mathbb{R}^{M\times N}$.
	Suppose there exists $\tVec\in\mathbb{R}^M$ such that $\AMat^T\tVec>0$.
	Then we say $\AMat$ obeys the the $M^+$ criterion with vector $\tVec$ and constant
	$\kappa:=\max_{n\in\SetOf{N}}\abs{\left(\AMat^T\tVec\right)_n}\max_{n\in\SetOf{N}}\abs{\left(\AMat^T\tVec\right)^{-1}_n}$.
\end{Definition}
Note that $\kappa$ is actually a condition number of the matrix with diagonal $\AMat^T\tVec$
and $0$ else. Condition number numbers are frequently used in error bounds of numerical linear algebra.
The general recovery guarantee is the following and
similar results have been obtained in the matrix case in \cite{NNLR_matrix}.
\begin{Theorem}[NNLR Recovery Guarantee]\label{Theorem:NNLDMinimizer}
	Let $S\in\SetOf{N}$, $q\in\left[1,\infty\right)$ and $\normRHS{\cdot}$ be any norm on $\mathbb{R}^M$ with dual norm $\normDual{\cdot}$.
	Further, suppose that $\AMat\in\mathbb{R}^{M\times N}$ obeys
	\begin{itemize}
		\item[a)]
			the $\ell_q$-RNSP of order $S$ with respect to $\normRHS{\cdot}$ with constants $\rho$ and $\tau$ and
		\item[b)] 
			the $M^+$ criterion with vector $\tVec$ and constant $\kappa$.
	\end{itemize}
	If $\kappa\rho<1$, the following recovery guarantee holds true:
	For all $\xVec\in\mathbb{R}_+^N$ and $\yVec\in\mathbb{R}^M$
	any minimizer $\xVec^\#$ of
	\begin{align}\tag{NNLR}
		\argmin{\zVec\geq 0}\normRHS{\AMat\zVec-\yVec}
	\end{align}
	obeys the bound
	\begin{align}\notag
		\norm{\xVec-\xVec^\#}_q
		\leq&2\frac{\left(1+\kappa\rho\right)^2}{1-\kappa\rho}
			\kappa S^{\frac{1}{q}-1}
			d_1\left(\xVec,\Sigma_S\right)
		+2\left(
				\frac{\left(1+\kappa\rho\right)^2}{1-\kappa\rho} S^{\frac{1}{q}-1}\max_{n\in\SetOf{N}}\abs{\left(\AMat^T\tVec\right)^{-1}_n}
				\normDual{\tVec}+\frac{3+\kappa\rho}{1-\kappa\rho}\kappa\tau
			\right)\normRHS{\AMat\xVec-\yVec}.
	\end{align}
	If $q=1$, this bound can be improved to
	\begin{align}\notag
		\norm{\xVec-\xVec^\#}_1
		\leq&2\frac{1+\kappa\rho}{1-\kappa\rho}
			\kappa
			d_1\left(\xVec,\Sigma_S\right)
		+2\left(
				\frac{1+\kappa\rho}{1-\kappa\rho}\max_{n\in\SetOf{N}}\abs{\left(\AMat^T\tVec\right)^{-1}_n}
				\normDual{\tVec}+\frac{2}{1-\kappa\rho}\kappa\tau
			\right)\normRHS{\AMat\xVec-\yVec}.
	\end{align}
\end{Theorem}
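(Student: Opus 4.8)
The plan is to set $\vVec := \xVec - \xVec^\#$ and reduce the whole statement to two scalar inequalities relating $\norm{\ProjToIndex{T}{\vVec}}_1$ and $\norm{\ProjToIndex{T^c}{\vVec}}_1$, where $T\subset\SetOf{N}$ is chosen as the support of a best $S$-term approximation of $\xVec$, so that $\norm{\ProjToIndex{T^c}{\xVec}}_1 = d_1(\xVec,\Sigma_S)$. First I would record the only consequence of optimality that is needed: since $\xVec\geq 0$ is feasible and $\xVec^\#$ is a minimizer of the NNLR problem, $\normRHS{\AMat\xVec^\#-\yVec}\leq\normRHS{\AMat\xVec-\yVec}$, and the triangle inequality then gives $\normRHS{\AMat\vVec}\leq 2\normRHS{\AMat\xVec-\yVec}$. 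Every noise term in the final bound is obtained by substituting this estimate only at the very end, which is exactly where the factor $2$ in front of the noise bracket comes from.

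The crux -- and the only genuinely new step compared with the $\ell_2$ theory of \cite{NNLS_first} -- is to turn the $M^+$ criterion into a cone condition on $\vVec$. Write $\wVec := \AMat^T\tVec$, so that $\wVec>0$, $\min_n w_n = (\max_{n\in\SetOf{N}}\abs{(\AMat^T\tVec)^{-1}_n})^{-1}$ and $\max_n w_n = \kappa\min_n w_n$. On $T^c$ both $x_n\geq 0$ and $x_n^\#\geq 0$, hence $v_n = x_n-x_n^\#\leq x_n$, which gives the pointwise bound $\abs{v_n} = 2\max\{v_n,0\}-v_n\leq 2x_n-v_n$. Multiplying by $w_n>0$ and summing over $T^c$ yields $\min_n w_n\,\norm{\ProjToIndex{T^c}{\vVec}}_1 \leq 2\max_n w_n\,\norm{\ProjToIndex{T^c}{\xVec}}_1 + \sum_{n\in T}w_n v_n - \scprod{\wVec}{\vVec}$. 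Here I would bound $\sum_{n\in T}w_n v_n\leq \max_n w_n\,\norm{\ProjToIndex{T}{\vVec}}_1$ and, using $\scprod{\wVec}{\vVec}=\scprod{\tVec}{\AMat\vVec}$ together with the defining inequality of the dual norm, $-\scprod{\wVec}{\vVec}\leq\normDual{\tVec}\normRHS{\AMat\vVec}$. Dividing by $\min_n w_n$ turns the ratios of entries of $\wVec$ into $\kappa$ and produces the key estimate
\begin{align}\notag
	\norm{\ProjToIndex{T^c}{\vVec}}_1 \leq \kappa\norm{\ProjToIndex{T}{\vVec}}_1 + \max_{n\in\SetOf{N}}\abs{(\AMat^T\tVec)^{-1}_n}\normDual{\tVec}\normRHS{\AMat\vVec} + 2\kappa\, d_1(\xVec,\Sigma_S).
\end{align}
This plays the role that $\ell_1$-minimality plays for basis pursuit, and I expect it to be the main obstacle, mostly in getting the sign bookkeeping and the dual-norm duality right for a general norm $\normRHS{\cdot}$ rather than $\norm{\cdot}_2$.

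The second ingredient is standard. The $\ell_q$-RNSP applied to the set $T$, together with $\norm{\ProjToIndex{T}{\vVec}}_1\leq S^{1-1/q}\norm{\ProjToIndex{T}{\vVec}}_q$, shows that $\AMat$ also satisfies an $\ell_1$-type null space inequality $\norm{\ProjToIndex{T}{\vVec}}_1\leq\rho\norm{\ProjToIndex{T^c}{\vVec}}_1 + S^{1-1/q}\tau\normRHS{\AMat\vVec}$ with the same stableness constant $\rho$. Feeding this into the boxed inequality and solving the resulting two-term system for $\norm{\ProjToIndex{T^c}{\vVec}}_1$ -- which is where the hypothesis $\kappa\rho<1$ is used -- gives
\begin{align}\notag
	\norm{\ProjToIndex{T^c}{\vVec}}_1 \leq \frac{1}{1-\kappa\rho}\left(\kappa S^{1-1/q}\tau\normRHS{\AMat\vVec} + \max_{n\in\SetOf{N}}\abs{(\AMat^T\tVec)^{-1}_n}\normDual{\tVec}\normRHS{\AMat\vVec} + 2\kappa\, d_1(\xVec,\Sigma_S)\right).
\end{align}

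Finally I would pass to the $\ell_q$-error. For $q=1$ this is immediate, since $\norm{\vVec}_1 = \norm{\ProjToIndex{T}{\vVec}}_1 + \norm{\ProjToIndex{T^c}{\vVec}}_1$ and the $\ell_1$ null space inequality bounds the first summand by $\rho\norm{\ProjToIndex{T^c}{\vVec}}_1 + \tau\normRHS{\AMat\vVec}$; substituting the previous display and then $\normRHS{\AMat\vVec}\leq 2\normRHS{\AMat\xVec-\yVec}$ yields the claimed improved bound. For general $q$ I would instead apply the $\ell_q$-RNSP to the index set $T_0$ of the $S$ largest entries of $\vVec$ and combine it with the Stechkin-type estimate $\norm{\ProjToIndex{T_0^c}{\vVec}}_q\leq S^{1/q-1}\norm{\vVec}_1$ to get $\norm{\vVec}_q\leq(1+\rho)S^{1/q-1}\norm{\vVec}_1 + \tau\normRHS{\AMat\vVec}$; bounding $\norm{\vVec}_1$ by the two displays above and substituting the noise estimate finishes the proof. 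The only remaining work is cosmetic: using $\kappa\geq 1$ to replace factors such as $(1+\rho)$ by $(1+\kappa\rho)$ so that the constants take the exact form stated.
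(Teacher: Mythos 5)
Your proposal is correct: the key cone inequality, the two-term system, and the passage to the $\ell_q$-error all check out, and after the cosmetic weakenings via $\kappa\geq 1$ your constants land exactly on (in fact slightly under) the stated ones. The route is, however, organized differently from the paper's. The paper does not derive a cone condition on $\vVec=\xVec-\xVec^\#$ at all; instead it rescales the matrix, showing in \thref{Lemma:DiagMat} that $\AMat\WMat^{-1}$ with $\WMat=\diag{\AMat^T\tVec}$ inherits the $\ell_q$-RNSP with constants $\kappa\rho$ and $\max_n\abs{(\AMat^T\tVec)_n}\tau$, then invokes the textbook basis-pursuit bound (\thref{Theorem:SRNSPCond}, i.e.\ Theorems~4.20/4.25 of the reference) as a black box for the vectors $\WMat\xVec,\WMat\zVec$, and finally disposes of the resulting term $\norm{\WMat\zVec}_1-\norm{\WMat\xVec}_1$ by writing it, via non-negativity, as $\scprod{\tVec}{\AMat(\zVec-\xVec)}\leq\normDual{\tVec}\normRHS{\AMat\zVec-\AMat\xVec}$. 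That last identity is the same mechanism as your dual-norm step $-\scprod{\wVec}{\vVec}\leq\normDual{\tVec}\normRHS{\AMat\vVec}$, and both proofs close with the identical optimality argument $\normRHS{\AMat\vVec}\leq 2\normRHS{\AMat\xVec-\yVec}$. What the paper's reduction buys is modularity: the entire NSP-to-error-bound machinery (including the $T_0$/Stechkin step you redo by hand for general $q$) is inherited from the cited theorem, and the improved $q=1$ constants come for free from its sharper variant. What your version buys is self-containedness and transparency about where non-negativity enters --- the pointwise bound $\abs{v_n}\leq 2x_n-v_n$ on $T^c$ makes the "cone" explicit, whereas in the paper it is hidden inside the term $\norm{\WMat\zVec}_1-\norm{\WMat\xVec}_1$. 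Both are valid; yours essentially inlines and re-proves \thref{Theorem:SRNSPCond} for the specific vector $\vVec$ rather than quoting it.
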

\begin{proof}
	The proof can be found in \refP{Subsection:Proof:NNLDMinimizer}.
\end{proof}
Given a matrix with $\ell_q$-RNSP we can
add a row of ones (or a row consisting of one minus the column sums of the matrix)
to fulfill the $M^+$ criterion with the optimal $\kappa=1$.
Certain random measurement matrices guarantee uniform bounds on $\kappa$ for fixed vectors $\tVec$.
In \cite[Theorem~12]{NNLS_first} it was proven that if $A_{m,n}$ are all i.i.d. $0/1$ Bernoulli random variables,
$\AMat$ has $M^+$ criterion with $\tVec=\left(1,\dots,1\right)^T\in\mathbb{R}^M$ and $\kappa\leq 3$
with high probability.
This is problematic, since if $\kappa>1$, it might happen that $\kappa\rho<1$ is not fulfilled anymore.
Since the stableness constant $\rho\left(S'\right)$ as a function of $S'$ is monotonically increasing,
the condition $\kappa\rho(S')<1$ might only hold if $S'<S$. If that is the case,
there are vectors $\xVec\in\Sigma_S$ that are being recovered by basis pursuit denoising but not by NNLS!
This is for instance the case for the matrix
$\AMat=\begin{pmatrix}
	1 & 0 & 1 \\
 	0 & 1 & 1
\end{pmatrix}$, which has $\ell_1$-robust null space property of order $1$ with stableness constant
$\rho:=\frac{1}{2}$ and $M^+$ criterion with $\kappa\geq 2$ for any possible choice of $\tVec$.
In particular, the vector $\xVec=\left(0,0,1\right)^T$ is not necessarily
being recovered by the NNLAD and the NNLS.\\
Hence, it is crucial that the vector $\tVec$ is chosen to minimize $\kappa$
and ideally obeys the optimal $\kappa=1$.
This motivates us to use random walk matrices of regular graphs
since they obey exactly this.
\begin{Definition}\label{Definition:LosslessExpander}
	Let $\AMat\in\left\{0,1\right\}^{M\times N}$ and $D\in\SetOf{M}$.
	For $T\subset N$ the set
	\begin{align}\notag
		\RightVertices{T}
		:=\bigcup_{n\in T}\left\{m\in\SetOf{M}\TextSuchThat A_{m,n}=1\right\}
	\end{align}
	is called the set of right vertices connected to the set of left vertices $T$.
	If
	\begin{align}\notag
		\SetSize{\RightVertices{ \left\{n\right\} }}
		=D
		\TextForAll n\in\SetOf{N},
	\end{align}
	then $D^{-1}\AMat\in\left\{0,D^{-1}\right\}^{M\times N}$
	is called a random walk matrix of a $D$-left regular bipartite graph.
	We also say short that $D^{-1}\AMat$ is a $D$-LRBG.
	If additionally there exists a $\theta\in\left[0,1\right)$ such that
	\begin{align}\notag
		\SetSize{\RightVertices{T}}
		\geq\left(1-\theta\right)D \SetSize{T}
		\TextForAll \SetSize{T}\leq S
	\end{align}
	holds true, then $D^{-1}\AMat$ is called a random walk matrix of a
	$\left(S,D,\theta\right)$-lossless expander.
\end{Definition}
We will only consider random walk matrices and no biadjacency matrices.
Note that we have made a slight abuse of notation. The term $D$-LRBG as a short form for $D$-left regular bipartite graph
refers in our case to the random walk matrix $\AMat$ but not the graph itself. We omit this minor technical differentiation,
for the sake of shortening the frequently used term random walk matrix of a $D$-left regular bipartite graph.
Lossless expanders are bipartite graphs that have a low number of edges but are still highly connected,
see for instance \cite[Chapter~4]{pseudorandomness}.
As a consequence their random walk matrices have good properties for compressed sensing.
It is well known that random walk matrices of a $\left(2S,D,\theta\right)$-lossless expanders obey the $\ell_1$-RNSP
of order $S$ with respect to $\norm{\cdot}_1$, see \cite[Theorem~13.11]{Introduction_CS}.
The dual norm of $\norm{\cdot}_1$ is the norm $\norm{\cdot}_\infty$
and the $M^+$ criterion is easily fulfilled, since the columns sum up to one.
From \thref{Theorem:NNLDMinimizer} we can thus draw the following corollary.
\begin{Corollary}[Lossless Expander Recovery Guarantee]\label{Corollary:lossless_expander}
	Let $S\in\SetOf{N}$, $\theta\in\left[0,\frac{1}{6}\right)$.
	If $\AMat\in\left\{0,D^{-1}\right\}^{M\times N}$ is a random walk matrix of a $\left(2S,D,\theta\right)$-lossless expander,
	then the following recovery guarantee holds true:
	For all $\xVec\in\mathbb{R}_+^N$ and $\yVec\in\mathbb{R}^M$
	any minimizer $\xVec^\#$ of
	\begin{align}\tag{NNLAD}
		\argmin{\zVec\geq 0}\norm{\AMat\zVec-\yVec}_1
	\end{align}
	obeys the bound
	\begin{align}\label{Equation:EQ1:Corollary:lossless_expander}
		\norm{\xVec-\xVec^\#}_1
		\leq&2\frac{1-2\theta}{1-6\theta}
			d_1\left(\xVec,\Sigma_S\right)
		+2\frac{3-2\theta}{1-6\theta}\norm{\AMat\xVec-\yVec}_1.
	\end{align}
\end{Corollary}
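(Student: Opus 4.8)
The plan is to obtain this corollary as a direct specialization of \thref{Theorem:NNLDMinimizer} to the case $q=1$ and $\normRHS{\cdot}=\norm{\cdot}_1$, so that the whole task reduces to verifying the two hypotheses a) and b) for a random walk matrix $\AMat$ of a $\left(2S,D,\theta\right)$-lossless expander, reading off the associated constants, and then simplifying the resulting error bound. Since $\kappa$ will turn out to be $1$, the admissibility condition $\kappa\rho<1$ will be exactly $\theta<\frac16$, matching the hypothesis.

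First I would establish hypothesis a). Here I would invoke the cited result \cite[Theorem~13.11]{Introduction_CS}, which asserts precisely that $\AMat$ obeys the $\ell_1$-RNSP of order $S$ with respect to $\norm{\cdot}_1$. The point to be careful about is extracting the explicit constants; reverse-engineering the stated bound shows that the relevant values are $\rho=\frac{2\theta}{1-4\theta}$ and $\tau=\frac{1}{1-4\theta}$. Because the $\ell_1$-RNSP is stated directly for the random walk matrix, the normalizing factor $D^{-1}$ is already absorbed into $\norm{\AMat\vVec}_1$, so $\tau$ carries no stray factor of $D$.

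Next I would verify hypothesis b) with the all-ones vector $\tVec=\IDVec$. By \thref{Definition:LosslessExpander} every left vertex has degree exactly $D$, and each nonzero entry of $\AMat$ equals $D^{-1}$, so every column of $\AMat$ sums to one; hence $\AMat^T\IDVec=\IDVec>0$, which gives the $M^+$ criterion, and its constant is $\kappa=\max_{n\in\SetOf{N}}\abs{\left(\AMat^T\IDVec\right)_n}\cdot\max_{n\in\SetOf{N}}\abs{\left(\AMat^T\IDVec\right)^{-1}_n}=1\cdot 1=1$, the optimal value. With $\kappa=1$ the condition $\kappa\rho<1$ reads $\frac{2\theta}{1-4\theta}<1$, i.e. $\theta<\frac16$, as required.

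Finally I would substitute into the improved $q=1$ bound of \thref{Theorem:NNLDMinimizer}. The dual norm of $\norm{\cdot}_1$ is $\norm{\cdot}_\infty$, so $\normDual{\IDVec}=\norm{\IDVec}_\infty=1$, and $\max_{n\in\SetOf{N}}\abs{\left(\AMat^T\IDVec\right)^{-1}_n}=1$ from the previous step. Plugging $\kappa=1$, $\rho=\frac{2\theta}{1-4\theta}$, $\tau=\frac{1}{1-4\theta}$ and these dual quantities into the bound, the coefficient of $d_1\left(\xVec,\Sigma_S\right)$ collapses via $\frac{1+\kappa\rho}{1-\kappa\rho}=\frac{1-2\theta}{1-6\theta}$ to $2\frac{1-2\theta}{1-6\theta}$, and the coefficient of $\norm{\AMat\xVec-\yVec}_1$ collapses via $\frac{2}{1-\kappa\rho}\tau=\frac{2}{1-6\theta}$ to $2\left(\frac{1-2\theta}{1-6\theta}+\frac{2}{1-6\theta}\right)=2\frac{3-2\theta}{1-6\theta}$, yielding \refP{Equation:EQ1:Corollary:lossless_expander}. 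There is no genuine obstacle: the only substantive input is the expander $\ell_1$-RNSP, which is cited, and the rest is verification and bookkeeping. The single place to stay careful is the algebraic simplification, making sure the factors $1-4\theta$ cancel correctly so that the clean $1-6\theta$ appears in both denominators.
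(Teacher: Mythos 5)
Your proposal is correct and follows essentially the same route as the paper: cite \cite[Theorem~13.11]{Introduction_CS} for the $\ell_1$-RNSP with $\rho=\frac{2\theta}{1-4\theta}$ and $\tau=\frac{1}{1-4\theta}$, take $\tVec=\IDVec$ so that the unit column sums give the $M^+$ criterion with $\kappa=1$, and substitute into the improved $q=1$ bound of \thref{Theorem:NNLDMinimizer}. The algebraic simplifications $\frac{1+\rho}{1-\rho}=\frac{1-2\theta}{1-6\theta}$ and $\frac{1-2\theta}{1-6\theta}+\frac{2}{1-6\theta}=\frac{3-2\theta}{1-6\theta}$ are exactly those carried out in the paper.
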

\begin{proof}
	By \cite[Theorem~13.11]{Introduction_CS} $\AMat$ has $\ell_1$-RNSP with respect to $\norm{\cdot}_1$
	with constants $\rho=\frac{2\theta}{1-4\theta}$ and $\tau=\frac{1}{1-4\theta}$.
	The dual norm of the norm $\norm{\cdot}_1$ is $\norm{\cdot}_\infty$.
	If we set $\tVec:=\left(1,\dots,1\right)^T\in\mathbb{R}^M$, we get
	\begin{align}\notag
		\left(\AMat^T\tVec\right)_n
		=\sum_{m\in\SetOf{M}}A_{m,n}
		=D D^{-1}
		=1 \TextForAll n\in\SetOf{N}.
	\end{align}
	Hence, $\AMat$ has the $M^+$ criterion with vector $\tVec$ and constant $\kappa=1$
	and the condition $\kappa\rho<1$ is immediately fulfilled.
	We obtain $\normDual{\tVec}=\norm{\tVec}_\infty=1$ and $\max_{n\in\SetOf{N}}\abs{\left(\AMat^T\tVec\right)^{-1}_n}=1$.
	Applying \thref{Theorem:NNLDMinimizer} with improved bound for $q=1$ and these values yields
	\begin{align}\notag
		\norm{\xVec-\xVec^\#}_1
		\leq&2\frac{1+\rho}{1-\rho}
			d_1\left(\xVec,\Sigma_S\right)
		+2\left(\frac{1+\rho}{1-\rho}+\frac{2}{1-\rho}\tau\right)\norm{\AMat\xVec-\yVec}_1.
	\end{align}
	If we additionally substitute the values for $\rho$ and $\tau$ we get
	\begin{align}\notag
		\norm{\xVec-\xVec^\#}_1
		\leq&2\frac{1-2\theta}{1-6\theta}
			d_1\left(\xVec,\Sigma_S\right)
		+2\left(\frac{1-2\theta}{1-6\theta}+2\frac{1}{1-6\theta}\right)\norm{\AMat\xVec-\yVec}_1
		\\\notag
		\leq&2\frac{1-2\theta}{1-6\theta}
			d_1\left(\xVec,\Sigma_S\right)
		+2\frac{3-2\theta}{1-6\theta}\norm{\AMat\xVec-\yVec}_1.
	\end{align}
	This finishes the proof.
\end{proof}
Note that \cite[Theorem~13.11]{Introduction_CS} is an adaption of
\cite[Lemma~11]{l1NSP_berinde} to account for robustness
and skips proving the $\ell_1$ restricted isometry property.
If $M\geq \frac{2}{\theta}\Exp{\frac{2}{\theta}}S\Ln{\frac{\ExpE N}{S}}$ and
$D=\left\lceil\frac{2}{\theta}\Ln{\frac{\ExpE N}{S}}\right\rceil$, a uniformly at random drawn $D$-LRBG
is a random walk matrix of a $\left(2S,D,\theta\right)$-lossless expander
with a high probability \cite[Theorem~13.7]{Introduction_CS}.
Thus, recovery with the NNLAD is possible in the optimal regime $M\in\OrderOf{S\Log{\frac{N}{S}}}$.
\subsection*{On the Robustness Bound for Lossless Expanders}
If $\AMat$ is a random walk matrix of a $\left(2S,D,\theta\right)$-lossless expander with $\theta\in\left[0,\frac{1}{6}\right)$,
then we can also draw a recovery guarantee for the NNLS.
By \cite[Theorem~13.11]{Introduction_CS} $\AMat$ has $\ell_1$-RNSP with respect
to $\norm{\cdot}_1$	with constants $\rho=\frac{2\theta}{1-4\theta}$ and $\tau=\frac{1}{1-4\theta}$
and hence also $\ell_1$-RNSP with respect to $\norm{\cdot}_2$ with constants $\rho'=\rho$ and $\tau'=\tau M^\frac{1}{2}$.
Similar to the proof of \thref{Corollary:lossless_expander} we can use \thref{Theorem:NNLDMinimizer} to deduce that
any minimizer $\xVec^\#$ of
\begin{align}\tag{NNLS}
	\argmin{\zVec\geq 0}\norm{\AMat\zVec-\yVec}_2,
\end{align}
obeys the bound
\begin{align}\label{Equation:EQ2:Corollary:lossless_expander}
	\norm{\xVec-\xVec^\#}_1
	\leq&2\frac{1-2\theta}{1-6\theta}
		d_1\left(\xVec,\Sigma_S\right)
	+2\frac{3-2\theta}{1-6\theta}M^\frac{1}{2}\norm{\AMat\xVec-\yVec}_2.
\end{align}
If the measurement error $\eVec=\yVec-\AMat\xVec$ is a constant vector,
i.e. $\eVec=\alpha\IDVec$, then $\norm{\eVec}_1=M^\frac{1}{2}\norm{\eVec}_2$. In this case
the error bound of the NNLS is just as good as the error bound of the NNLAD.
However, if $\eVec$ is a standard unit vector, then $\norm{\eVec}_1=\norm{\eVec}_2$.
In this case the error bound of the NNLS is significantly worse than the error bound of the NNLAD.
Thus, the NNLAD performs better under peaky noise, while the NNLS and NNLAD are tied under noise with evenly
distributed mass. We will verify this numerically in  \refP{Subsection:Properties_of_NNLAD}.
One can draw a complementary result for matrices with biased sub-Gaussian entries, which obey
the $\ell_2$-RNSP with respect to $\norm{\cdot}_2$ and the $M^+$ criterion in the optimal regime \cite{NNLS_first}.
\refP{Table:main_results:table_of_advantages} states the methods, which have an advantage over the other in each scenario.
\begin{table}[ht]
	\centering
	\begin{tabular}{c|c|c|c}
		\multicolumn{1}{c}{}
		&
		& \multicolumn{2}{c}{Measurement Matrix}
		
		\\\cline{3-4}
		\multicolumn{1}{c}{}
		&
		& $D$-LRBG ($\ell_1$)
		& biased sub-Gaussian ($\ell_2$)
		\\\cline{1-4}
		
		& peaky $\norm{\eVec}_1\approx\norm{\eVec}_2$
		& NNLAD
		& -
		\\\cline{2-4}
		Noise
		& even mass $\norm{\eVec}_1\approx M^\frac{1}{2}\norm{\eVec}_2$
		& -
		& NNLS
		\\\cline{2-4}
		
		& unknown noise
		& NNLAD
		& NNLS
	\end{tabular}
	\caption{\label{Table:main_results:table_of_advantages}
		Table of advantages of NNLAD and NNLS over each other.}
\end{table}
\section{NNLAD using a Proximal Point Method}\label{Section:APP}
In this section we assume that $\normRHS{\cdot}=\norm{\cdot}_p$ with some
$p\in\left[1,\infty\right]$.
If $p\in\left\{1,\infty\right\}$, the NNLR can be recast as a linear program by introducing some slack variables.
For an arbitrary $p$ the NNLR is a convex optimization problem and the objective function has a
simple and globally bounded subdifferential.
Thus, the NNLR can directly be solved with a projective subgradient method using a problem independent step size.
Such subgradient methods achieve only a convergence rate of $\OrderOf{\Log{k}k^{-\frac{1}{2}}}$ towards the optimal objective value
\cite[Section~3.2.3]{nesterov_book},
where $k$ is the number of iterations performed.
In the case that the norm is the $\ell_2$-norm, we can transfer the problem into a differentiable version, i.e. the NNLS
\begin{align}\notag
	\argmin{\zVec\geq 0} \frac{1}{2}\norm{\AMat\zVec-\yVec}_2^2.
\end{align}
Since the gradient of such an objective is Lipschitz, this problem can be solved by a projected gradient method with constant step size,
which achieves a convergence rate of $\OrderOf{k^{-2}}$ towards the optimal objective value
\cite{NNLS_fista}\cite{NNLS_first_order}.
However this does not generalize to the $\ell_1$-norm.
The proximal point method proposed in \cite{APP}
can solve the case of the $\ell_1$-norm with a convergence rate $\OrderOf{k^{-1}}$
towards the optimal objective value.
This results in the following algorithm.
\begin{Algorithm}[NNLAD as First Order Method]{Algorithm:NNLAD_APP_no_average}
	\KwData{measurement $\yVec\in\mathbb{R}^M$, measurement matrix $\AMat\in\mathbb{R}^{M\times N}$,
		parameters $\sigma>0$, $\tau>0$, initializations $\xVec^0\in\mathbb{R}^N$, $\wVec^0\in\mathbb{R}^M$,
		tolerance parameters $\tol_1\geq 0$,$\tol_2\geq 0$}
	\KwResult{estimator $\xVec^\#\in\mathbb{R}^N$}
	initialize iterates\;
	$\xVec\leftarrow\xVec^0$;
	$\vVec\leftarrow\xVec^0$;
	$\wVec\leftarrow\wVec^0$\;
	initialize images\;
	$\tilde{\wVec}\leftarrow\AMat^T\wVec$;
	$\tilde{\xVec}\leftarrow\AMat\xVec$;
	$\tilde{\vVec}\leftarrow\AMat\vVec$\;
	\While{
		$\norm{\tilde{\xVec}-\yVec}_1+\scprod{\yVec}{\wVec}>\tol_1$
		\KwSty{or} $\min_{n\in\SetOf{N}}\tilde{\wVec}_n<-\tol_2$
	}{
		calculate iterates\;
		$\wVec\leftarrow\wVec+\sigma\left(\tilde{\vVec}-\yVec\right)$\;
		$\wVec\leftarrow\left(\min\left\{1,\abs{w_m}\right\}
			\sgn{w_m}\right)_{m\in\SetOf{M}}$\;
		$\tilde{\wVec}\leftarrow\AMat^T\wVec$\;
		$\vVec\leftarrow-\xVec$\;
		$\xVec\leftarrow\left(\max\left\{0,x_n-\tau\tilde{w}_n\right\}\right)_{n\in\SetOf{N}}$\;
		$\vVec\leftarrow\vVec+2\xVec$\;
		$\tilde{\vVec}\leftarrow\AMat\vVec$\;
		$\tilde{\xVec}\leftarrow\frac{1}{2}\left(\tilde{\vVec}+\tilde{\xVec}\right)$\;
	}
	\KwRet{$\xVec^\#\leftarrow\xVec$}
\end{Algorithm}
The following convergence guarantee can be deduced from \cite[Theorem~1]{APP}.
Let $\sigma\tau<\norm{\AMat}_{2\rightarrow 2}^{-2}$ and
let $\xVec^k$ and $\wVec^k$ be the values of $\xVec$ and $\wVec$
at the end of the $k$-th iteration of the while loop
of \thref{Algorithm:NNLAD_APP_no_average}.
Then, the following statements hold true:
	\begin{itemize}
		\item[(1)]
			The iterates converge:
			The sequence $\left(\xVec^k\right)_{k\in\mathbb{N}}$
			converges to a minimizer of $\argmin{\zVec\geq 0} \norm{\AMat\zVec-\yVec}_1$.
		\item[(2)]
			The iterates are feasible:
			We have $\xVec^k\geq 0$ and $\norm{\wVec^k}_\infty\leq 1$
			for all $k\geq 1$.
		\item[(3)]
			There is a stopping criteria for the iterates:\\
			$\lim_{k\rightarrow\infty}\norm{\AMat\xVec^k -\yVec}_1+\scprod{\yVec}{\wVec^k}=0$
			and $\lim_{k\rightarrow\infty}\AMat^T\wVec^k\geq 0$.
			In particular,
			if $\norm{\AMat\xVec^k -\yVec}_1+\scprod{\yVec}{\wVec^k}\leq 0$ and $\AMat^T\wVec^k\geq 0$,
			then $\xVec^k$ is a minimizer of $\argmin{\zVec\geq 0} \norm{\AMat\zVec-\yVec}_1$.
		\item[(5)]
			The averages obey the convergence rate towards the optimal objective value:\\
			$\norm{\AMat\frac{1}{k}\sum_{k'=1}^k\xVec^{k'}-\yVec}_1-\norm{\AMat\xVec^\#-\yVec}_1
			\leq \frac{1}{k}\left(\frac{1}{2\tau}\norm{\xVec^\#-\xVec^0}_2^2
				+\frac{1}{2\sigma}\left(\norm{\wVec^0}_2^2+2\norm{\wVec^0}_1+M\right)\right)$,
			where $\xVec^\#$ is a minimizer of $\argmin{\zVec\geq 0} \norm{\AMat\zVec-\yVec}_1$.
	\end{itemize}
The formal version and proof is given in the appendix.
Note that this yields a convergence guarantee for both
the iterates and averages, but the convergence rate is only guaranteed for the averages.
\thref{Algorithm:NNLAD_APP_no_average} is
optimized in the sense that it uses the least possible number of matrix vector multiplications per iteration,
since these govern the computational complexity.
\begin{Remark}
	Let $\AMat$ be $D$-LRBG.
	Each iteration of \thref{Algorithm:NNLAD_APP_no_average} requires at most
	$4DN+8N+16M$ floating point operations and $5N+4M$ assignments.
\end{Remark}
\subsection*{Iterates or Averages}
The question arises whether or not it is better to estimate with averages or iterates. Numerical testing suggest that the
iterates reach tolerance thresholds significantly faster than the averages. We can only give a heuristically explanation for this phenomenon.
The stopping criteria of the iterates yields $\lim_{k\rightarrow\infty}\AMat^T\wVec^k\geq 0$.
In practice we observe that $\AMat^T\wVec^k\geq 0$ for all sufficiently large $k$.
However, $\AMat^T\wVec^{k+1}\geq 0$ yields $\xVec^{k+1}\leq \xVec^k$.
This monotonicity promotes the converges of the iterates and gives a clue why the iterates seem to converge better in practice.
See \refP{Figure:Numcerics:computational_complexity:LRBG:1} and
\refP{Figure:Numcerics:computational_complexity:LRBG:2}.
\subsection*{On the Convergence Rate}
As stated the NNLS achieves the convergence rate $\OrderOf{k^{-2}}$ \cite{NNLS_first_order} while the
NNLAD only achieves the convergence rate of $\OrderOf{k^{-1}}$ towards to optimal objective value.
However, this should not be considered as weaker, since the objective function of the NNLS
is the square of a norm. If $\xVec^k$ are the iterates of the NNLS implementation of \cite{NNLS_first_order},
algebraic manipulation yields
\begin{align}\notag
	\norm{\AMat\xVec^k-\yVec}_2-\norm{\AMat\xVec^\#-\yVec}_2
	\leq 2^\frac{1}{2}\left(\frac{1}{2}\norm{\AMat\xVec^k-\yVec}_2^2
		-\frac{1}{2}\norm{\AMat\xVec^\#-\yVec}_2^2\right)^\frac{1}{2}
	\leq 2^\frac{1}{2}\left(Ck^{-2}\right)^\frac{1}{2}
	\leq \left(2C\right)^\frac{1}{2}k^{-1}.
\end{align}
Thus, the $\ell_2$-norm of the residual of the NNLS iterates only decays in the same
order as the $\ell_1$-norm of the residual of the NNLAD averages.
\section{Numerical Experiments and Applications}\label{Section:Numerics}
In the first part of this section we will compare NNLAD with several state of the art
recovery methods in terms of achieved sparsity levels and 
decoding time.
For $p\in\left[1,\infty\right]$, we denote $\mathbb{S}_p^{N-1}:=\left\{\zVec\in\mathbb{R}^N:\norm{\zVec}_p=1\right\}$,
and $\mathbb{S}_0^{N-1}:=\left\{\zVec\in\mathbb{R}^N:\norm{\zVec}_0=1=\norm{\zVec}_2\right\}=\Sigma_1\cap\mathbb{S}_2^{N-1}$.
\subsection{Properties of the NNLAD Optimizer}\label{Subsection:Properties_of_NNLAD}
We recall that the goal is to recover $\xVec$ from the noisy linear measurements
$\yVec=\AMat\xVec+\eVec$.
To investigate properties of the minimizers of NNLAD we compare it to the minimizers
of the well studied problems basis pursuit (BP), optimally tuned basis pursuit denoising (BPDN), optimally tuned
$\ell_1$-constrained least residual (CLR) and the NNLS, which are given by
\begin{align}\tag{BPDN}
	\label{Problem:BPDN}
	&\argmin{\zVec:\norm{\AMat\zVec-\yVec}_1\leq \epsilon}\norm{\zVec}_1 & \text{ with } \epsilon=\norm{\eVec}_1,
	\\\tag{CLR}
	\label{Problem:CLR}
	&\argmin{\zVec:\norm{\zVec}_1\leq \tau}\norm{\AMat\zVec-\yVec}_1 & \text{ with } \tau=\norm{\xVec}_1,
	\\\tag{BP}
	\label{Problem:BP}
	&\argmin{\zVec:\AMat\zVec=\yVec}\norm{\zVec}_1, &
	\\\tag{NNLS}
	\label{Problem:NNLS}
	&\argmin{\zVec\geq 0}\norm{\AMat\zVec-\yVec}_2. &
\end{align}
Further, we compare the NNLAD to any cluster point of the sequence of the expander iterative hard
thresholding (EIHT) given by
\begin{align}\tag{EIHT}
	\xVec^0:=0\TextAnd \xVec^{k+1}:=\ProjToSet{\Sigma_{S'}}{\xVec^k+\Median{\yVec-\AMat\xVec^k}}
	\TextForAll k\in\mathbb{N}_0
	\text{ and with } S'=\norm{\xVec}_0	,
\end{align}
where $\Median{\zVec}_n$ is the median of $\left(\zVec_m\right)_{m\in\RightVertices{\left\{n\right\}}}$
and $\ProjToSet{\Sigma_S}{\vVec}$ is a hard thresholding operator, i.e.
some minimizer of $\argmin{\zVec\in\Sigma_S}\frac{1}{2}\norm{\zVec-\vVec}_2^2$.
There is a whole class of thresholding based decoders for lossless expanders,
which all need either the sparsity of the signal or the order of the expansion property
as tuning parameter. We choose the EIHT as a represent of this class, since it has
robust recovery guarantees \cite[Theorem~13.5]{Introduction_CS}.
By convex decoders we refer to BPDN, BP, CLR, NNLAD, and NNLS.
We choose the optimal tuning $\epsilon=\norm{\eVec}_1$ for the BPDN and $\tau=\norm{\xVec}_1$ for the CLR.
The optimally tuned BPDN and CLR are representing a best case benchmark.
In \cite[Figure~1.1]{BPDN_misstuning} it was noticed that tuning the BPDN with $\epsilon>\norm{\eVec}_p$ often leads to
worse estimation errors than tuning with $\epsilon<\norm{\eVec}_p$ for $p=2$.
Thus, BP is a version of BPDN with no prior knowledge about the noise
and represents a worst case benchmark.
At the moment we do not care about the method to calculate the minimizers
of the optimization problems, thus
we solve all optimization problems with the CVX package of Matlab \cite{CVX1}, \cite{CVX2}.
For a given $SNR,r,N,M,D,S$ we will do the following experiment multiple times:
\begin{Experiment}\label{Experiment:test:1}
	\hspace{1pt}
	\begin{itemize}
		\item[1.] Generate a measurement matrix $\AMat\in\left\{0,D^{-1}\right\}^{M\times N}$ as a uniformly at random drawn $D$-LRBG.
		\item[2.] Generate a signal $\xVec$ uniformly at random from $\Sigma_S\cap\mathbb{R}_+^N\cap\mathbb{S}_1^{N-1}$.
		\item[3.] Generate a noise $\eVec$ uniformly at random from $\frac{\norm{\AMat\xVec}_1}{SNR}\mathbb{S}_r^{M-1}$.
		\item[4.] Define the observation $\yVec:=\AMat\xVec+\eVec$.
		\item[5.] For each decoder $\Q{}$ calculate an estimator $\xVec^\#:=\Q{\yVec}$ and collect the relative estimation error
			$\norm{\xVec-\xVec^\#}_1=\frac{\norm{\xVec-\xVec^\#}_1}{\norm{\xVec}_1}$.
	\end{itemize}
\end{Experiment}
In this experiment we have $SNR=\frac{\norm{\AMat\xVec}_1}{\norm{\eVec}_1}$ and since
$\AMat$ is a $D$-LRBG and $\xVec\geq 0$, we further have $\norm{\AMat\xVec}_1=\norm{\xVec}_1=1$. 
Note that for $r=0$ and $r=1$ we obtain two different noise distributions.
If $\eVec$ is uniformly distributed on $\mathbb{S}_1^{M-1}$, then
the absolute value of each component $\abs{e_m}$ is a random variable with density
$h\mapsto\left(M-1\right)\left(1-h\right)^{M-2}$ for $h\in\left[0,1\right]$.
Thus, $\Expect{\norm{\eVec}_2^2}=M\frac{2}{M\left(M+1\right)}=\frac{2}{M+1}$.
By testing one can observe a concentration around this expected value,
in particular that $M^\frac{1}{2}\norm{\eVec}_2\approx \sqrt{2}\norm{\eVec}_1$
with a high probability.
If $\eVec$ is uniformly distributed on $\mathbb{S}_0^{M-1}$, then
$\norm{\eVec}_2=\norm{\eVec}_1$.
Thus, these two noise distributions each represent randomly drawn noise vectors
obeying one norm equivalence asymptotically tightly up to a constant.
From \refP{Equation:EQ1:Corollary:lossless_expander} and \refP{Equation:EQ2:Corollary:lossless_expander}
we expect that the NNLS has roughly the same estimation errors as the NNLAD for $r=1$, i.e. the evenly distributed noise,
and significantly worse estimation errors for $r=0$, i.e. the peaky noise.
\subsubsection*{Quality of the Estimation Error for Varying Sparsity}
We fix the constants $r=1$, $N=1024$, $M=256$, $D=10$, $SNR=1000$ and vary the sparsity level $S\in\SetOf{64}$.
For each $S$ we repeat \thref{Experiment:test:1} $100$ times.
We plot the mean of the relative $\ell_1$-estimation error and the mean of the
logarithmic relative $\ell_1$-estimation error, i.e.
\begin{align}\notag
	\MeanN=\Mean{\frac{\norm{\xVec-\xVec^\#}_1}{\norm{\xVec}_1}}
	\TextAnd\MeanLN{1}=\Mean{10\LogBase{10}{\frac{\norm{\xVec-\xVec^\#}_1}{\norm{\xVec}_1}}}
\end{align}
over the sparsity. The result can be found in \refP{Figure:Numcerics:l1_sphere_noise1} and
\refP{Figure:Numcerics:l1_sphere_noise2}.
\begin{figure}[ht]
    \begin{subfigure}[c]{0.5\textwidth}
		\includegraphics[scale=0.5]{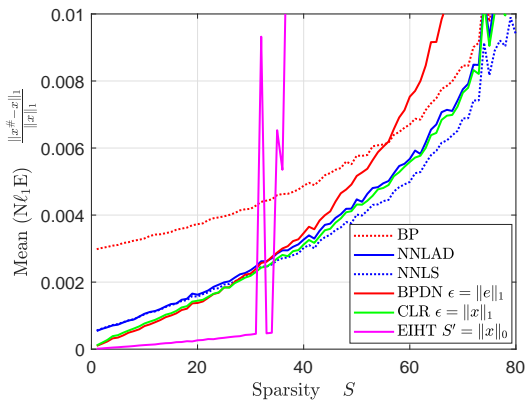}
		\subcaption{\label{Figure:Numcerics:l1_sphere_noise1}
			NNLAD has almost the same performance as CLR/BPDN. EIHT fails for moderate $S$.}
    \end{subfigure}
    \begin{subfigure}[c]{0.5\textwidth}
		\includegraphics[scale=0.5]{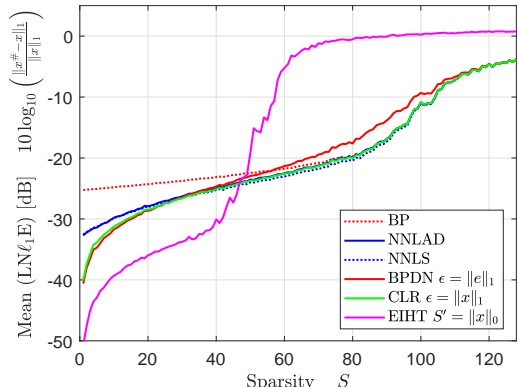}
		\subcaption{\label{Figure:Numcerics:l1_sphere_noise2}
			NNLAD and NNLS perform roughly the same.}
    \end{subfigure}
    \caption{Performance of NNLAD for noise with even mass noise and varying sparsity of the signal.}
\end{figure}
\\
For $S\geq 30$ the estimation error of the EIHT randomly peaks high.
We deduce that the EIHT fails to recover the signal reliably for $S\geq 30$, while the NNLAD and other convex decoders
succeed. This is not surprising, since by \cite[Theorem~13.15]{Introduction_CS} the EIHT obeys a robust recovery guartanee for
$S$-sparse signals, whenever $\AMat$ is the random wak matrix of a $\left(3S,D,\theta'\right)$-lossless expander
with $\theta'<\frac{1}{12}$. This is significantly stronger than the $\left(2S,D\theta\right)$-lossless expander property
with $\theta<\frac{1}{6}$ required for a null space property.
It might also be that the null space property is more likely than the lossless expansion property
similar to the gap between $\ell_2$-restricted isometry property
and null space property \cite{Gap}.
However, if the EIHT recovers a signal, it recovers it significantly better than any convex method.
This might be the case, since the originally generated signal is indeed from $\Sigma_S$, which is being enforced by
the hard thresholding of the EIHT, but not by the convex decoders.
This suggests that it might be useful to consider using thresholding on the output of any convex decoder to increase the accuracy if the orignal signal is indeed sparse and not only compressible.
For the remainder of this subsection we focus on convex decoders.\\
Contrary to our expectation the BPDN achieves worse estimation errors than all other convex decoders for $S\geq 60$, even worse
than the BP. The authors have no explanation for this phenomenon.
Apart from that we observe that the CLR and BP indeed perform as respectively best and worst case benchmark.
However, the difference between BP and CLR becomes rather small for high $S$.
We deduce that tuning becomes less important near the optimal
sampling rate.\\
The NNLAD, NNLS and CLR perform roughly the same.
This is quite strong, since BPDN and CLR are optimally tuned using
unknown prior information.
As expected the NNLS performs roughly the same as the NNLAD, see \refP{Table:main_results:table_of_advantages}.
However, this is the result of the noise distribution for $r=1$.
We repeat \thref{Experiment:test:1} with the same constants, but $r=0$,
i.e. $\eVec$ is a unit vector scaled by $\pm\frac{\norm{\AMat\xVec}_1}{SNR}$.
We plot the mean of the relative $\ell_1$-estimation error and the mean of the
logarithmic relative $\ell_1$-estimation error over the sparsity.
The result can be found in \refP{Figure:Numcerics:l0_sphere_noise1} and \refP{Figure:Numcerics:l0_sphere_noise2}.
\begin{figure}[ht]
    \begin{subfigure}[c]{0.5\textwidth}
		\includegraphics[scale=0.5]{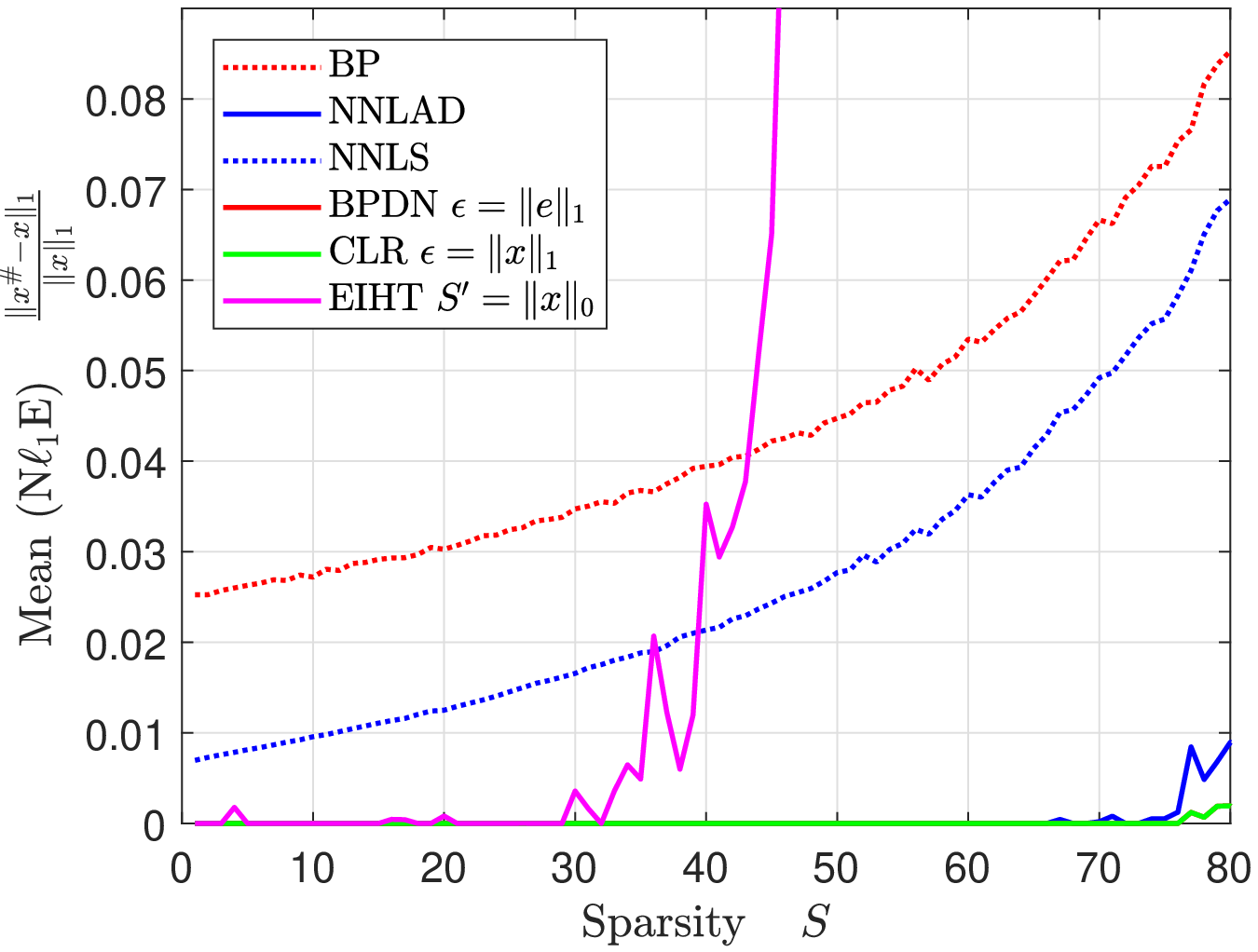}
		\caption{\label{Figure:Numcerics:l0_sphere_noise1}
			The NNLS does not fail, but performs bad.}
    \end{subfigure}
    \begin{subfigure}[c]{0.5\textwidth}
		\includegraphics[scale=0.5]{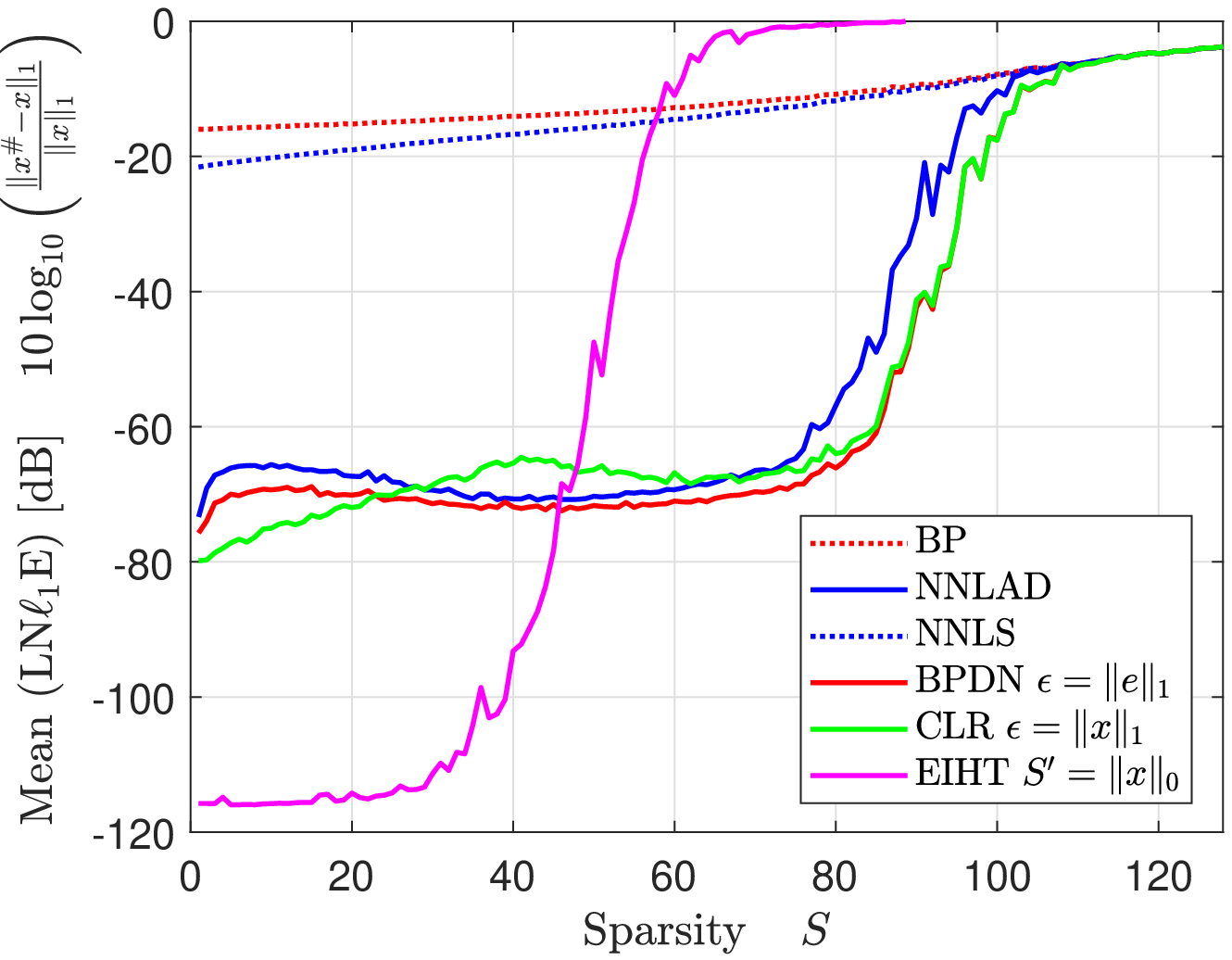}
		\caption{\label{Figure:Numcerics:l0_sphere_noise2}
        	The NNLS and NNLAD differ strongly.}
    \end{subfigure}
    \caption{Performance of NNLAD for noise with peaky mass and varying sparsity of the signal.}
\end{figure}
\\
We want to note that similarly to \refP{Figure:Numcerics:l1_sphere_noise1} the EIHT works only unreliably for $S\geq 30$.
Even though the mean of the logarithmic relative $\ell_1$-estimation error of NNLS is worse
than the one of EIHT for $30\leq S\leq 60$, the NNLS does not fail but only approximates with a weak error bound. 
As the theory suggests, the NNLS performs significantly worse than the NNLAD, see \refP{Table:main_results:table_of_advantages}.
It is worth to mention, that the estimaton errors of NNLS seem to be bounded by the estimation errors of BP.
This suggests that $\AMat$ obeys a $\ell_1$ quotient property, that bounds the estimation error of any instance optimal decoder,
see \cite[Lemma~11.15]{Introduction_CS}.
\subsubsection*{Noise-Blindness}
\thref{Theorem:NNLDMinimizer} states that the NNLAD has an error bound similarly to the optimally tuned CLR and BPDN.
Further, by \refP{Equation:EQ1:Corollary:lossless_expander} the ratio
\begin{align}\notag
	\frac{\norm{\xVec-\xVec^\#}_1}{\norm{\eVec}_1\norm{\xVec}_1}
	=\frac{\norm{\xVec-\xVec^\#}_1}{\norm{\eVec}_1}
\end{align}
should be bounded by some constant. To verify this,
we fix the constants $r=1$, $N=1024$, $M=256$, $D=10$, $S=32$ and vary the signal to noise ratio $SNR\in 10\SetOf{100}$.
For each $SNR$ we repeat \thref{Experiment:test:1} $100$ times.
We plot the mean of the logarithmic relative $\ell_1$-estimation error and the mean of the
ratio of relative $\ell_1$-estimation error and $\ell_1$-noise power, i.e.
\begin{align}\notag
	\MeanLN{1}=\Mean{10\LogBase{10}{\frac{\norm{\xVec-\xVec^\#}_1}{\norm{\xVec}_1}}}
	\TextAnd\MeanNperNP=\Mean{\frac{\norm{\xVec-\xVec^\#}_1}{\norm{\eVec}_1\norm{\xVec}_1}}
\end{align}
over the sparsity. The result can be found in \refP{Figure:Numcerics:Noise-Blindness_l1_sphere_noise1} and
\refP{Figure:Numcerics:Noise-Blindness_l1_sphere_noise2}.
\begin{figure}[ht]
    \begin{subfigure}[c]{0.5\textwidth}
        \includegraphics[scale=0.5]{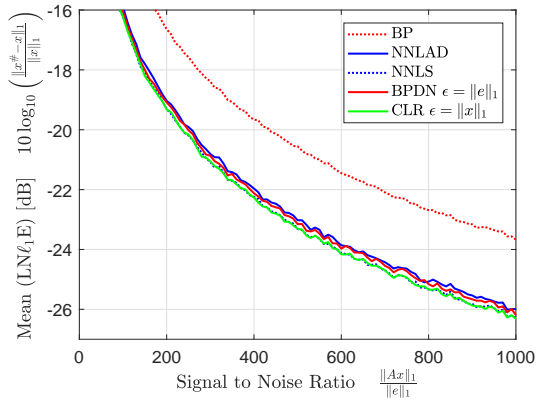}
		\caption{\label{Figure:Numcerics:Noise-Blindness_l1_sphere_noise1}
			The NNLAD and NNLS recover reliably for all signal to noise ratios.}
    \end{subfigure}
    \begin{subfigure}[c]{0.5\textwidth}
        \includegraphics[scale=0.5]{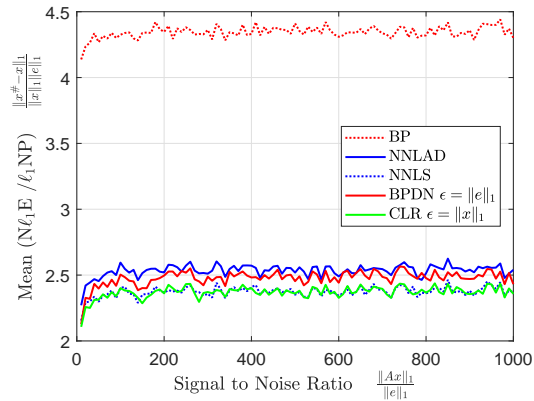}
		\caption{\label{Figure:Numcerics:Noise-Blindness_l1_sphere_noise2}
			The estimation error scales linearly with the noise power.}
    \end{subfigure}
    \caption{Performance of NNLAD for noise with even mass and varying noise power.}
\end{figure}
\newpage
The logarithmic relative $\ell_1$-estimation errors of the different decoders stay in a constant relation to each other
over the whole range of $SNR$. This relation is roughly the relation we can find in \refP{Figure:Numcerics:l1_sphere_noise2} for
$S=32$.
As expected the the ratio of relative $\ell_1$-estimation error and $\ell_1$-noise power stays constant independent
on the $SNR$ for all decoders.
We deduce that the NNLAD is noise-blind.
We repeat the experiment with $r=0$ and obtain \refP{Figure:Numcerics:Noise-Blindness_l0_sphere_noise1} and
\refP{Figure:Numcerics:Noise-Blindness_l0_sphere_noise2}.
\begin{figure}[ht]
    \begin{subfigure}[c]{0.5\textwidth}
        \includegraphics[scale=0.5]{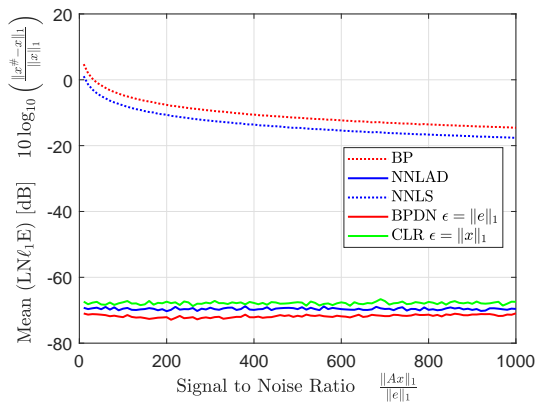}
		\caption{\label{Figure:Numcerics:Noise-Blindness_l0_sphere_noise1}
			The NNLAD outperforms the NNLS.}
    \end{subfigure}
    \begin{subfigure}[c]{0.5\textwidth}
        \includegraphics[scale=0.5]{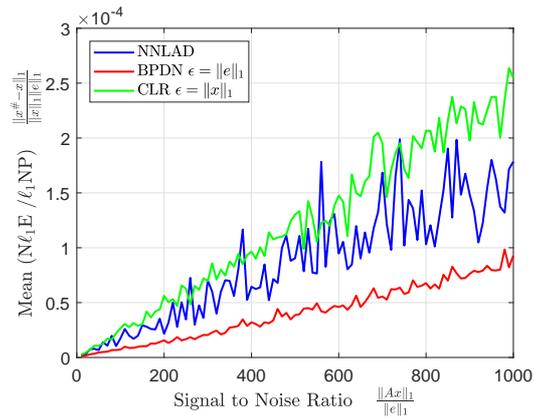}
		\caption{\label{Figure:Numcerics:Noise-Blindness_l0_sphere_noise2}
			The estimation error does not scale linearly with the noise power.}
    \end{subfigure}
    \caption{Performance of NNLAD for noise with peaky mass and varying noise power.}
\end{figure}
\\
Against our expectation, $\frac{\norm{\xVec-\xVec^\#}_1}{\norm{\xVec}_1}$ and not
$\frac{\norm{\xVec-\xVec^\#}_1}{{\norm{\xVec}_1}\norm{\eVec}_1}$
seems to be constant. Since $\frac{\norm{\xVec-\xVec^\#}_1}{\norm{\xVec}_1}\approx 1.0\cdot 10^{-7}$ is fairly small,
we suspect that this is the result of CVX reaching a tolerance parameter\footnote{The tolerance parameters of
CVX are the second and fourth root of the machine precision by default \cite{CVX1}, \cite{CVX2}.}
$\sqrt{eps}\approx 1.5\cdot 10^{-8}$
and terminating, while the actual optimizer might in fact be the original signal.
It is definitely noteworthy that even with the incredibly small signal to noise ration of $10$ the signal can be recovered 
by the NNLAD with an estimation error of $1.0\cdot 10^{-7}$ for this noise distribution.
\subsection{Decoding Complexity}
\subsubsection*{NNLAD vs iterative methods}
To investigate the convergence rates of the NNLAD as proposed in
\refP{Section:APP},
we compare it to different types of decoders when $\eVec=0$.
There are some sublinear time recovery methods for
lossless expander matrices including \cite{non-neg_BP}\cite[Section~13.4]{Introduction_CS}.
These are, as the name suggests,
significantly faster than the NNLAD.
These, as several other greedy methods \cite{combinatorial_Jafarpour}\cite{combinatorial_Xu}\cite{non-neg_BP}\cite{combinatoril_l1NSP_khaje}\cite[Section~13.3]{Introduction_CS},
rely on a strong lossless expansion property.
As a representative of all greedy and sublinear time methods
we will consider the EIHT,
which has a linear convergence rate $\OrderOf{c^{-k}}$ towards
the signal and robust recovery guarantees \cite[Theorem~13.15]{Introduction_CS}.
The EIHT also represents a best case benchmark.
As a direct competitor we consider the NNLS implemented by the methods of \cite{NNLS_first_order}
\footnote{This was the fastest method found by the authors.
	Other possibilities would be \cite[Algorithm~2]{APP}, \cite{NNLS_fista}.}
,  which has a convergence rate of $\OrderOf{k^{-2}}$ towards the optimal objective value.
\cite{NNLS_first_order} can also be used to calculate the
least shrinkage and selection operator.
However, calculating the projection onto the $\ell_1$-ball in $\mathbb{R}^N$,
is computationally slightly more complex than the projection onto $\mathbb{R}_+^N$.
Thus the NNLS will also be a lower bound for the LASSO.
As a worst case benchmark we consider a simple projected subgradient implementation of NNLAD using the Polyak step size,
i.e.
\begin{align}\tag{NNLAD Subgrad}
	\xVec^{k+1}:=\ProjToSet{\mathbb{R}^N_+}{\xVec^{k}
		-\frac{\norm{\AMat\xVec^k-\yVec}_1}{\norm{\AMat^T\sgn{\AMat\xVec^k-\yVec}}_2^2}\AMat^T\sgn{\AMat\xVec^k-\yVec}},
\end{align}
which has a convergence rate of $\OrderOf{k^{-\frac{1}{2}}}$ towards the optimal objective value
\cite[Section~7.2.2~\&~Section~5.3.2]{polyak_book}
\cite[Section~6]{subgradient_methods}. We will always initialize all iterated methods by zero vectors.
The EIHT will always use the parameter $S'=\norm{\xVec}_0$, the NNLAD $\sigma=\tau=0.99\norm{\AMat}_{2\rightarrow2}^{-1}$
and the NNLS the parameters $s=0.99\norm{\AMat}_{2\rightarrow2}^{-2}$ and $\alpha=3.01$, see \cite{NNLS_first_order}.
Parameters that can be computed from $\AMat$, will be calculated before the timers start. This includes
the adjacency structure of $\AMat$ for the EIHT, $\sigma$, $\tau$ for NNLAD, $s$, $\alpha$ for NNLS,
since these are considered to be a part of the decoder.
We will do the following experiment multiple times:
\begin{Experiment}\label{Experiment:test:2}
	\hspace{1pt}
	\begin{itemize}
		\item[1.] If $r=1$, generate a measurement matrix $\AMat\in\left\{0,D^{-1}\right\}^{M\times N}$
			as a uniformly at random drawn $D$-LRBG.\\
			If $r=2$, draw each component $A_{m,n}$ of the measurement matrix independent and uniformly at random
			from $\left\{0,1\right\}$, i.e. as $0/1$ Bernoulli random variables.
		\item[2.] Generate a signal $\xVec$ uniformly at random from $\Sigma_S\cap\mathbb{R}_+^N\cap\mathbb{S}_r^{N-1}$.
		\item[3.] Define the observation $\yVec:=\AMat\xVec$.
		\item[4.] For each iterative method calculate the sequence of estimators $\xVec^k$ for all $k\leq 20000$
			and collect the relative estimation errors
			$\frac{\norm{\xVec^k-\xVec}_1}{\norm{\xVec}_1}$,
			the relative norms of the residuals
			$\frac{\norm{\AMat\xVec^k-\yVec}_1}{\norm{\yVec}_1}$
			and the time to calculate the first $k$ iterations.
	\end{itemize}
\end{Experiment}
For $r=2$ this represents a biased sub-gaussian random ensemble \cite{NNLS_first} with optimal recovery guarantees for
the NNLS. For $r=1$ this represents a $D$-LRBG random ensemble with optimal recovery guarantees for the NNLAD.
We fix the constants $r=1$, $N=1024$, $M=256$, $S=16$, $D=10$ and repeat \thref{Experiment:test:2} $100$ times.
We plot the mean of the logarithmic relative $\ell_1$-estimation error
and the mean of the relative $\ell_1$-norm of the residual, i.e.
\begin{align}\notag
	\MeanLN{1}=\Mean{10\LogBase{10}{\frac{\norm{\xVec^k-\xVec}_1}{\norm{\xVec}_1}}}
	\TextAnd\MeanLNR{1}=\Mean{10\LogBase{10}{\frac{\norm{\AMat\xVec^k-\yVec}_1}{\norm{\yVec}_1}}}
\end{align}
over the sparsity and the time.
The result can be found in
\refP{Figure:Numcerics:computational_complexity:LRBG:1} and 
\refP{Figure:Numcerics:computational_complexity:LRBG:2}.
\begin{figure}[ht]
	\begin{subfigure}[c]{0.5\textwidth}
		\includegraphics[scale=0.5]{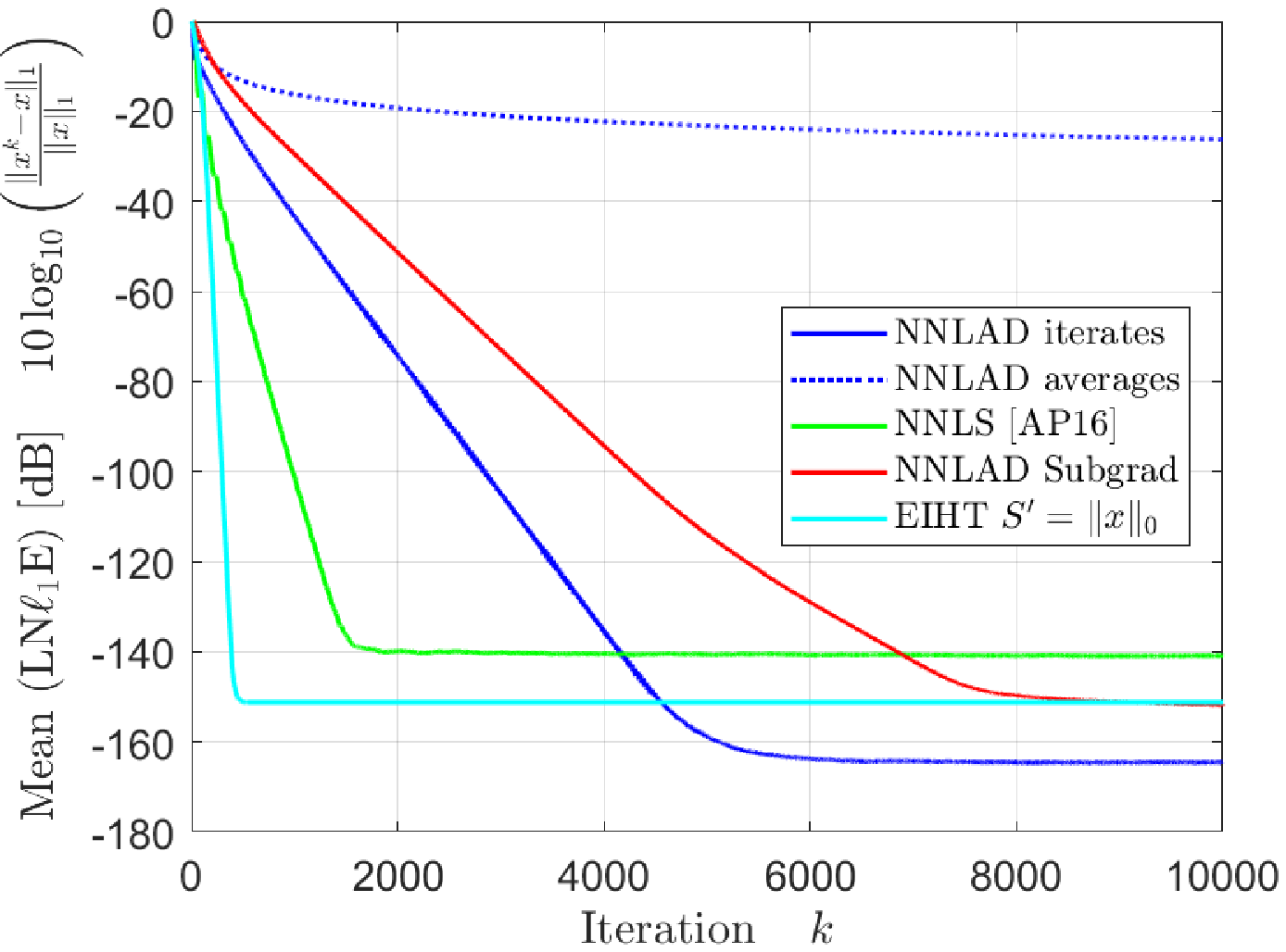}
	\end{subfigure}
	\begin{subfigure}[c]{0.5\textwidth}
		\includegraphics[scale=0.5]{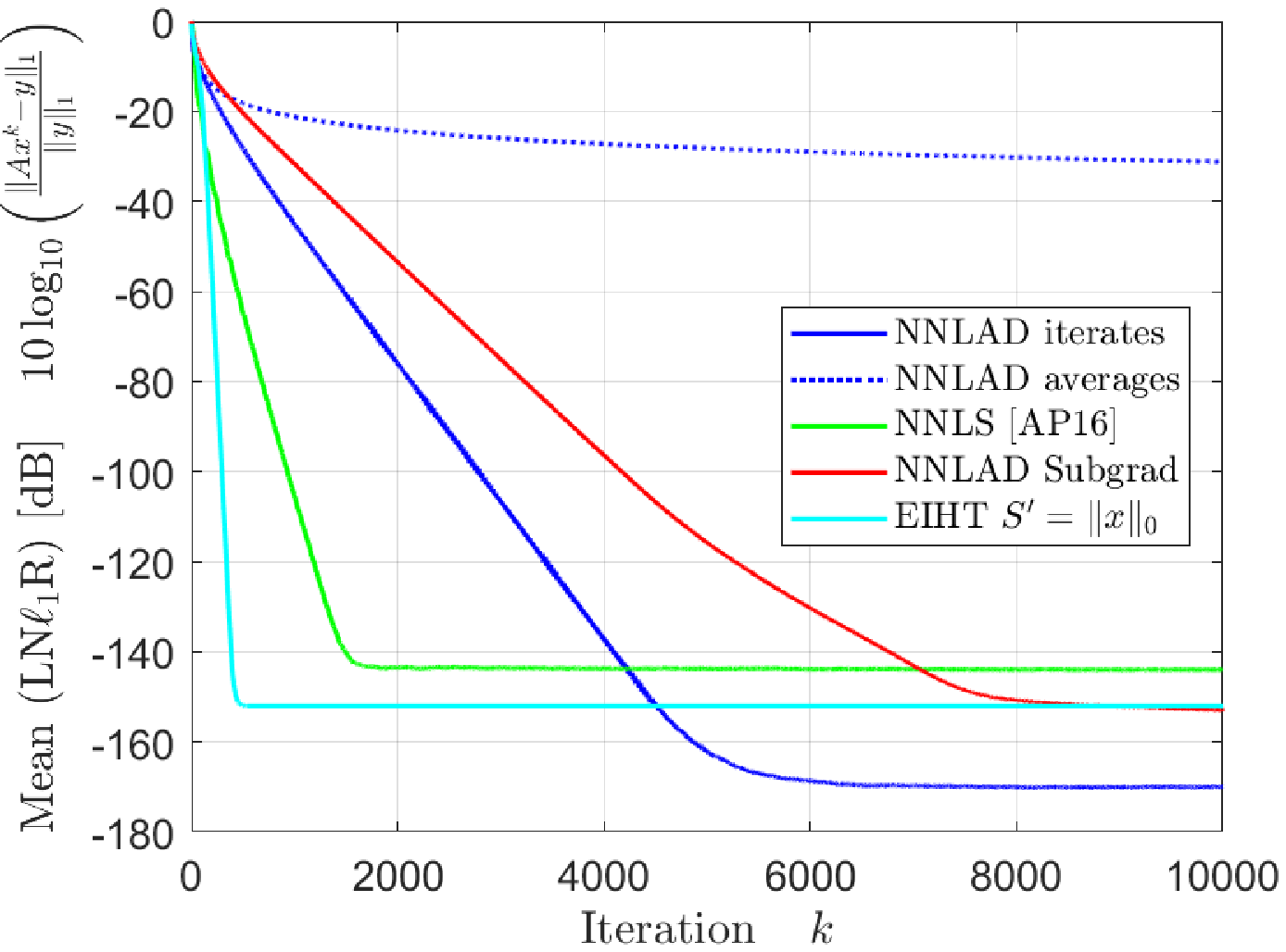}
	\end{subfigure}
	\caption{\label{Figure:Numcerics:computational_complexity:LRBG:1}
			Convergence rates of certain iterated methods with respect to the number of iterations.}
\end{figure}
\begin{figure}[ht]
	\begin{subfigure}[c]{0.5\textwidth}
		\includegraphics[scale=0.5]{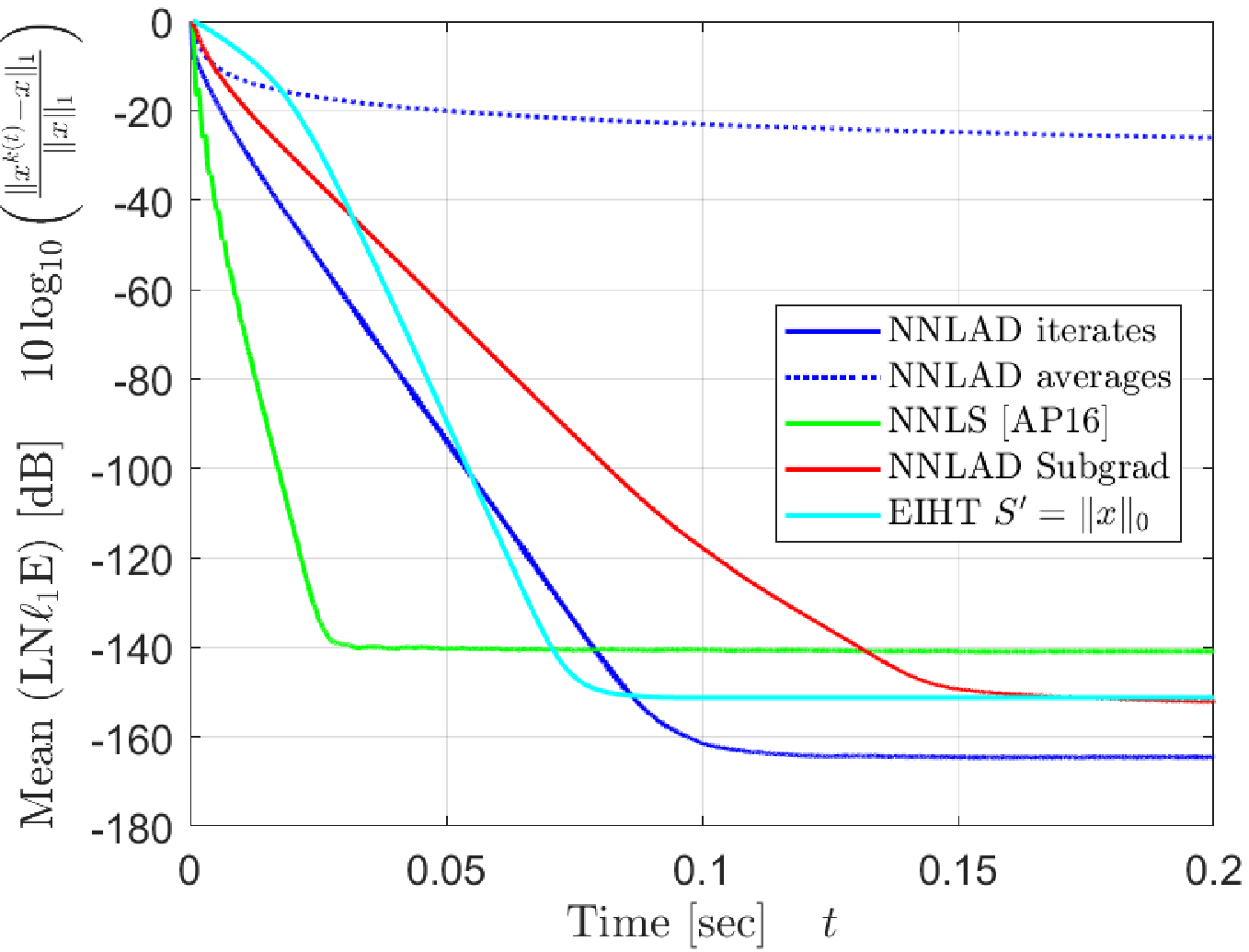}
	\end{subfigure}
	\begin{subfigure}[c]{0.5\textwidth}
		\includegraphics[scale=0.5]{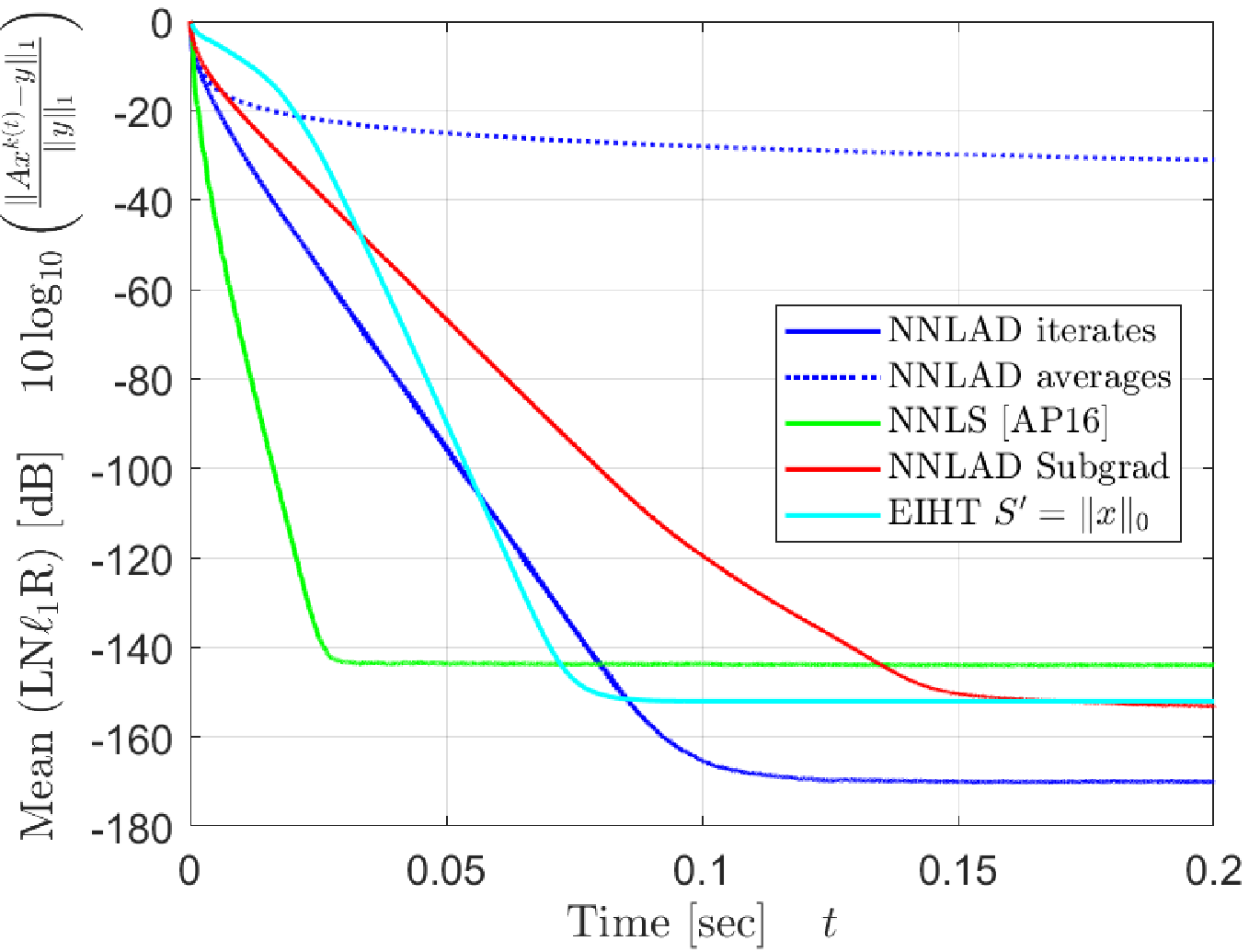}
	\end{subfigure}
	\caption{\label{Figure:Numcerics:computational_complexity:LRBG:2}
			Convergence rates of certain iterated methods with respect to the time.}
\end{figure}
\newpage
The averages of NNLAD converge significantly slower than the iterates, even though we lack a convergence rate for the iterates.
We deduce that one should always use the iterates of NNLAD to recover a signal.
Surprisingly, the averages converge even slower than the subgradient method. However, this is not because 
the averages converge slow, but rather because the subgradient method and all others converges faster than expected.
In particular, the NNLAD iterates, EIHT and the NNLS all converge linearly towards the optimal objective value
and towards the signal. Even the subgradient method converges almost linearly.
We deduce that the NNLS is the fastest of these methods if $\AMat$ is a $D$-LRBG.\\
Apart from a constant the NNLAD iterates, EIHT and NNLS converge in the same order.
However, this behavior does not hold if we consider a different distribution for $\AMat$ as one can verify
by setting each component $A_{m,n}$ as independent $0/1$ Bernoulli random variables.
While EIHT has better iterations compared to the NNLS, it still takes more time to achieve the same
estimation errors and residuals.
We plot the mean of the time required to calculate the first $k$ iterations in
\refP{Figure:Numcerics:computational_complexity:LRBG_time_per_iteration}.
\begin{figure}[ht]
		\centering
        \includegraphics[scale=0.5]{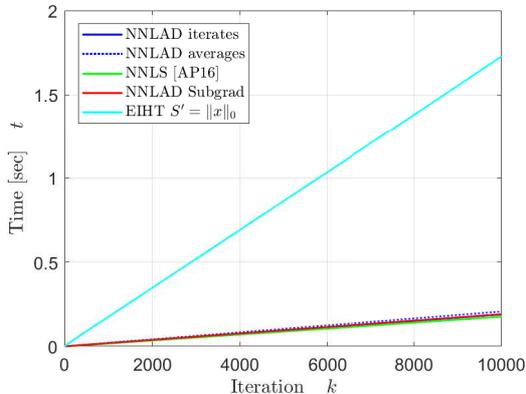}
		\caption{\label{Figure:Numcerics:computational_complexity:LRBG_time_per_iteration}
			Time required to perform iterations of certain iterated methods.}
\end{figure}
\\
The EIHT requires roughly $6$ times as long as any other method to calculate each iteration.
All methods but the EIHT can be implemented with only two matrix vector multiplications, namely once by $\AMat$
and once by $\AMat^T$. Both of these requires roughly $2DN$ floating point operations.
Hence, each iteration requires $\OrderOf{4DN}$ floating point operations.
The EIHT only calculates one matrix vector multiplication, but also the median.
This calculation is significantly slower than a matrix vector multiplication.
For every $n\in\SetOf{N}$ we need to order a vector with $D$ elements, which can be performed in
$\OrderOf{D\Log{D}}$. Hence, each iteration of EIHT requires $\OrderOf{DN\Log{D}}$ floating point operations,
which explains why the EIHT requires significantly more time for each iteration.\\
As we have seen the NNLS is able to recover signals faster than any other method, however
it also only obeys sub-optimal robustness guarantees for uniformly at random
chosen $D$-LRBG as we have seen in
\refP{Figure:Numcerics:Noise-Blindness_l0_sphere_noise1}. We ask ourself whether
or not the NNLS is also faster with a more natural measurement scheme, i.e. if $A_{m,n}$ are
independent $0/1$ Bernoulli random variables.
We repeat \thref{Experiment:test:2} $100$ times with $r=2$ for the NNLS and $r=1$ for the other methods.
We again plot the mean of the logarithmic relative $\ell_1$-estimation error
and the mean of the relative $\ell_1$-norm of the residual
in \refP{Figure:Numcerics:computational_complexity:LRBGvsSubGauss:1}
and \refP{Figure:Numcerics:computational_complexity:LRBGvsSubGauss:2}.
\begin{figure}[ht]
	\begin{subfigure}[c]{0.5\textwidth}
		\includegraphics[scale=0.5]{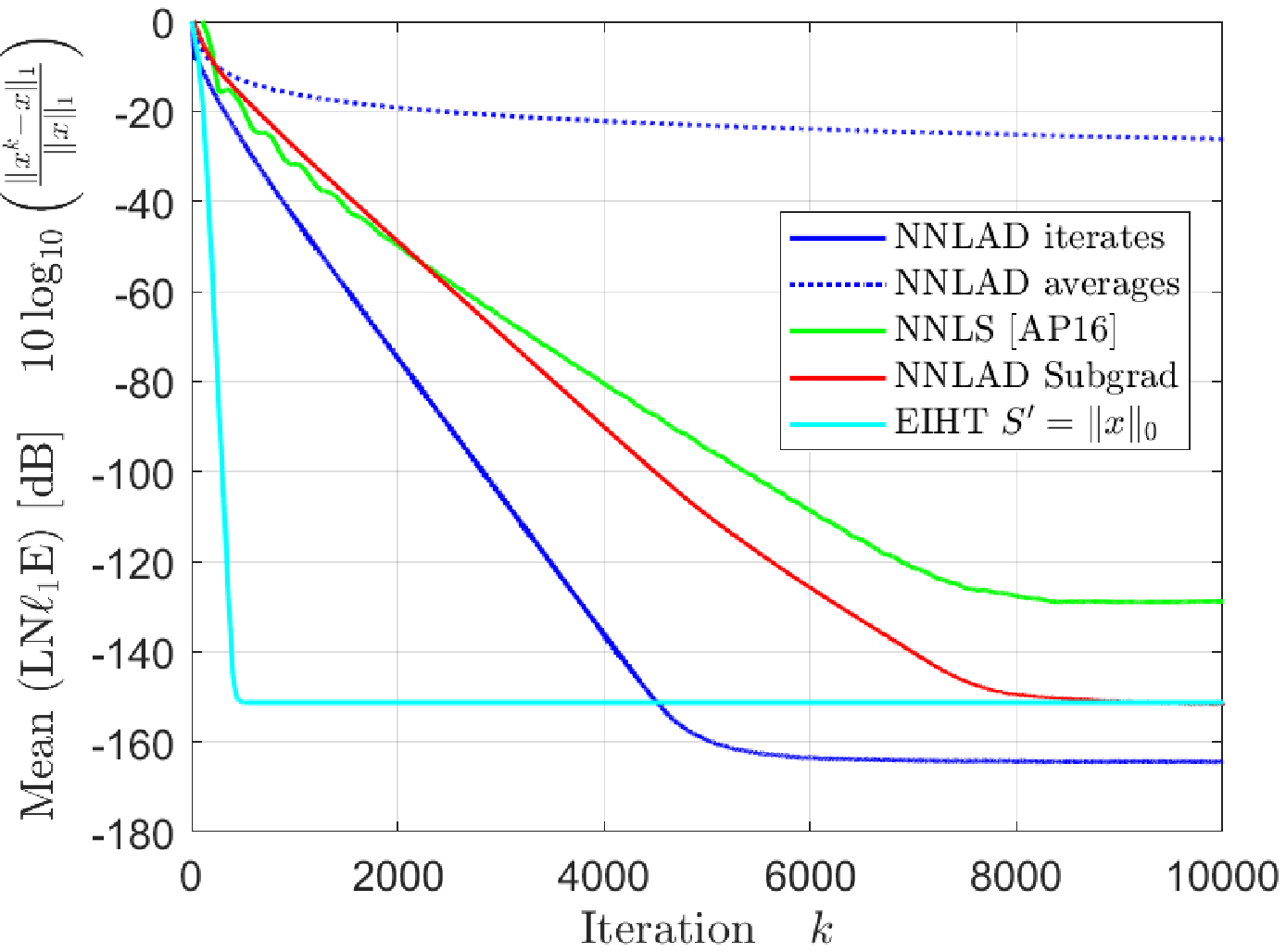}
	\end{subfigure}
	\begin{subfigure}[c]{0.5\textwidth}
		\includegraphics[scale=0.5]{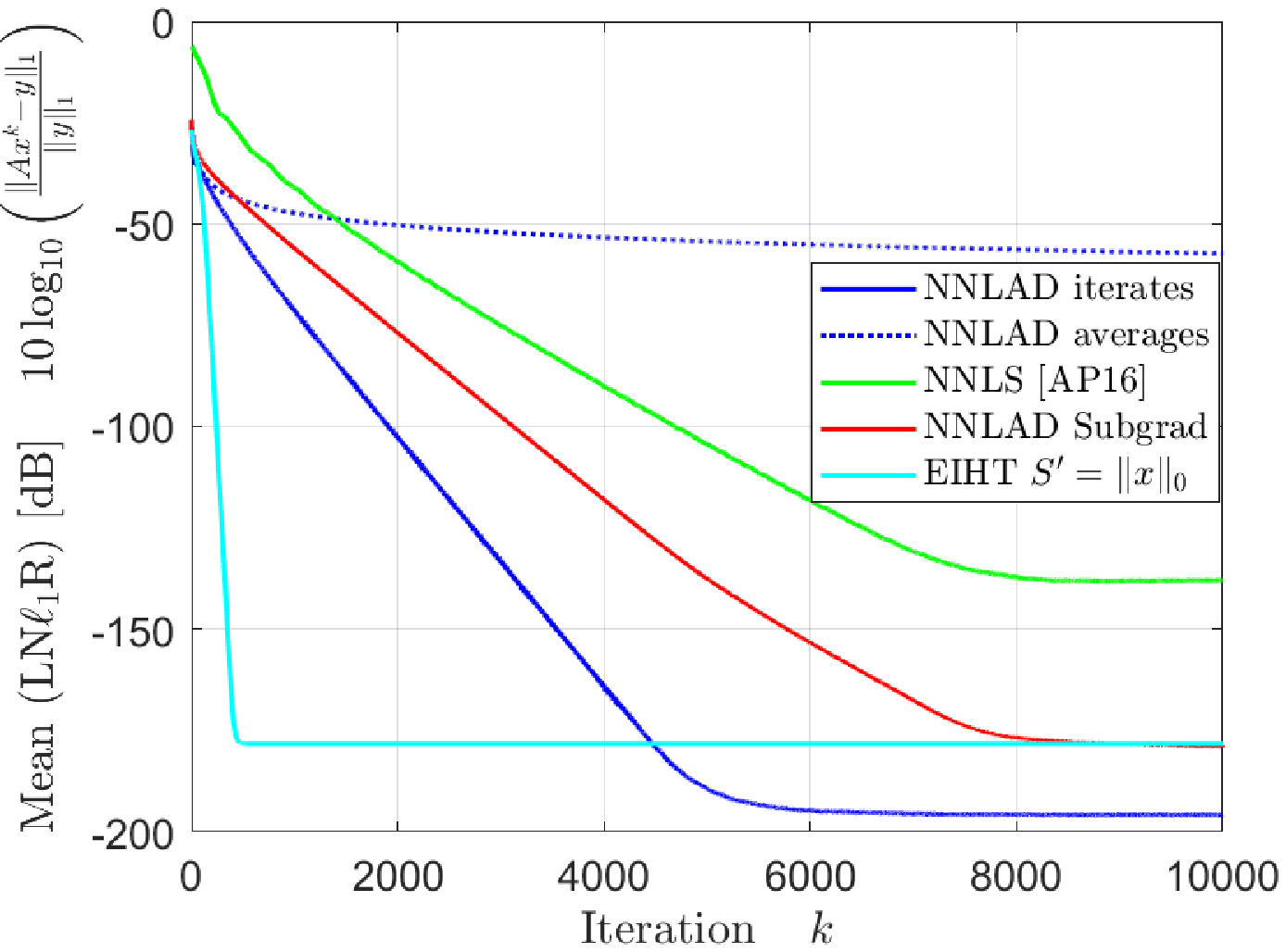}
	\end{subfigure}
	\caption{\label{Figure:Numcerics:computational_complexity:LRBGvsSubGauss:1}
			Convergence rates of certain iterated methods with respect to the number of iterations.
			$\AMat$ is Bernoulli for NNLS and $D$-LRBG for the others.}
\end{figure}
\begin{figure}[ht]
	\begin{subfigure}[c]{0.5\textwidth}
		\includegraphics[scale=0.5]{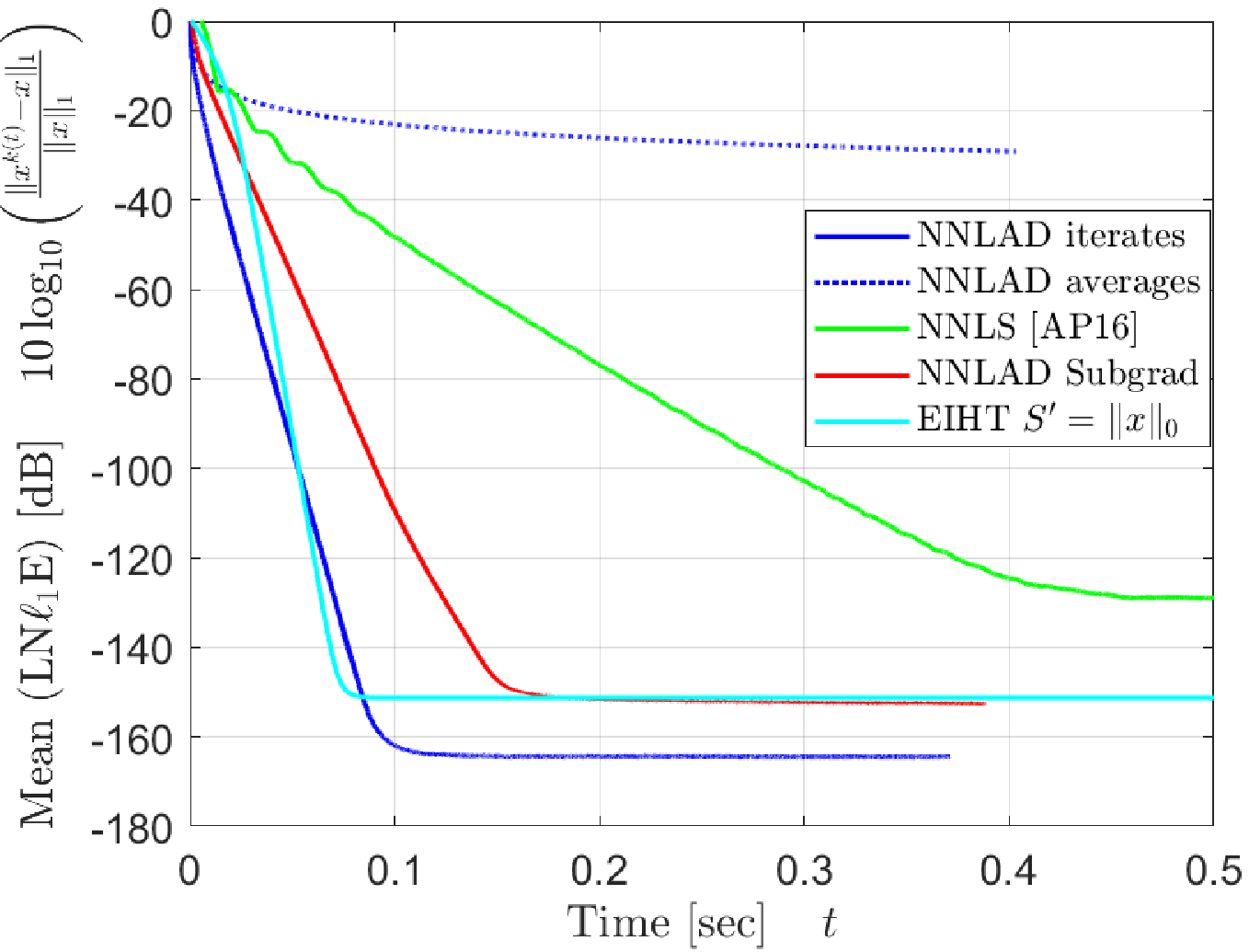}
	\end{subfigure}
	\begin{subfigure}[c]{0.5\textwidth}
		\includegraphics[scale=0.5]{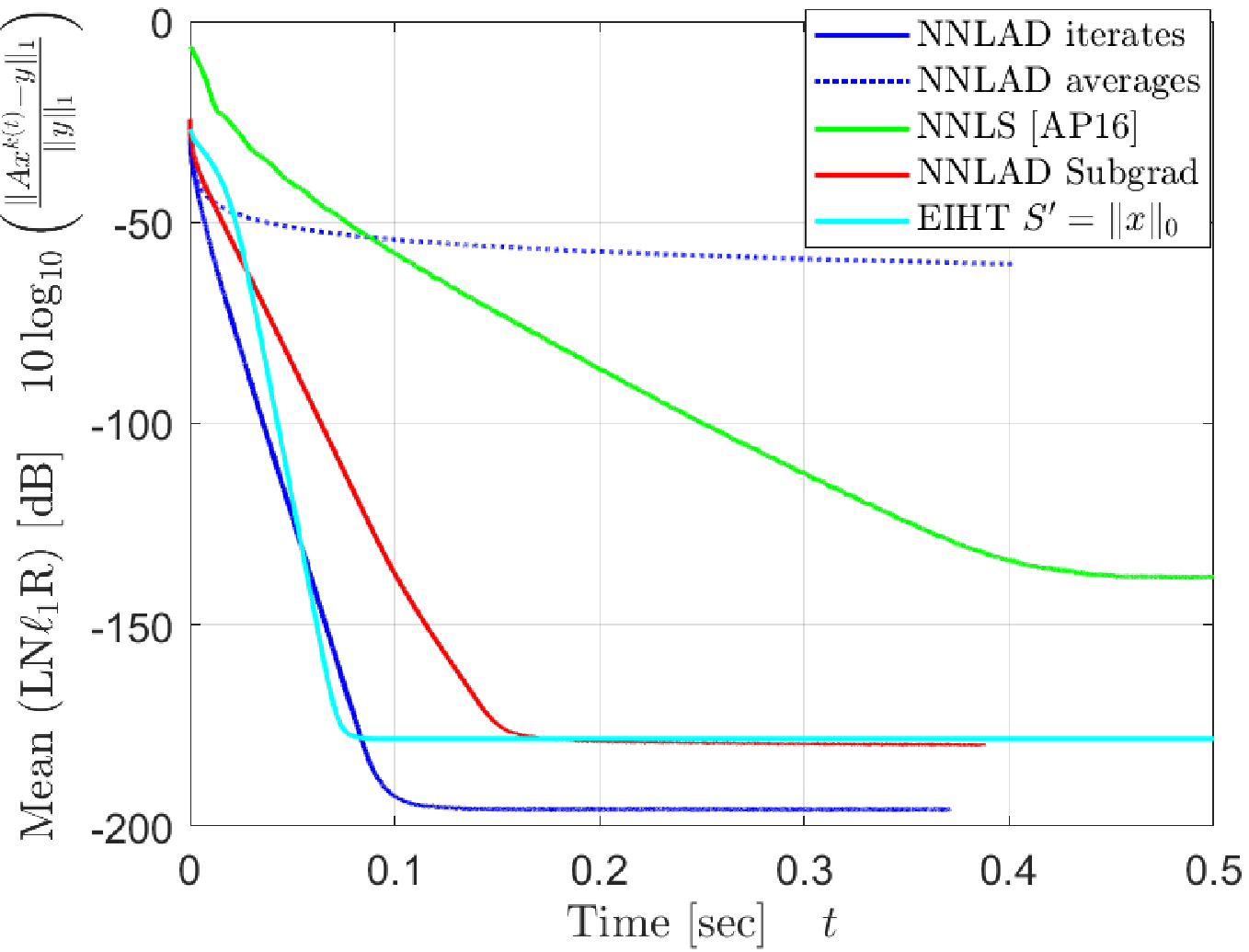}
	\end{subfigure}
	\caption{\label{Figure:Numcerics:computational_complexity:LRBGvsSubGauss:2}
			Convergence rates of certain iterated methods with respect to the time.
			$\AMat$ is Bernoulli for NNLS and $D$-LRBG for the others.}
\end{figure}
\newpage
The NNLAD and the EIHT converge to the solution with roughly the same time. Even the subgradient implementation of the NNLAD
recovers a signal in less time than the NNLS. Further the convergence of NNLS does not seem to be linear anymore.
We deduce that sparse structure of $\AMat$ has a more significant influence on the decoding time than
the smoothness of the data fidelity term.
Also we deduce that even the subgradient method is a viable choice to recover a signal.
\subsubsection*{NNLAD vs SPGL1}
As a last test we compare the NNLAD to the SPGL1 \cite{SPGL1}\cite{SPGL1_Site} toolbox for matlab.
\begin{Experiment}\label{Experiment:test:3}
	\hspace{1pt}
	\begin{itemize}
		\item[1.] Generate the measurement matrix $\AMat\in\left\{0,D^{-1}\right\}^{M\times N}$ as a uniformly at random drawn $D$-LRBG.
		\item[2.] Generate the signal $\xVec$ uniformly at random from $\Sigma_S\cap\mathbb{R}_+^N\cap\mathbb{S}_r^{N-1}$.
		\item[3.] Define the observation $\yVec:=\AMat\xVec$.
		\item[4.] Use a benchmark decoder to calculate an estimator $\xVec^\#$ and collect
			the relative estimation errors\\
			$\frac{\norm{\xVec^\#-\xVec}_1}{\norm{\xVec}_1},\frac{\norm{\xVec^\#-\xVec}_2}{\norm{\xVec}_2}$
			and the time to calculate $\xVec^\#$.
		\item[5.] For each iterative method calculate iterations
			until $\frac{\norm{\xVec^k-\xVec}_1}{\norm{\xVec}_1}\leq\frac{\norm{\xVec^\#-\xVec}_1}{\norm{\xVec}_1}$
			and $\frac{\norm{\xVec^k-\xVec}_2}{\norm{\xVec}_2}\leq\frac{\norm{\xVec^\#-\xVec}_2}{\norm{\xVec}_2}$.
			Collect the time to perform these iterations.
			If this threshold can not be reached after $10^5$ iterations, the recovery failed
			and the time is set to $\infty$.
	\end{itemize}
\end{Experiment}
We again fix the dimension $N=1024$, $M=256$, $D=10$ and vary $S\in\SetOf{128}$. For both the BP implementation of SPGL1
and the LASSO implementation of SPGL1 we repeat \thref{Experiment:test:3} $100$ times for each $S$.
We plot the mean of the time to calculate the estimators and plot these over the sparsity
in \refP{Figure:Numcerics:computational_complexity:BP_SPGL1} and \refP{Figure:Numcerics:computational_complexity:LASSO_SPGL1}.
\begin{figure}[ht]
    \begin{subfigure}[c]{0.5\textwidth}
        \includegraphics[scale=0.5]{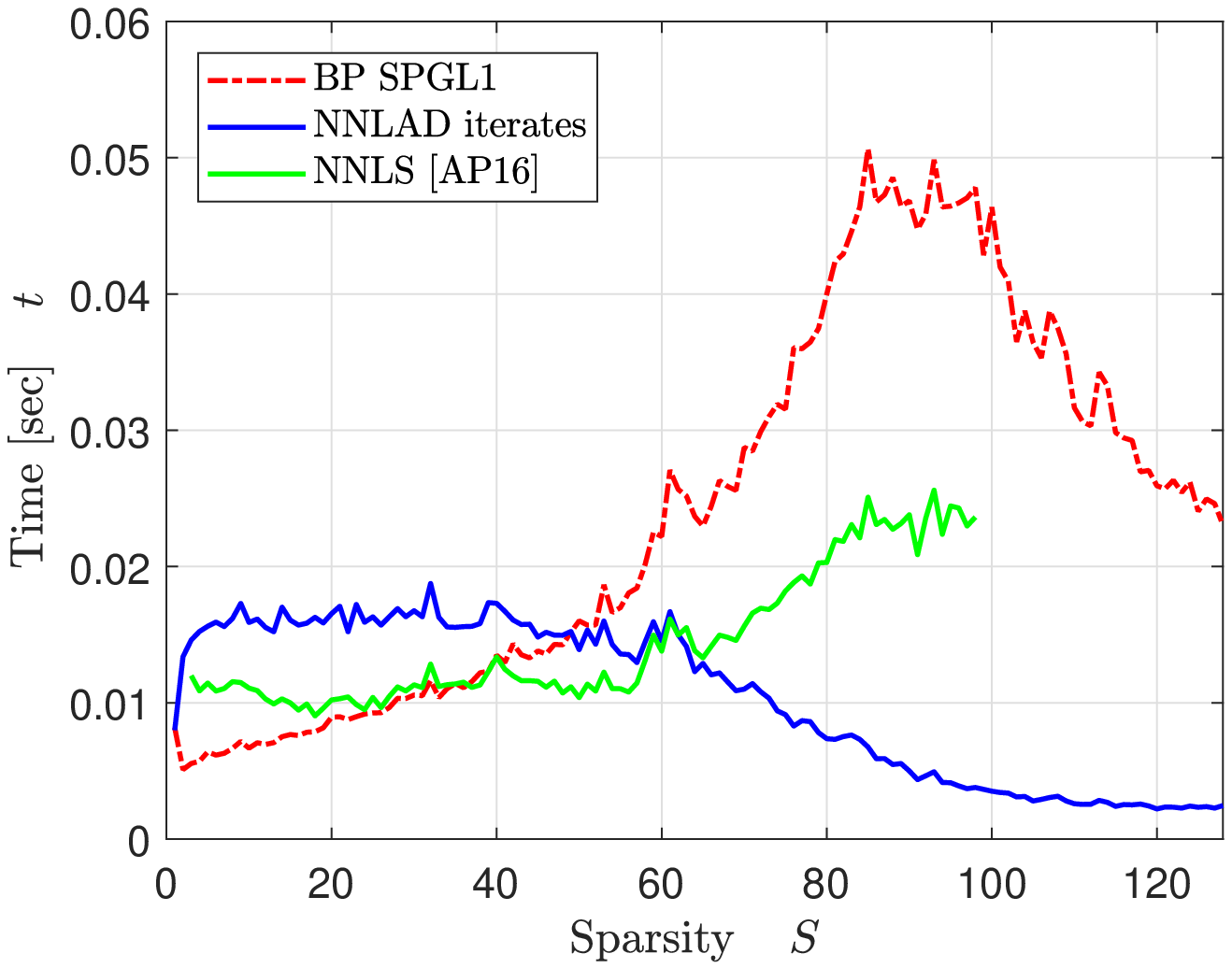}
		\caption{\label{Figure:Numcerics:computational_complexity:BP_SPGL1}
			The NNLAD is faster than the BP of SPGL1 for high $S$.}
    \end{subfigure}
    \begin{subfigure}[c]{0.5\textwidth}
        \includegraphics[scale=0.5]{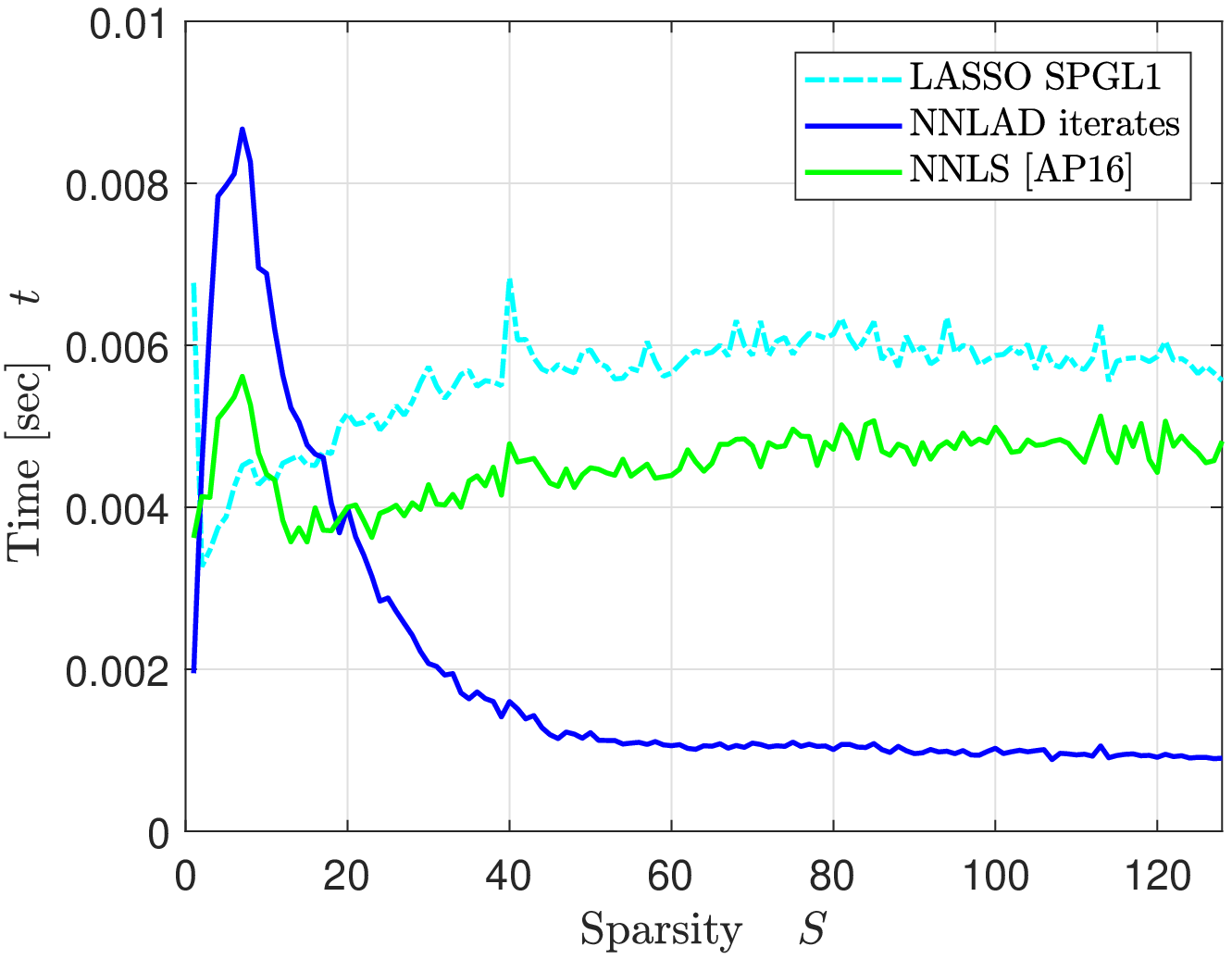}
		\caption{\label{Figure:Numcerics:computational_complexity:LASSO_SPGL1}
			The NNLAD is faster than the LASSO of SPGL1 for moderate $S$.}
    \end{subfigure}
    \caption{Time of the NNLAD and NNLS to approximate better than SPGL methods.}
\end{figure}
\newpage
The NNLAD implementation is slower than both SPGL1 methods for small $S$.
However, if we have the optimal number of measurements $M\in\mathcal{O}\left(S\Log{\frac{N}{S}}\right)$,
the NNLAD is faster than both SPGL1 methods.
\subsubsection*{Summary}
The implementation of NNLAD as presented in \thref{Algorithm:NNLAD_APP_no_average} is a reliable recovery method
for sparse non-negative signals. There are methods that might be faster, but these
either recover a smaller number of coefficients (EIHT, greedy methods)
or they obey sub-optimal recovery guarantees (NNLS).
The implementation is as fast as the commonly uses SPGL1 toolbox, but has the advantage that it requires no tuning
depending on the unknown $\xVec$ or $\eVec$.
Lastly, the NNLAD can handle peaky noise overwhelmingly good.
\subsection{Application for Viral Detection}\label{Subsection:applications_for_COVID-19}
With the outbreak and rapid spread of the COVID-19 virus we are in the need of testing
a lot of people for an infection.
Since we can only test a fixed number of persons in a given time,
the number of persons tested for the virus grows at most linearly.
On the other hand, models suggest that the number of possibly infected persons grows exponentially.
At some point, if that is not already the case, we will have a shortage of test kits
and we will not be able to test every person. It is thus desirable to test
as much persons with as few as possible test kits.\\
The field \emph{group testing} develops strategies to test groups of individuals instead of
individuals in order to reduce the amount of tests required to identify infected individuals.
The first advances in group testing were made in \cite{dorfman_original}. For a general overview
about group testing we refer to \cite{group_testing}.\\
The problem of testing a large group for a virus can be modeled as a compressed sensing problem
in the following way:
Suppose we want to test $N$ persons, labeled by $\SetOf{N}=\left\{1,\dots,N\right\}$,
to check whether or not they are affected by a virus.
We denote by $x_n$ the quantity of viruses
in the specimen of the $n$-th person.
Suppose we have $M$ test kits, labeled by $\SetOf{M}=\left\{1,\dots,M\right\}$.
By $y_m$ we denote the amount of viruses in the sample
of the $m$-th test kit.
Let $\AMat\in\left[0,1\right]^{M\times N}$.
For every $n$ we put a fraction of size $A_{m,n}$ of the specimen of the $n$-th person
into the sample for the $m$-th test kit.
The sample of the $m$-th test kit will then have the quantity of viruses
\begin{align*}
	\sum_{n\in\SetOf{N}}A_{m,n}x_n +e^{con}_m,
\end{align*}
where $e^{con}_m$ is the amount of viruses in the sample originating from a possible contamination
of the sample.
A quantitative reverse transcription polymerase chain reaction
estimates the quantity of viruses by $y_m$ with a small error
$e^{pcr}_m=y_m-\sum_{n\in\SetOf{N}}A_{m,n}x_n -e^{con}_m$.
After all $M$ tests we detect the quantity
\begin{align}
	\yVec=\AMat\xVec+\eVec,
\end{align}
where $\eVec=\eVec^{con}+\eVec^{pcr}$.
Since contamination of samples happens rarely, $\eVec^{con}$ is assumed to be
peaky in terms of \refP{Table:main_results:table_of_advantages},
while $\eVec^{pcr}$ is assumed to have even mass
but a small norm.
In total $\eVec$ is peaky.\\
Often each specimen is tested separately, meaning that $\AMat$ is the identity.
In particular, we need at least as much test kits as specimens.
Further, we estimate the true quantity of viruses $x_n$
by $x^\#_n:=y_n$, which results in the estimation error
$x^\#_n-x_n=e_n=e^{con}_n+e^{pcr}_n$.
Since the noise vector $\eVec$ is peaky,
some but few tests will be inaccurate and might result
in false positives or false negatives.\\
In general, only a fraction of persons is indeed affected by the virus.
Thus, we assume that $\norm{\xVec}_0\leq S$ for some small $S$.
Since the amount of viruses is a non-negative value, we also have $\xVec\geq 0$.
Hence, we can use the NNLR to estimate $\xVec$ and in particular
we should use the NNLAD due to the noise being peaky.
\thref{Corollary:lossless_expander} suggests to choose $\AMat$ as the random walk matrix of a
lossless expander or by \cite[Theorem~13.7]{Introduction_CS} to choose $\AMat$ as a
uniformly at random chosen $D$-LRBG.
Such a matrix $\AMat$ has non-negative entries and the column sums of $\AMat$ are
not greater than one. This is a necessary requirement since each column sum is the total amount of
specimen used in the test procedure.
Especially, a fraction of $D^{-1}$ of each specimen is used in exactly $D$ test kits.\\
By \thref{Corollary:lossless_expander} and
\cite[Theorem~13.7]{Introduction_CS} this allows us to reduce the number of test kits
required to $M\approx CS\Log{\ExpE\frac{N}{S}}$.
As we have seen in \refP{Figure:Numcerics:Noise-Blindness_l0_sphere_noise1} and
\refP{Figure:Numcerics:Noise-Blindness_l0_sphere_noise2}
we expect the NNLAD estimator to correct the errors from
$\eVec^{con}$ and the estimation error is in the order of $\norm{\eVec^{pcr}}_1$
which is assumed to be small.
Hence, the NNLAD estimator with a random walk matrix of a lossless expander
might even result in less false positives and false
negatives than individual testing.\\
Note that the lack of knowledge about the noise $\eVec$ favors the
NNLAD recovery method over a \refP{Problem:BPDN} approach.
Further, since the total sum of viruses in all patients given by
$\sum_{n\in\SetOf{N}}x_n=\norm{\xVec}_1$ is unknown, it is undesirable to use \refP{Problem:CLR}.
\section*{Acknowledgments}
The work was partially supported by DAAD grant 57417688. PJ has been supported by DFG grant JU 2795/3.
BB has been supported by BMBF through the German Research Chair at AIMS, administered by the Humboldt Foundation.
\section{Appendix}\label{Section:Appendix}
\subsection{Proof of NNLR Recovery Guarantee}\label{Subsection:Proof:NNLDMinimizer}
By $\IDVec$ we denote the all ones vector in $\mathbb{R}^N$ or $\mathbb{R}^M$ respectively.
The proof is an adaption of the steps used in \cite{NNLS_first}.
As for most convex optimization problems in compressed sensing we require \cite[Theorem~4.25]{Introduction_CS}
and \cite[Theorem~4.20]{Introduction_CS} respectively.
\begin{Theorem}[ {\cite[Theorem~4.25]{Introduction_CS} \& \cite[Theorem~4.20]{Introduction_CS}} ]\label{Theorem:SRNSPCond}
	Let $q\in\left[1,\infty\right)$ and suppose $\AMat$ has the $\ell_q$-RNSP of order $S$ with respect to $\normRHS{\cdot}$ with constants
	$\rho$ and $\tau$. Then, it holds that
	\begin{align}\notag
		\norm{\xVec-\zVec}_q
		\leq \frac{\left(1+\rho\right)^2}{1-\rho}S^{\frac{1}{q}-1}
			\left(\norm{\zVec}_1-\norm{\xVec}_1+2
			d_1\left(\xVec,\Sigma_S\right)\right)
			+\frac{3+\rho}{1-\rho}\tau\normRHS{\AMat\left(\xVec-\zVec\right)}
		\TextForAll \xVec,\zVec\in\mathbb{R}^n.
	\end{align}
	If $q=1$, this bound can be improved to 
	\begin{align}\notag
		\norm{\xVec-\zVec}_1
		\leq \frac{1+\rho}{1-\rho}
			\left(\norm{\zVec}_1-\norm{\xVec}_1+2
			d_1\left(\xVec,\Sigma_S\right)\right)
			+\frac{2}{1-\rho}\tau\normRHS{\AMat\left(\xVec-\zVec\right)}
		\TextForAll \xVec,\zVec\in\mathbb{R}^n.
	\end{align}
\end{Theorem}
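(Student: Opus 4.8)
The plan is to follow the classical route from a robust null space property to an instance-optimal error bound, treating the general-$q$ case as a bootstrap of the $q=1$ case. Throughout I set $\vVec:=\xVec-\zVec$, so that the claim becomes a bound on $\norm{\vVec}_q$, and I abbreviate $E:=\norm{\zVec}_1-\norm{\xVec}_1+2d_1\left(\xVec,\Sigma_S\right)$ and $F:=\normRHS{\AMat\vVec}$. The first step is to extract an $\ell_1$-flavoured consequence of the hypothesis: for any $\SetSize{T}\leq S$, combining $\norm{\ProjToIndex{T}{\vVec}}_1\leq S^{1-\frac{1}{q}}\norm{\ProjToIndex{T}{\vVec}}_q$ (valid since $T$ carries at most $S$ entries) with the $\ell_q$-RNSP yields $\norm{\ProjToIndex{T}{\vVec}}_1\leq \rho\norm{\ProjToIndex{T^c}{\vVec}}_1+\tau S^{1-\frac{1}{q}}F$.

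Next I would choose $T$ to be the support of a best $S$-term approximation of $\xVec$, so that $\norm{\ProjToIndex{T^c}{\xVec}}_1=d_1\left(\xVec,\Sigma_S\right)$. The key comparison step is to expand $\norm{\zVec}_1=\norm{\ProjToIndex{T}{\zVec}}_1+\norm{\ProjToIndex{T^c}{\zVec}}_1$, substitute $\zVec=\xVec-\vVec$ on each block, and apply the reverse triangle inequality to obtain $\norm{\zVec}_1\geq \norm{\xVec}_1-2d_1\left(\xVec,\Sigma_S\right)-\norm{\ProjToIndex{T}{\vVec}}_1+\norm{\ProjToIndex{T^c}{\vVec}}_1$. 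Rearranged, this reads $\norm{\ProjToIndex{T^c}{\vVec}}_1\leq E+\norm{\ProjToIndex{T}{\vVec}}_1$. Feeding in the $\ell_1$ consequence from the first step and solving for $\norm{\ProjToIndex{T^c}{\vVec}}_1$ produces the factor $\frac{1}{1-\rho}$, namely $\norm{\ProjToIndex{T^c}{\vVec}}_1\leq\frac{E+\tau S^{1-\frac{1}{q}}F}{1-\rho}$. Substituting this into $\norm{\vVec}_1=\norm{\ProjToIndex{T}{\vVec}}_1+\norm{\ProjToIndex{T^c}{\vVec}}_1\leq(1+\rho)\norm{\ProjToIndex{T^c}{\vVec}}_1+\tau S^{1-\frac{1}{q}}F$ and collecting constants gives $\norm{\vVec}_1\leq\frac{1+\rho}{1-\rho}E+\frac{2\tau S^{1-\frac{1}{q}}}{1-\rho}F$; specializing $q=1$ (so $S^{1-\frac{1}{q}}=1$) is exactly the improved statement.

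For the general-$q$ bound I would bootstrap off this $\ell_1$ estimate. Let $S_0$ index the $S$ largest-in-modulus entries of $\vVec$ and split $\norm{\vVec}_q\leq\norm{\ProjToIndex{S_0}{\vVec}}_q+\norm{\ProjToIndex{S_0^c}{\vVec}}_q$. The first term is controlled by the $\ell_q$-RNSP applied to $S_0$, giving at most $\rho S^{\frac{1}{q}-1}\norm{\vVec}_1+\tau F$. The tail term is handled by a Stechkin-type estimate: partitioning $S_0^c$ into consecutive blocks of size $S$ in decreasing order and bounding each block's $\ell_q$ norm by $S^{\frac{1}{q}-1}$ times the $\ell_1$ norm of the preceding block yields $\norm{\ProjToIndex{S_0^c}{\vVec}}_q\leq S^{\frac{1}{q}-1}\norm{\vVec}_1$. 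Adding the two contributions gives $\norm{\vVec}_q\leq(1+\rho)S^{\frac{1}{q}-1}\norm{\vVec}_1+\tau F$, and substituting the $\ell_1$ bound for $\norm{\vVec}_1$ and simplifying ($\frac{2(1+\rho)}{1-\rho}+1=\frac{3+\rho}{1-\rho}$) produces the claimed inequality with the $\frac{(1+\rho)^2}{1-\rho}$ and $\frac{3+\rho}{1-\rho}$ factors.

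The routine parts are the two norm-comparison inequalities and the Stechkin block estimate; the part demanding the most care is the comparison step, where choosing $T$ as the support of the best approximation and bookkeeping the reverse triangle inequality on both blocks simultaneously is what makes the terms $\norm{\zVec}_1-\norm{\xVec}_1$ and $2d_1\left(\xVec,\Sigma_S\right)$ appear with the right signs and coefficients. Everything downstream is algebraic simplification of the constants, so I expect no genuine obstacle once the comparison identity is set up correctly.
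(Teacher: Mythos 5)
Your proposal is correct, and it reconstructs essentially the argument behind the cited results \cite[Theorems~4.20~\&~4.25]{Introduction_CS}, which the paper invokes without reproving: the $\ell_q$-RNSP is first downgraded to an $\ell_1$-RNSP via H\"older on the $S$-sparse part, the comparison step with $T$ the support of a best $S$-term approximation yields the improved $q=1$ bound with constants $\frac{1+\rho}{1-\rho}$ and $\frac{2}{1-\rho}$, and the general-$q$ case follows by the same bootstrap through $\norm{\vVec}_q\leq\norm{\ProjToIndex{S_0}{\vVec}}_q+\norm{\ProjToIndex{S_0^c}{\vVec}}_q$ with the Stechkin tail estimate, giving exactly $\frac{\left(1+\rho\right)^2}{1-\rho}$ and $\frac{3+\rho}{1-\rho}$. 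All intermediate inequalities and the constant bookkeeping check out, so there is no gap.
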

Note that by a modification of the proof this result also holds for $q=\infty$.
The modifications on the proofs of \cite[Theorem~4.25]{Introduction_CS} and
\cite[Theorem~4.20]{Introduction_CS} are straight forward, only
the modification of \cite[Theorem~2.5]{Introduction_CS} might not be obvious.
See also \cite{rLASSO}. As a consequence, all our statements also hold for $q=\infty$
with $\frac{1}{q}:=0$.
If $\WMat\in\mathbb{R}^{N\times N}$ is a diagonal matrix, we can calculate some operator norms fairly easy:
\begin{align}\notag
	\norm{\WMat}_{q\rightarrow q}
	:=\sup_{\norm{\wVec}_q\leq 1}\norm{\WMat\wVec}_q
	=\max_{n\in\SetOf{N}}\abs{W_{n,n}} \TextForAll q\in\left[1,\infty\right].
\end{align}
We use this relation frequently over this section.
Furthermore, we use \cite[Lemma~5]{NNLS_first} without adaption. For the sake of completeness we add a short proof.
\begin{Lemma}[ {\cite[Lemma~5]{NNLS_first}} ]\label{Lemma:DiagMat}
	Let $q\in\left[1,\infty\right)$ and suppose that $\AMat\in\mathbb{R}^{M\times N}$ has $\ell_q$-RNSP of order $S$ with respect to $\normRHS{\cdot}$ with constants
	$\rho$ and $\tau$. Let $\WMat\in\mathbb{R}^{N\times N}$ be a diagonal matrix with $W_{n,n}>0$. If
	$\rho'=\norm{\WMat}_{q\rightarrow q}\norm{\WMat^{-1}}_{1\rightarrow 1}\rho<1$,
	then $\AMat\WMat^{-1}$
	has $\ell_q$-RNSP of order $S$ with respect to $\normRHS{\cdot}$ with constants
	$\rho'=\norm{\WMat}_{q\rightarrow q}\norm{\WMat^{-1}}_{1\rightarrow 1}\rho$
	and $\tau'=\norm{\WMat}_{q\rightarrow q}\tau$.
\end{Lemma}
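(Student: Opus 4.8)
The plan is to reduce the claimed null space property for $\AMat\WMat^{-1}$ to the assumed one for $\AMat$ via the change of variables $\uVec:=\WMat^{-1}\vVec$. First I would fix an arbitrary $\vVec\in\mathbb{R}^N$ and an index set $T$ with $\SetSize{T}\leq S$, set $\uVec:=\WMat^{-1}\vVec$, and apply the hypothesis (the $\ell_q$-RNSP of $\AMat$) to $\uVec$, obtaining
\begin{align}\notag
	\norm{\ProjToIndex{T}{\uVec}}_q
	\leq \rho S^{\frac{1}{q}-1}\norm{\ProjToIndex{T^c}{\uVec}}_1 + \tau\normRHS{\AMat\uVec}.
\end{align}

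The key structural observation is that, because $\WMat$ is diagonal, it commutes with every coordinate restriction $\ProjToIndex{T}{\cdot}$; concretely $\ProjToIndex{T}{\WMat\uVec}=\WMat\ProjToIndex{T}{\uVec}$ and likewise for $\WMat^{-1}$, since both operations act component-wise. I would record this once and then use it to translate each term of the displayed inequality back into $\vVec$.

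Next I would estimate the three pieces separately using the operator-norm bounds. On the left, $\ProjToIndex{T}{\vVec}=\WMat\ProjToIndex{T}{\uVec}$ gives $\norm{\ProjToIndex{T}{\vVec}}_q\leq\norm{\WMat}_{q\rightarrow q}\norm{\ProjToIndex{T}{\uVec}}_q$. For the stableness term, $\ProjToIndex{T^c}{\uVec}=\WMat^{-1}\ProjToIndex{T^c}{\vVec}$ yields $\norm{\ProjToIndex{T^c}{\uVec}}_1\leq\norm{\WMat^{-1}}_{1\rightarrow 1}\norm{\ProjToIndex{T^c}{\vVec}}_1$, producing the factor $\rho'=\norm{\WMat}_{q\rightarrow q}\norm{\WMat^{-1}}_{1\rightarrow 1}\rho$ after multiplying through by $\norm{\WMat}_{q\rightarrow q}$. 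For the robustness term, $\normRHS{\AMat\uVec}=\normRHS{\AMat\WMat^{-1}\vVec}$ is already the residual of the new matrix, so it simply carries the factor $\tau'=\norm{\WMat}_{q\rightarrow q}\tau$. Assembling the three estimates gives exactly
\begin{align}\notag
	\norm{\ProjToIndex{T}{\vVec}}_q
	\leq \rho' S^{\frac{1}{q}-1}\norm{\ProjToIndex{T^c}{\vVec}}_1 + \tau'\normRHS{\AMat\WMat^{-1}\vVec},
\end{align}
which is the $\ell_q$-RNSP for $\AMat\WMat^{-1}$ with the claimed constants.

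Finally I would check that these constants are admissible: $\tau'\geq 0$ is clear, $\rho'\geq 0$ holds since all factors are nonnegative, and $\rho'<1$ is exactly the hypothesis imposed in the statement, so the new matrix genuinely satisfies the definition. There is essentially no hard step here; the only thing to get right is the commutation of the diagonal $\WMat$ with the restrictions $\ProjToIndex{T}{\cdot}$ and $\ProjToIndex{T^c}{\cdot}$, together with the controlling identity for the operator norm of a diagonal matrix, namely $\norm{\WMat}_{q\rightarrow q}=\max_n\abs{W_{n,n}}$, recalled just before the lemma.
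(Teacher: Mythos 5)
Your proposal is correct and follows essentially the same route as the paper's proof: both apply the $\ell_q$-RNSP of $\AMat$ to the vector $\WMat^{-1}\vVec$ and translate each term back via the commutation of the diagonal matrix with coordinate restrictions together with the operator-norm bounds $\norm{\WMat}_{q\rightarrow q}$ and $\norm{\WMat^{-1}}_{1\rightarrow 1}$. No gaps.
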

\begin{proof}
	Let $\vVec\in\mathbb{R}^N$ and $\SetSize{T}\leq S$. If we apply the RNSP of $\AMat$ for
	the vector $\ProjToIndex{T}{\left(\WMat^{-1}\vVec\right)}$, we get
	\begin{align}\notag
		\norm{\ProjToIndex{T}{\vVec}}_q
		=&\norm{\WMat\WMat^{-1}\left(\ProjToIndex{T}{\vVec}\right)}_q
		\leq\norm{\WMat}_{q\rightarrow q}\norm{\WMat^{-1}\left(\ProjToIndex{T}{\vVec}\right)}_q
		=\norm{\WMat}_{q\rightarrow q}\norm{\ProjToIndex{T}{\left(\WMat^{-1}\vVec\right)}}_q
		\\\notag
		\leq&\norm{\WMat}_{q\rightarrow q}\left(
				\rho S^{\frac{1}{q}-1}\norm{\ProjToIndex{T^c}{\left(\WMat^{-1}\vVec}\right)}_1
				+\tau\normRHS{\AMat\WMat^{-1}\vVec}
			\right)
		\\\notag
		=&\norm{\WMat}_{q\rightarrow q}
				\rho S^{\frac{1}{q}-1}\norm{\WMat^{-1}\left(\ProjToIndex{T^c}{\vVec}\right)}_1
			+\norm{\WMat}_{q\rightarrow q}\tau\normRHS{\AMat\WMat^{-1}\vVec}
		\\\notag
		\leq&\norm{\WMat}_{q\rightarrow q}\norm{\WMat^{-1}}_{1\rightarrow 1}
				\rho S^{\frac{1}{q}-1}\norm{\ProjToIndex{T^c}{\vVec}}_1
			+\norm{\WMat}_{q\rightarrow q}\tau\normRHS{\AMat\WMat^{-1}\vVec}.
	\end{align}
	This finishes the proof.
\end{proof}
Next we adapt \cite[Theorem~4]{NNLS_first} to account for arbitrary norms. Further, we obtain a slight improvement
in form of the dimensional scaling constant $S^{\frac{1}{q}-1}$. 
With this, our error bound becomes for $S\rightarrow\infty$ asymptotically
the error bound of the basis pursuit denoising, whenever $\kappa=1$ and $q>1$ \cite{Introduction_CS}.
\begin{Proposition}[ Similar to {\cite[Theorem~4]{NNLS_first}} ]\label{Proposition:SRNSP+M}
	Let $q\in\left[1,\infty\right)$ and $\normRHS{\cdot}$ be a norm on $\mathbb{R}^M$ with dual norm $\normDual{\cdot}$.
	Suppose $\AMat$ has $\ell_q$-RNSP of order $S$ with respect to $\normRHS{\cdot}$
	with constants $\rho$ and $\tau$.
	Suppose $\AMat$ has the $M^+$ criterion with vector $\tVec$ and constant $\kappa$
	and that $\kappa\rho<1$.
	Then, we have
	\begin{align}\notag
		\norm{\xVec-\zVec}_q
		\leq& 2\frac{\left(1+\kappa\rho\right)^2}{1-\kappa\rho}
			\kappa S^{\frac{1}{q}-1}
			d_1\left(\xVec,\Sigma_S\right)
		+\left(
				\frac{\left(1+\kappa\rho\right)^2}{1-\kappa\rho} S^{\frac{1}{q}-1}\max_{n\in\SetOf{N}}\abs{\left(\AMat^T\tVec\right)^{-1}_n}
				\normDual{\tVec}+\frac{3+\kappa\rho}{1-\kappa\rho}\kappa\tau
			\right)\normRHS{\AMat\zVec-\AMat\xVec}
		\\\notag
		&\TextForAll \xVec,\zVec\in\mathbb{R}_+^N.
	\end{align}
	If $q=1$, this bound can be improved to 
	\begin{align}\notag
		\norm{\xVec-\zVec}_q
		\leq& 2\frac{1+\kappa\rho}{1-\kappa\rho}
			\kappa 
			d_1\left(\xVec,\Sigma_S\right)
			+\left(
				\frac{1+\kappa\rho}{1-\kappa\rho} \max_{n\in\SetOf{N}}\abs{\left(\AMat^T\tVec\right)^{-1}_n}
				\normDual{\tVec}+\frac{2}{1-\kappa\rho}\kappa\tau
			\right)\normRHS{\AMat\zVec-\AMat\xVec}
		\\\notag
		& \TextForAll \xVec,\zVec\in\mathbb{R}_+^N.
	\end{align}
\end{Proposition}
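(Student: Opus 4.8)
The plan is to remove the $M^+$ bias by a diagonal reweighting and then invoke the unbiased bound of \thref{Theorem:SRNSPCond}, exactly the situation \thref{Lemma:DiagMat} is built for. Concretely, I would set $\WMat:=\diag{\AMat^T\tVec}$, which is a genuine positive diagonal matrix since the $M^+$ criterion forces $\AMat^T\tVec>0$ entrywise. For diagonal matrices the operator norms are entrywise maxima, so $\norm{\WMat}_{q\rightarrow q}=\norm{\WMat}_{1\rightarrow 1}=\max_{n}(\AMat^T\tVec)_n$ and $\norm{\WMat^{-1}}_{q\rightarrow q}=\norm{\WMat^{-1}}_{1\rightarrow 1}=\max_n\abs{(\AMat^T\tVec)^{-1}_n}$, and the key product $\norm{\WMat}_{q\rightarrow q}\norm{\WMat^{-1}}_{1\rightarrow 1}$ equals precisely the condition number $\kappa$. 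Because $\kappa\rho<1$ by hypothesis, \thref{Lemma:DiagMat} gives that $\BMat:=\AMat\WMat^{-1}$ has the $\ell_q$-RNSP of order $S$ with respect to $\normRHS{\cdot}$ with stableness constant $\kappa\rho$ and robustness constant $\norm{\WMat}_{q\rightarrow q}\tau$.

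Next I would apply \thref{Theorem:SRNSPCond} for the matrix $\BMat$ to the rescaled signals $\tilde{\xVec}:=\WMat\xVec$ and $\tilde{\zVec}:=\WMat\zVec$. The purpose of the rescaling is that $\BMat\tilde{\xVec}=\AMat\xVec$ and $\BMat\tilde{\zVec}=\AMat\zVec$, so the residual term $\normRHS{\BMat(\tilde{\xVec}-\tilde{\zVec})}$ is literally $\normRHS{\AMat\xVec-\AMat\zVec}$ and no distortion enters there; it then remains to re-express the other three quantities of that bound in the original variables.

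The crucial step, and the only place where nonnegativity is used, is the $\ell_1$-gap $\norm{\tilde{\zVec}}_1-\norm{\tilde{\xVec}}_1$. Since $\WMat>0$ and $\xVec,\zVec\geq 0$, both $\tilde{\xVec},\tilde{\zVec}$ are nonnegative, so their $\ell_1$-norms are coordinate sums and $\norm{\tilde{\zVec}}_1-\norm{\tilde{\xVec}}_1=\scprod{\AMat^T\tVec}{\zVec-\xVec}=\scprod{\tVec}{\AMat\zVec-\AMat\xVec}$, which is bounded by $\normDual{\tVec}\normRHS{\AMat\zVec-\AMat\xVec}$ straight from the definition of the dual norm; this is exactly how the factor $\normDual{\tVec}$ enters the final estimate. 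For the compressibility term I would use that a diagonal $\WMat$ preserves supports, so pushing an optimal $S$-sparse approximant of $\xVec$ through $\WMat$ yields $d_1\left(\tilde{\xVec},\Sigma_S\right)\leq\norm{\WMat}_{1\rightarrow 1}\,d_1\left(\xVec,\Sigma_S\right)$. Finally I would return from $\norm{\tilde{\xVec}-\tilde{\zVec}}_q$ to $\norm{\xVec-\zVec}_q$ through $\norm{\xVec-\zVec}_q\leq\norm{\WMat^{-1}}_{q\rightarrow q}\norm{\tilde{\xVec}-\tilde{\zVec}}_q$.

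Assembling these, each prefactor picks up either the $\norm{\WMat^{-1}}_{q\rightarrow q}=\max_n\abs{(\AMat^T\tVec)^{-1}_n}$ of the last step or an extra $\norm{\WMat}_{1\rightarrow 1}=\max_n(\AMat^T\tVec)_n$ from the compressibility/residual rescaling, and wherever both occur together they multiply to $\kappa$; this reproduces the constants $2\frac{\left(1+\kappa\rho\right)^2}{1-\kappa\rho}\kappa S^{\frac{1}{q}-1}$ on $d_1\left(\xVec,\Sigma_S\right)$ and $\frac{3+\kappa\rho}{1-\kappa\rho}\kappa\tau$ on the residual, while the lone $\max_n\abs{(\AMat^T\tVec)^{-1}_n}\normDual{\tVec}$ contribution is the trace of the $\ell_1$-gap. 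The $q=1$ sharpening is word-for-word the same computation started from the improved constants $\frac{1+\kappa\rho}{1-\kappa\rho}$ and $\frac{2}{1-\kappa\rho}$ of \thref{Theorem:SRNSPCond}. I expect no real obstacle beyond disciplined bookkeeping of constants; the one step that genuinely demands care is verifying that the weighted $\ell_1$-gap collapses to the dual-norm pairing, as this is precisely where the hypotheses $\xVec,\zVec\geq 0$ and the sign of $\AMat^T\tVec$ are simultaneously indispensable.
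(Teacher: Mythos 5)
Your proposal is correct and follows essentially the same route as the paper's own proof: the diagonal reweighting $\WMat=\diag{\AMat^T\tVec}$, the transfer of the RNSP via \thref{Lemma:DiagMat}, the application of \thref{Theorem:SRNSPCond} to $\WMat\xVec$, $\WMat\zVec$, and the collapse of the weighted $\ell_1$-gap to $\scprod{\tVec}{\AMat(\zVec-\xVec)}\leq\normDual{\tVec}\normRHS{\AMat\zVec-\AMat\xVec}$ are exactly the steps the paper takes. The constant bookkeeping and the $q=1$ sharpening also match.
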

\begin{proof}
	Let $\xVec,\zVec\geq 0$. In order to apply \thref{Lemma:DiagMat}
	we set $\WMat$ as the matrix with diagonal $\AMat^T\tVec$ and zero else. It follows that $W_{n,n}>0$ and
	$\norm{\WMat}_{q\rightarrow q}\norm{\WMat^{-1}}_{1\rightarrow 1}\rho=\kappa\rho<1$.
	We can apply \thref{Lemma:DiagMat}, which yields that $\AMat\WMat^{-1}$ has $\ell_q$-RNSP
	with constants $\rho'=\norm{\WMat}_{q\rightarrow q}\norm{\WMat^{-1}}_{1\rightarrow 1}\rho=\kappa\rho$
	and $\tau'=\norm{\WMat}_{q\rightarrow q}\tau=\max_{n\in\SetOf{N}}\abs{\left(\AMat^T\tVec\right)_n}\tau$.
	We apply \thref{Theorem:SRNSPCond}
	with the matrix $\AMat\WMat^{-1}$, the vectors $\WMat\xVec$, $\WMat\zVec$ and the constants
	$\rho'$ and $\tau'$ and get
	\begin{align}\notag
		\norm{\WMat\xVec-\WMat\zVec}_q
		\leq&\frac{\left(1+\rho'\right)^2}{1-\rho'}S^{\frac{1}{q}-1}
			\left(\norm{\WMat\zVec}_1-\norm{\WMat\xVec}_1+2
				d_1\left(\WMat\xVec,\Sigma_S\right)\right)
			+\frac{3+\rho'}{1-\rho'}\tau'\normRHS{\AMat\WMat^{-1}\left(\WMat\xVec-\WMat\zVec\right)}
		\\\notag
		\leq&\frac{\left(1+\rho'\right)^2}{1-\rho'}S^{\frac{1}{q}-1}
			\left(\norm{\WMat\zVec}_1-\norm{\WMat\xVec}_1+2\norm{\WMat}_{1\rightarrow 1}
				d_1\left(\xVec,\Sigma_S\right)\right)
			+\frac{3+\rho'}{1-\rho'}\tau'\normRHS{\AMat\xVec-\AMat\zVec}
		\\\notag
		=&2\frac{\left(1+\kappa\rho\right)^2}{1-\kappa\rho}
			\max_{n\in\SetOf{N}}\abs{\left(\AMat^T\tVec\right)_n}S^{\frac{1}{q}-1}
				d_1\left(\xVec,\Sigma_S\right)
			+\frac{\left(1+\kappa\rho\right)^2}{1-\kappa\rho}S^{\frac{1}{q}-1}
				\left(\norm{\WMat\zVec}_1-\norm{\WMat\xVec}_1\right)
		\\\notag
		&+\frac{3+\kappa\rho}{1-\kappa\rho}\max_{n\in\SetOf{N}}\abs{\left(\AMat^T\tVec\right)_n}\tau
				\normRHS{\AMat\xVec-\AMat\zVec}.
	\end{align}
	We lower bound the left hand side further to get
	\begin{align}\notag
		\norm{\xVec-\zVec}_q
		\leq&\norm{\WMat^{-1}}_{q\rightarrow q}\norm{\WMat\xVec-\WMat\zVec}_q		
		=\max_{n\in\SetOf{N}}\abs{\left(\AMat^T\tVec\right)^{-1}_n}\norm{\WMat\xVec-\WMat\zVec}_q
		\\\notag
		\leq&2\frac{\left(1+\kappa\rho\right)^2}{1-\kappa\rho}
			\kappa S^{\frac{1}{q}-1}
				d_1\left(\xVec,\Sigma_S\right)
			+\frac{\left(1+\kappa\rho\right)^2}{1-\kappa\rho}S^{\frac{1}{q}-1}
				\max_{n\in\SetOf{N}}\abs{\left(\AMat^T\tVec\right)^{-1}_n}
				\left(\norm{\WMat\zVec}_1-\norm{\WMat\xVec}_1\right)
		\\\label{Equation:EQ1:Proposition:SRNSP+M:Equation:Ineq1}
		&+\frac{3+\kappa\rho}{1-\kappa\rho}\kappa\tau
				\normRHS{\AMat\xVec-\AMat\zVec}.
	\end{align}
	We want to estimate the term $\norm{\WMat\zVec}_1-\norm{\WMat\xVec}_1$ using the $M^+$ criterion.
	Since $\zVec,\xVec\geq 0$, $W_{n,n}=\left(\AMat^T\tVec\right)_n>0$ and $\WMat$ is a diagonal matrix, we have
	\begin{align}\notag
		\norm{\WMat\zVec}_1-\norm{\WMat\xVec}_1
		=&\scprod{\IDVec}{\WMat\zVec}-\scprod{\IDVec}{\WMat\xVec}
		=\scprod{\WMat^T\IDVec}{\zVec-\xVec}
		=\scprod{\WMat\IDVec}{\zVec-\xVec}
		\\\notag
		=&\scprod{\tVec}{\AMat\left(\zVec-\xVec\right)}
		\leq\normDual{\tVec}\normRHS{\AMat\zVec-\AMat\xVec}.
	\end{align}
	Applying this to \refP{Equation:EQ1:Proposition:SRNSP+M:Equation:Ineq1} we get
	\begin{align}\notag
		\norm{\xVec-\zVec}_q
		\leq&2\frac{\left(1+\kappa\rho\right)^2}{1-\kappa\rho}
			\kappa S^{\frac{1}{q}-1}
				d_1\left(\xVec,\Sigma_S\right)
			+\left(
				\frac{\left(1+\kappa\rho\right)^2}{1-\kappa\rho} S^{\frac{1}{q}-1}\max_{n\in\SetOf{N}}\abs{\left(\AMat^T\tVec\right)^{-1}_n}
				\normDual{\tVec}+\frac{3+\kappa\rho}{1-\kappa\rho}\kappa\tau
			\right)\normRHS{\AMat\zVec-\AMat\xVec}.
	\end{align}
	If $q=1$ we can repeat the proof with the improved bound of \thref{Theorem:SRNSPCond}.
\end{proof}
After these auxiliary statements it remains to prove the main result of \refP{Section:main_results}
about the properties of the NNLR minimizer.
\begin{proof}[Proof of \thref{Theorem:NNLDMinimizer}]
	By applying \thref{Proposition:SRNSP+M} with $\xVec$ and $\zVec:=\xVec^\#\geq 0$ we get
	\begin{align}\notag
		\norm{\xVec-\xVec^\#}_q
		\leq&2\frac{\left(1+\kappa\rho\right)^2}{1-\kappa\rho}
			\kappa S^{\frac{1}{q}-1}
				d_1\left(\xVec,\Sigma_S\right)
			+\left(
				\frac{\left(1+\kappa\rho\right)^2}{1-\kappa\rho} S^{\frac{1}{q}-1}\max_{n\in\SetOf{N}}\abs{\left(\AMat^T\tVec\right)^{-1}_n}
				\normDual{\tVec}+\frac{3+\kappa\rho}{1-\kappa\rho}\kappa\tau
			\right)\normRHS{\AMat\xVec^\#-\AMat\xVec}
		\\\notag
		\leq&2\frac{\left(1+\kappa\rho\right)^2}{1-\kappa\rho}
			\kappa S^{\frac{1}{q}-1}
				d_1\left(\xVec,\Sigma_S\right)
		\\\notag
		&+\left(
				\frac{\left(1+\kappa\rho\right)^2}{1-\kappa\rho} S^{\frac{1}{q}-1}\max_{n\in\SetOf{N}}\abs{\left(\AMat^T\tVec\right)^{-1}_n}
				\normDual{\tVec}+\frac{3+\kappa\rho}{1-\kappa\rho}\kappa\tau
			\right)\left(\normRHS{\AMat\xVec^\#-\yVec}+\normRHS{\AMat\xVec-\yVec}\right)
		\\\notag
		\leq&2\frac{\left(1+\kappa\rho\right)^2}{1-\kappa\rho}
			\kappa S^{\frac{1}{q}-1}
				d_1\left(\xVec,\Sigma_S\right)
			+2\left(
				\frac{\left(1+\kappa\rho\right)^2}{1-\kappa\rho} S^{\frac{1}{q}-1}\max_{n\in\SetOf{N}}\abs{\left(\AMat^T\tVec\right)^{-1}_n}
				\normDual{\tVec}+\frac{3+\kappa\rho}{1-\kappa\rho}\kappa\tau
			\right)\normRHS{\AMat\xVec-\yVec},
	\end{align}
	where in the last step we used that $\xVec^\#$ is a minimizer and $\xVec$ is feasible.
	If $q=1$, we can repeat the proof with the improved bound of \thref{Proposition:SRNSP+M}.
\end{proof}
\subsection{Proof of Convergence Guarantee}\label{Subsection:proof_of_convergence_result}
We provide the exact convergence guarantee of \refP{Section:APP}
and deduce it from \cite{APP}. 
\begin{Proposition}[Convergence Guarantee]\label{Proposition:NNLAD_APP}
	Let $\AMat\in\mathbb{R}^{M\times N}$, $\yVec\in\mathbb{R}^M$.
	Further, let $\tau,\sigma\in\left(0,\infty\right)$ be parameters with $\sigma\tau<\norm{\AMat}_{2\rightarrow 2}^{-2}$
	and $\xVec^0\in\mathbb{R}^N$, $\wVec^0\in\mathbb{R}^M$ be initializations.
	Set $\vVec^0:=\xVec^0$ and for all $k\in\mathbb{N}_0$ inductively
	\begin{align}\tag{iter 1}\label{Equation:EQ1:Proposition:NNLAD_APP}
		\wVec^{k+1}:=&\left(\min\left\{1,\abs{w^{k}_m+\sigma\left(\AMat\vVec^{k}-\yVec\right)_m}\right\}
			\sgn{w^{k}_m+\sigma\left(\AMat\vVec^{k}-\yVec\right)_m}\right)_{m\in\SetOf{M}},
		\\\tag{iter 2}\label{Equation:EQ2:Proposition:NNLAD_APP}
		\xVec^{k+1}:=&\ProjToSet{\mathbb{R}^N_+}{\xVec^{k}-\tau\AMat^T\wVec^{k+1}},
		\\\tag{iter 3}\label{Equation:EQ3:Proposition:NNLAD_APP}
		\vVec^{k+1}:=&2\xVec^{k+1}-\xVec^{k},
		\\\notag
		\bar{\xVec}^{k+1}:=&\frac{1}{k+1}\sum_{k'=1}^{k+1}\xVec^{k'}
		\TextAnd
		\bar{\wVec}^{k+1}:=\frac{1}{k+1}\sum_{k'=1}^{k+1}\wVec^{k'}.
	\end{align}
	Then, the following statements hold true:
	\begin{itemize}
		\item[(1)]
			The iterates and averages converge:\\
			The sequences $\left(\xVec^k\right)_{k\in\mathbb{N}}$ and  $\left(\bar{\xVec}^k\right)_{k\in\mathbb{N}}$
			converge to a minimizer of $\argmin{\zVec\geq 0} \norm{\AMat\zVec-\yVec}_1$.
		\item[(2)]
			The iterates and averages are feasible:\\
			We have $\xVec^k\geq 0$, $\bar{\xVec}^k\geq 0$ and $\norm{\wVec^k}_\infty\leq 1$,
			$\norm{\bar{\wVec}^k}_\infty\leq 1$.
		\item[(3)]
			There is a stopping criteria for the iterates:\\
			$\lim_{k\rightarrow\infty}\norm{\AMat\xVec^k -\yVec}_1+\scprod{\yVec}{\wVec^k}=0$
			and $\lim_{k\rightarrow\infty}\AMat^T\wVec^k\geq 0$.
			In particular,
			if $\norm{\AMat\xVec^k -\yVec}_1+\scprod{\yVec}{\wVec^k}\leq 0$ and $\AMat^T\wVec^k\geq 0$,
			then $\xVec^k$ is a minimizer of $\argmin{\zVec\geq 0} \norm{\AMat\zVec-\yVec}_1$.
		\item[(4)]
			The stopping criteria also holds for the averages
			by replacing $\xVec^k$ with $\bar{\xVec}^k$ and $\wVec^k$ with $\bar{\wVec}^k$.
		\item[(5)]
			The averages obey the convergence rate to optimal objective value:\\
			$\norm{\AMat\bar{\xVec}^k-\yVec}_1-\norm{\AMat\xVec^\#-\yVec}_1
			\leq \frac{1}{k}\left(\frac{1}{2\tau}\norm{\xVec^\#-\xVec^0}_2^2
				+\frac{1}{2\sigma}\left(\norm{\wVec^0}_2^2+2\norm{\wVec^0}_1+M\right)\right)$,
			where $\xVec^\#$ is a minimizer of $\argmin{\zVec\geq 0} \norm{\AMat\zVec-\yVec}_1$.
	\end{itemize}
\end{Proposition}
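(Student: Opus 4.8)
The plan is to recognise the scheme \refP{Equation:EQ1:Proposition:NNLAD_APP}--\refP{Equation:EQ3:Proposition:NNLAD_APP} as the primal--dual algorithm of \cite{APP} applied to a saddle-point reformulation of the NNLAD, and then to translate the general convergence theorem of \cite{APP} into the five claims. First I would write the NNLAD in the composite form $\min_{\zVec} G\left(\zVec\right)+F\left(\AMat\zVec\right)$ with $G\left(\zVec\right)=\iota_{\mathbb{R}^N_+}\left(\zVec\right)$ the indicator of the non-negative orthant and $F\left(\uVec\right)=\norm{\uVec-\yVec}_1$. A short computation gives the conjugate $F^\ast\left(\wVec\right)=\scprod{\yVec}{\wVec}+\iota_{\left\{\norm{\cdot}_\infty\leq 1\right\}}\left(\wVec\right)$, so the associated saddle problem is $\min_{\zVec}\max_{\wVec}\scprod{\AMat\zVec}{\wVec}-F^\ast\left(\wVec\right)+G\left(\zVec\right)$.

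Next I would check that the updates are exactly the proximal steps of this saddle problem. The proximal map $\Prox{\sigma F^\ast}{\cdot}$ equals a shift by $-\sigma\yVec$ followed by the Euclidean projection onto the $\ell_\infty$-ball, which acts componentwise as $z\mapsto\min\left\{1,\abs{z}\right\}\sgn{z}$; evaluated at $\wVec^k+\sigma\AMat\vVec^k$ this is precisely \refP{Equation:EQ1:Proposition:NNLAD_APP}. Likewise $\Prox{\tau G}{\cdot}=\ProjToSet{\mathbb{R}^N_+}{\cdot}$ yields \refP{Equation:EQ2:Proposition:NNLAD_APP}, and \refP{Equation:EQ3:Proposition:NNLAD_APP} is the standard over-relaxation step, so that the hypothesis $\sigma\tau<\norm{\AMat}_{2\rightarrow 2}^{-2}$ is the step-size condition required in \cite{APP}. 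Claim (2) is then immediate, since \refP{Equation:EQ2:Proposition:NNLAD_APP} projects onto $\mathbb{R}^N_+$ and \refP{Equation:EQ1:Proposition:NNLAD_APP} onto the $\ell_\infty$-ball, and convexity of both sets passes feasibility to the averages. For claim (1) I would first note that the NNLAD objective is a piecewise-linear convex function bounded below by $0$ on the polyhedron $\mathbb{R}^N_+$, so a minimiser, and hence a saddle point, exists; convergence of both the iterates and the ergodic averages to such a minimiser is then the conclusion of the cited theorem.

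For the stopping criteria (3) and (4) I would compute the dual. With $G^\ast\left(-\AMat^T\wVec\right)=\iota_{\left\{\AMat^T\wVec\geq 0\right\}}\left(\wVec\right)$ the dual functional is $-\scprod{\yVec}{\wVec}$ on the dual-feasible set $\left\{\wVec:\norm{\wVec}_\infty\leq 1,\ \AMat^T\wVec\geq 0\right\}$. Consequently, for any primal-feasible $\xVec^k\geq 0$ and any $\wVec^k$ with $\norm{\wVec^k}_\infty\leq 1$, the quantity $\norm{\AMat\xVec^k-\yVec}_1+\scprod{\yVec}{\wVec^k}$ is exactly the primal--dual gap once $\AMat^T\wVec^k\geq 0$. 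Weak duality makes this gap non-negative and zero only at optimal points, which yields the finite-termination statement; that the gap tends to $0$ and that $\AMat^T\wVec^k$ becomes asymptotically non-negative are the asymptotic parts supplied by \cite{APP}, and the averages inherit the same argument.

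Finally, claim (5) is the ergodic rate. The theorem of \cite{APP} bounds the partial primal--dual gap of the averages by $\frac{1}{k}\left(\frac{1}{2\tau}\norm{\xVec^\#-\xVec^0}_2^2+\frac{1}{2\sigma}\norm{\wVec^\#-\wVec^0}_2^2\right)$ for an arbitrary saddle point $\left(\xVec^\#,\wVec^\#\right)$, and this gap dominates the primal suboptimality $\norm{\AMat\bar{\xVec}^k-\yVec}_1-\norm{\AMat\xVec^\#-\yVec}_1$. To obtain the stated constant I would use $\norm{\wVec^\#}_\infty\leq 1$ to estimate $\norm{\wVec^\#-\wVec^0}_2^2\leq\norm{\wVec^\#}_2^2+2\abs{\scprod{\wVec^\#}{\wVec^0}}+\norm{\wVec^0}_2^2\leq M+2\norm{\wVec^0}_1+\norm{\wVec^0}_2^2$. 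The main obstacle I expect is bookkeeping rather than conceptual: matching the sign and shift conventions of \cite{APP} to the present notation, verifying that the ergodic gap there genuinely controls the one-sided primal suboptimality appearing in (5), and confirming that saddle-point existence (not merely existence of a primal minimiser) is what the iterate-convergence part of the cited theorem actually requires.
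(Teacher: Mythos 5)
Your overall route is the same as the paper's: the same splitting $G=\iota_{\mathbb{R}^N_+}$, $F\left(\cdot\right)=\norm{\cdot-\yVec}_1$, the same conjugates, the identification of the three update rules with the proximal primal--dual iteration of \cite[Theorem~1]{APP}, and the translation of saddle points into NNLAD minimizers and of $\norm{\AMat\xVec^k-\yVec}_1+\scprod{\yVec}{\wVec^k}$ into a primal--dual gap. Your treatment of (1)--(4) matches the paper's; the paper computes $\Prox{\sigma\Fstar{}}{\cdot}$ via Moreau's identity from the prox of $F$, whereas your direct ``shift by $-\sigma\yVec$ then project onto the $\ell_\infty$-ball'' computation is a touch more economical, and the paper takes saddle-point existence directly from \cite[Theorem~1]{APP} rather than from polyhedrality of the primal problem, but these are cosmetic differences.

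The one step that does not go through as written is in (5). If you instantiate the ergodic bound with the singleton $B_2=\left\{\wVec^\#\right\}$ --- which is what your right-hand side $\frac{1}{2\sigma}\norm{\wVec^\#-\wVec^0}_2^2$ presupposes --- the controlled quantity is $f\left(\bar{\xVec}^k,\wVec^\#\right)-f\left(\xVec^\#,\bar{\wVec}^k\right)$, and by feasibility $f\left(\bar{\xVec}^k,\wVec^\#\right)=\scprod{\AMat\bar{\xVec}^k-\yVec}{\wVec^\#}\leq\norm{\AMat\bar{\xVec}^k-\yVec}_1$, with the inequality pointing the wrong way: this partial gap does \emph{not} dominate the primal suboptimality. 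This is precisely the obstacle you flag at the end without resolving. The paper's fix is to take $B_2$ as the entire unit $\ell_\infty$-ball: since $\Fstar{}$ is $+\infty$ off that ball, $\sup_{\wVec\in B_2}f\left(\bar{\xVec}^k,\wVec\right)=\sup_{\wVec\in\mathbb{R}^M}f\left(\bar{\xVec}^k,\wVec\right)=\norm{\AMat\bar{\xVec}^k-\yVec}_1$, while $f\left(\xVec^\#,\bar{\wVec}^k\right)\leq\sup_{\wVec}f\left(\xVec^\#,\wVec\right)=\norm{\AMat\xVec^\#-\yVec}_1$, and $\sup_{\norm{\wVec}_\infty\leq 1}\norm{\wVec-\wVec^0}_2^2=\norm{\wVec^0}_2^2+2\norm{\wVec^0}_1+M$ reproduces exactly the stated constant. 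With that single replacement of the dual test set your argument closes.
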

In order to prove \thref{Proposition:NNLAD_APP} we introduce saddle point problems and technical notations from optimization.
Let $\f{}:\mathbb{R}^N\times\mathbb{R}^M\rightarrow\mathbb{R}\cup \left\{-\infty,\infty\right\}$.
If there exists $\left(\xVec^\#,\wVec^\#\right)$ such that
\begin{align}\notag
	\sup_{\wVec\in\mathbb{R}^M}\inf_{\xVec\in\mathbb{R}^N}\f{\xVec,\wVec}
	=\inf_{\xVec\in\mathbb{R}^N}\f{\xVec,\wVec^\#}
	=\sup_{\wVec\in\mathbb{R}^M}\f{\xVec^\#,\wVec}
	=\inf_{\xVec\in\mathbb{R}^N}\sup_{\wVec\in\mathbb{R}^M}\f{\xVec,\wVec}
\end{align}
holds true, then $\left(\xVec^\#,\wVec^\#\right)$ is called saddle point of $\f{}$.
In general we have for any point $\left(\xVec',\wVec'\right)$
\begin{align}\label{Equation:EQ1:Section:proof_of_convergence_result}
	\inf_{\xVec\in\mathbb{R}^N}\f{\xVec,\wVec'}
	\leq\f{\xVec',\wVec'}
	\leq\sup_{\wVec\in\mathbb{R}^M}\f{\xVec',\wVec}.
\end{align}
This yields that the inequality
\begin{align}\label{Equation:EQ11:Section:proof_of_convergence_result}
	\sup_{\wVec\in\mathbb{R}^M}\inf_{\xVec\in\mathbb{R}^N}\f{\xVec,\wVec}
	\leq\inf_{\xVec\in\mathbb{R}^N}\sup_{\wVec\in\mathbb{R}^M}\f{\xVec,\wVec}
\end{align}
holds true, but not necessarily with equality. The equality is a condition of the existence of a saddle point.
The problem $\inf_{\xVec\in\mathbb{R}^N}\sup_{\wVec\in\mathbb{R}^M}\f{\xVec,\wVec}$ is called the primal problem,
while the problem $\sup_{\wVec\in\mathbb{R}^M}\inf_{\xVec\in\mathbb{R}^N}\f{\xVec,\wVec}$
is called the dual problem. The difference
$\inf_{\xVec\in\mathbb{R}^N}\sup_{\wVec\in\mathbb{R}^M}\f{\xVec,\wVec}
	-\sup_{\wVec\in\mathbb{R}^M}\inf_{\xVec\in\mathbb{R}^N} \f{\xVec,\wVec}\geq 0$
is called the duality gap.
Further, \refP{Equation:EQ1:Section:proof_of_convergence_result} and \refP{Equation:EQ11:Section:proof_of_convergence_result}
yield the logical statement
\begin{align}\label{Equation:EQ2:Section:proof_of_convergence_result}
	\sup_{\wVec\in\mathbb{R}^M}\f{\xVec',\wVec}\leq\inf_{\xVec\in\mathbb{R}^N}\f{\xVec,\wVec'}
	\Rightarrow
	\left(\xVec',\wVec'\right) \text{ is a saddle point.}
\end{align}
Given a function $\F{}:\mathbb{R}^N\rightarrow\mathbb{R}\cup\left\{-\infty,\infty\right\}$
its Fenchel conjugate is the function $\Fstar{}:\mathbb{R}^N\rightarrow\left[-\infty,\infty\right]$,
where $\Fstar{\vVec}:=\sup_{\vVec^\ast\in\mathbb{R}^N} \scprod{\vVec}{\vVec^\ast}-\F{\vVec^\ast}$.
The fenchel conjugate has several interesting properties, however we only require that
if $\F{}$ is proper, convex and lower semicontinuous
\footnote{Note that in general convex and lower semicontinuous need to be defined with the epigraph,
	since we allow $\F{}$ to attain the values $-\infty$ and $\infty$, which might result in
	undefined $\infty-\infty$ terms.
	However, if $\F{}$ is proper as in our case, it can only attain $\infty$ and thus
	the casual definitions of algebra coincide with the definitions used here.}
, then also $\Fstar{}$ is proper, convex and lower semicontinuous
and ${\Fstar{}}^\ast=\F{}$ holds true \cite[Theorem~12.2]{book_convex}.
Given a proper, convex, lower-semicontinuous function $\F{}:\mathbb{R}^N\rightarrow\mathbb{R}$,
the proximal point operator of $\F{}$ is the function
$\Prox{\F{}}{\cdot}:\mathbb{R}^N\rightarrow\mathbb{R}^N$, where
$\Prox{\F{}}{\vVec}$ is the unique minimizer of
\begin{align}\notag
		\argmin{\vVec^\ast\in\mathbb{R}^N}\frac{1}{2}\norm{\vVec^\ast-\vVec}_2^2+\F{\vVec^\ast}
\end{align}
\cite[Theorem~31.5]{book_convex}.
For more information about saddle point problems, the fenchel conjugate and proximal point operators
we refer the reader to \cite{book_convex}.
We have now the necessary means to state \cite[Theorem~1]{APP}.\\
\begin{Theorem}[ {\cite[Theorem~1]{APP}} ]\label{Theorem:APP}
	Let $\F{}:\mathbb{R}^M\rightarrow\left[0,\infty\right)$ be convex and lower semicontinuous.
	Let $\G{}:\mathbb{R}^N\rightarrow\left[0,\infty\right]$ and $\Fstar{}:\mathbb{R}^M\rightarrow\left[0,\infty\right)$
	\footnote{Note that the result in \cite{APP} is only stated if $\F{},\G{}$ map to $\left[0,\infty\right)$.
		From a private conversation with one of the authors we
		learned that the result also holds if $\G{}$ maps to $\left[0,\infty\right]$.}
	be proper, convex and lower semicontinous functions
	and $\AMat\in\mathbb{R}^{M\times N}$.
	Then, the function $\f{\xVec,\wVec}:=\scprod{\AMat\xVec}{\wVec}+\G{\xVec}-\Fstar{\wVec}$
	has a saddle point.\\
	Further, let $\tau,\sigma\in\left(0,\infty\right)$ be parameters with $\sigma\tau<\norm{\AMat}_{2\rightarrow 2}^{-2}$
	and $\xVec^0\in\mathbb{R}^N$, $\wVec^0\in\mathbb{R}^M$ be initializations.
	Set $\vVec^0:=\xVec^0$ and for all $k\in\mathbb{N}_0$ inductively
	\begin{align}\tag{PP 1}\label{Equation:EQ1:Theorem:APP}
		\wVec^{k+1}=&\Prox{\sigma\Fstar{}}{\wVec^{k}+\sigma\AMat\vVec^{k}}
		\\\tag{PP 2}\label{Equation:EQ2:Theorem:APP}
		\xVec^{k+1}=&\Prox{\tau\G{}}{\xVec^{k}-\tau\AMat^T\wVec^{k+1}}
		\\\tag{PP 3}\label{Equation:EQ3:Theorem:APP}
		\vVec^{k+1}=&2\xVec^{k+1}-\xVec^{k}.
	\end{align}
	The sequence $\left(\xVec^k,\wVec^k\right)$ converges to a saddle point	of $\f{}$.
	Lastly, for any bounded sets $B_1\subset\mathbb{R}^N$ and $B_2\subset\mathbb{R}^M$ the averages
	$\left(\bar{\xVec}^k,\bar{\wVec}^k\right)
		:=\left(\frac{1}{k}\sum_{k'=1}^k\xVec^{k'},\frac{1}{k}\sum_{k'=1}^k\wVec^{k'}\right)$ obey
	\begin{align}\notag
		\sup_{\wVec\in B_2}\f{\bar{\xVec}^k,\wVec}
			-\inf_{\xVec\in B_1}\f{\xVec,\bar{\wVec}^k}
		\leq\frac{1}{k}\sup_{\xVec\in B_1,\wVec\in B_2}\left(\frac{1}{2\tau}\norm{\xVec-\xVec^0}_2^2+\frac{1}{2\sigma}\norm{\wVec-\wVec^0}_2^2\right).
	\end{align}
\end{Theorem}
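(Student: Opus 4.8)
The plan is to recognise \thref{Theorem:APP} as the convergence statement for a Chambolle--Pock primal--dual proximal scheme and to establish its three assertions --- existence of a saddle point, convergence of the iterates, and the ergodic gap bound --- from one master inequality for a single step of \refP{Equation:EQ1:Theorem:APP}--\refP{Equation:EQ3:Theorem:APP}. For existence I would exploit that $\F{}$ is finite on all of $\mathbb{R}^M$. Since $\F{}$ is proper, convex and lower semicontinuous, Fenchel--Moreau (\cite[Theorem~12.2]{book_convex}) gives ${\Fstar{}}^\ast=\F{}$, whence
\begin{align}\notag
	\inf_{\xVec\in\mathbb{R}^N}\sup_{\wVec\in\mathbb{R}^M}\f{\xVec,\wVec}
	=\inf_{\xVec\in\mathbb{R}^N}\G{\xVec}+\F{\AMat\xVec}
	\TextAnd
	\sup_{\wVec\in\mathbb{R}^M}\inf_{\xVec\in\mathbb{R}^N}\f{\xVec,\wVec}
	=\sup_{\wVec\in\mathbb{R}^M}-\Fstar{\wVec}-\Gstar{-\AMat^T\wVec}.
\end{align}
Because $\mathrm{dom}\,\F{}=\mathbb{R}^M$, the constraint qualification $\left(\AMat\,\mathrm{dom}\,\G{}\right)\cap\mathrm{int}\left(\mathrm{dom}\,\F{}\right)\neq\emptyset$ is automatic, so the Fenchel duality theorem \cite[Theorem~31.1]{book_convex} yields zero duality gap and a primal--dual optimal pair $\left(\xVec^\#,\wVec^\#\right)$; by the implication \refP{Equation:EQ2:Section:proof_of_convergence_result} this pair is a saddle point of $\f{}$.

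The master inequality rests on the prox subgradient estimate: if $\hat{\uVec}=\Prox{s\,h}{\zVec}$ for a proper convex lower semicontinuous $h$ and $s>0$, then $s^{-1}\left(\zVec-\hat{\uVec}\right)\in\partial h\left(\hat{\uVec}\right)$, so $h\left(\hat{\uVec}\right)-h\left(\uVec\right)\leq s^{-1}\scprod{\zVec-\hat{\uVec}}{\hat{\uVec}-\uVec}$ for every $\uVec$. I would apply this to \refP{Equation:EQ1:Theorem:APP} with $h=\Fstar{}$, $s=\sigma$, $\zVec=\wVec^k+\sigma\AMat\vVec^k$, and to \refP{Equation:EQ2:Theorem:APP} with $h=\G{}$, $s=\tau$, $\zVec=\xVec^k-\tau\AMat^T\wVec^{k+1}$, rewrite each inner product through the polarisation identity $2\scprod{a-b}{b-c}=\norm{a-c}_2^2-\norm{b-c}_2^2-\norm{a-b}_2^2$, add the two estimates, and substitute the extrapolation $\vVec^k=2\xVec^k-\xVec^{k-1}$. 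The surviving bilinear term is $\scprod{\AMat\left(\xVec^{k+1}-\vVec^k\right)}{\wVec-\wVec^{k+1}}$, which I split via $\xVec^{k+1}-\vVec^k=\left(\xVec^{k+1}-\xVec^k\right)-\left(\xVec^k-\xVec^{k-1}\right)$. Writing $\Theta_k:=\scprod{\AMat\left(\xVec^k-\xVec^{k-1}\right)}{\wVec-\wVec^k}$ and $R_k:=\scprod{\AMat\left(\xVec^k-\xVec^{k-1}\right)}{\wVec^k-\wVec^{k+1}}$, with the convention $\Theta_0=R_0=0$ forced by $\vVec^0=\xVec^0$, this gives for every $\left(\xVec,\wVec\right)$:
\begin{align}\notag
	\f{\xVec^{k+1},\wVec}-\f{\xVec,\wVec^{k+1}}
	\leq& \frac{1}{2\tau}\norm{\xVec-\xVec^k}_2^2-\frac{1}{2\tau}\norm{\xVec-\xVec^{k+1}}_2^2
		+\frac{1}{2\sigma}\norm{\wVec-\wVec^k}_2^2-\frac{1}{2\sigma}\norm{\wVec-\wVec^{k+1}}_2^2
	\\\notag
	&-\frac{1}{2\tau}\norm{\xVec^{k+1}-\xVec^k}_2^2-\frac{1}{2\sigma}\norm{\wVec^{k+1}-\wVec^k}_2^2
		+\Theta_{k+1}-\Theta_k-R_k.
\end{align}

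Next I would control the residual and sum. Cauchy--Schwarz, the bound $\norm{\AMat\left(\xVec^k-\xVec^{k-1}\right)}_2\leq\norm{\AMat}_{2\rightarrow2}\norm{\xVec^k-\xVec^{k-1}}_2$, and Young's inequality give $R_k\leq\frac{\sigma\norm{\AMat}_{2\rightarrow2}^2}{2}\norm{\xVec^k-\xVec^{k-1}}_2^2+\frac{1}{2\sigma}\norm{\wVec^k-\wVec^{k+1}}_2^2$; the second piece cancels the negative $\wVec$-increment, and $\sigma\tau\norm{\AMat}_{2\rightarrow2}^2<1$ forces $\frac{\sigma\norm{\AMat}_{2\rightarrow2}^2}{2}<\frac{1}{2\tau}$, so the $\xVec$-increments are strictly dominated when consecutive indices are combined. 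Summing over $k=0,\dots,K-1$ telescopes the distance terms and the $\Theta_k$, and after absorbing the boundary term $\Theta_K$ by one more Young step one obtains $\sum_{k=1}^{K}\left(\f{\xVec^k,\wVec}-\f{\xVec,\wVec^k}\right)\leq\frac{1}{2\tau}\norm{\xVec-\xVec^0}_2^2+\frac{1}{2\sigma}\norm{\wVec-\wVec^0}_2^2$. Dividing by $K$ and using convexity of $\f{\cdot,\wVec}$ and concavity of $\f{\xVec,\cdot}$ with Jensen's inequality on the averages yields $\f{\bar{\xVec}^K,\wVec}-\f{\xVec,\bar{\wVec}^K}\leq\frac{1}{K}\left(\frac{1}{2\tau}\norm{\xVec-\xVec^0}_2^2+\frac{1}{2\sigma}\norm{\wVec-\wVec^0}_2^2\right)$ for every $\left(\xVec,\wVec\right)$; taking the supremum over $\left(\xVec,\wVec\right)\in B_1\times B_2$ decouples the left side into $\sup_{\wVec\in B_2}\f{\bar{\xVec}^K,\wVec}-\inf_{\xVec\in B_1}\f{\xVec,\bar{\wVec}^K}$ and gives exactly the stated ergodic bound. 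For iterate convergence I would specialise the master inequality to the saddle point, where $\f{\xVec^{k+1},\wVec^\#}-\f{\xVec^\#,\wVec^{k+1}}\geq0$, so that $\frac{1}{2\tau}\norm{\xVec^\#-\xVec^k}_2^2+\frac{1}{2\sigma}\norm{\wVec^\#-\wVec^k}_2^2$ is, up to the controlled $\Theta_k$ and residual terms, nonincreasing; this Fejér-type monotonicity bounds $\left(\xVec^k,\wVec^k\right)$ and forces $\norm{\xVec^{k+1}-\xVec^k}_2,\norm{\wVec^{k+1}-\wVec^k}_2\rightarrow0$. Any cluster point is a saddle point, since the prox optimality conditions pass to the limit by lower semicontinuity of $\G{}$ and $\Fstar{}$, and Fejér monotonicity against that saddle point then upgrades the subsequential limit to convergence of the whole sequence.

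The main obstacle I expect is the bilinear bookkeeping in the last two paragraphs: making the extrapolation produce a clean telescoping $\Theta_{k+1}-\Theta_k$ together with a residual $R_k$ whose $\wVec$-part exactly cancels a negative square while its $\xVec$-part stays strictly below the available $\frac{1}{2\tau}$ budget, and simultaneously absorbing the boundary term $\Theta_K$, all under the single scalar condition $\sigma\tau\norm{\AMat}_{2\rightarrow2}^2<1$. This one estimate must do double duty --- delivering both the $\OrderOf{K^{-1}}$ ergodic rate and the Fejér monotonicity behind iterate convergence --- so the constants have to be tracked exactly rather than merely bounded.
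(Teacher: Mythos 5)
The paper contains no internal proof of \thref{Theorem:APP} to compare you against: the result is imported wholesale from \cite[Theorem~1]{APP}, with the extension to $\left[0,\infty\right]$-valued $\G{}$ justified only by the footnote's private communication. Your reconstruction is therefore by necessity a different route, and its centerpiece is sound: the prox subgradient estimate, the cancellation of the bilinear terms down to $\scprod{\AMat\left(\xVec^{k+1}-\vVec^k\right)}{\wVec-\wVec^{k+1}}$, the split into $\Theta_{k+1}-\Theta_k-R_k$ with $\Theta_0=R_0=0$ forced by $\vVec^0=\xVec^0$, the absorption of $R_k$ and of the boundary term $\Theta_K$ under $\sigma\tau\norm{\AMat}_{2\rightarrow 2}^2<1$, and the Jensen step on the averages are exactly the Chambolle--Pock analysis and do deliver the stated ergodic gap bound. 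If completed, your argument would even cover extended-valued $\G{}$ honestly, since nothing in it requires $\G{}$ to be finite --- which is more than the paper itself offers.

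Two steps, however, fail as written. First, saddle-point existence: the Fenchel duality theorem under the qualification $\AMat\left(\mathrm{dom}\,\G{}\right)\cap\mathrm{int}\left(\mathrm{dom}\,\F{}\right)\neq\emptyset$ yields zero duality gap and attainment of the \emph{dual} supremum, not of the primal infimum, so your claimed ``primal--dual optimal pair'' is not delivered by \cite[Theorem~31.1]{book_convex}. The gap is real: take $M=N=1$, $\AMat=0$, $\G{x}=\ExpE^{-x}$ and $\F{w}=\frac{1}{2}w^2$, so $\Fstar{w}=\frac{1}{2}w^2$; all stated hypotheses hold, yet $\f{x,w}=\ExpE^{-x}-\frac{1}{2}w^2$ has no saddle point because the primal infimum $0$ is unattained, and the dual-side qualification fails too ($\mathrm{dom}\,\Gstar{}=\left(-\infty,0\right]$ while $\AMat^T$ has range $\ZeroSet$). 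So this part cannot be closed from the stated hypotheses by your duality step alone; in the paper's actual application existence holds for separate, polyhedral reasons ($\G{}$ an orthant indicator, $\F{}$ an $\ell_1$-residual, cf.\ \thref{Lemma:Saddle_to_Min} --- where, note, $\Fstar{}$ is extended-valued, so the statement's typing is loose anyway). Second, iterate convergence: your Young split $R_k\leq\frac{\sigma\norm{\AMat}_{2\rightarrow2}^2}{2}\norm{\xVec^k-\xVec^{k-1}}_2^2+\frac{1}{2\sigma}\norm{\wVec^k-\wVec^{k+1}}_2^2$ spends the \emph{entire} negative square $-\frac{1}{2\sigma}\norm{\wVec^{k+1}-\wVec^k}_2^2$, so your ledger proves summability only of the $\xVec$-increments and says nothing about $\wVec^{k+1}-\wVec^k\rightarrow 0$; but your cluster-point argument needs exactly that to pass the optimality condition of \refP{Equation:EQ1:Theorem:APP} to the limit. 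The repair is the balanced split $\norm{\AMat}_{2\rightarrow2}\,ab\leq\frac{\norm{\AMat}_{2\rightarrow2}}{2}\sqrt{\sigma/\tau}\,a^2+\frac{\norm{\AMat}_{2\rightarrow2}}{2}\sqrt{\tau/\sigma}\,b^2$, which under $\norm{\AMat}_{2\rightarrow2}\sqrt{\sigma\tau}<1$ retains a positive $\left(1-\norm{\AMat}_{2\rightarrow2}\sqrt{\sigma\tau}\right)$ fraction of \emph{both} increment sums --- the refinement Chambolle--Pock actually use so that the one master inequality serves both the rate and the Fej\'er-type convergence, as you intended.
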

By a proper choice of $\F{}$ and $\G{}$ any saddle point of $\f{}$ will also give a minimizer of NNLAD. We denote this proper choice in the next lemma.
\begin{Lemma}[ Relation of Saddle point and NNLAD ]\label{Lemma:Saddle_to_Min}
	Let $\yVec\in\mathbb{R}^M$ as well as
	\begin{align}\notag
		\F{\wVec}:=\norm{\wVec-\yVec}_1
		\TextAnd \G{\xVec}
		:=\begin{Bmatrix}
			0 & \TextIf & \xVec\geq 0
			\\
			\infty & \TextElse & 
		\end{Bmatrix}
		\TextAnd \f{\xVec,\wVec}:=\scprod{\AMat\xVec}{\wVec}+\G{\xVec}-\Fstar{\wVec}.
	\end{align}
	Then $\F{},\G{},\Fstar{},\Gstar{}$ are proper, convex and lower semicontinuous. $\Fstar{}$ and $\Gstar{}$ are given by
	\begin{align}\notag
		\Fstar{\wVec}
		=\begin{Bmatrix}
			\scprod{\wVec}{\yVec} & \TextIf & \norm{\wVec}_\infty\leq 1
			\\
			\infty & \TextIf & \norm{\wVec}_\infty>1
		\end{Bmatrix}
		\TextAnd\Gstar{\xVec}
		=\begin{Bmatrix}
			0 & \TextIf & \xVec\leq 0
			\\
			\infty & \TextElse &
		\end{Bmatrix}.
	\end{align}
	Further we have for $\xVec'\in\mathbb{R}^N$, $\wVec'\in\mathbb{R}^M$
	\begin{align}\notag
		\sup_{\wVec\in\mathbb{R}^M}\f{\xVec',\wVec}
		=&\begin{Bmatrix}
			\norm{\AMat\xVec'-\yVec}_1 & \TextIf & \xVec'\geq 0
			\\
			\infty & \TextElse &
		\end{Bmatrix}
		\\\notag
		\TextAnd\inf_{\xVec\in\mathbb{R}^N}\f{\xVec,\wVec'}
		=&\begin{Bmatrix}
			-\scprod{\wVec'}{\yVec} & \TextIf & \norm{\wVec'}_\infty\leq 1 \TextAnd \AMat^T\wVec'\geq 0
			\\
			-\infty & \TextElse & 
		\end{Bmatrix}.
	\end{align}
\end{Lemma}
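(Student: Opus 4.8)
The plan is to reduce all four assertions to standard Fenchel-conjugate computations, treating the regularity claims first and then the two conjugates, from which the primal and dual reductions of $\f{}$ follow immediately. The function $\F{\wVec}=\norm{\wVec-\yVec}_1$ is a norm precomposed with a translation, hence finite, continuous and convex on all of $\mathbb{R}^M$, so it is proper, convex and lower semicontinuous; and $\G{}$ is the indicator of the closed convex cone $\mathbb{R}^N_+$, hence also proper, convex and lower semicontinuous. Given these, $\Fstar{}$ and $\Gstar{}$ are automatically proper, convex and lower semicontinuous by the biconjugation theorem \cite[Theorem~12.2]{book_convex}, which additionally supplies the identity ${\Fstar{}}^\ast=\F{}$ that I will use below.

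Next I would compute the two conjugates. For $\Fstar{}$, substituting $\vVec=\uVec-\yVec$ in $\Fstar{\wVec}=\sup_{\uVec}\scprod{\wVec}{\uVec}-\norm{\uVec-\yVec}_1$ splits off the linear term $\scprod{\wVec}{\yVec}$ and leaves $\sup_{\vVec}\left(\scprod{\wVec}{\vVec}-\norm{\vVec}_1\right)$, the conjugate of $\norm{\cdot}_1$, which is the indicator of the $\ell_\infty$ unit ball; this gives the stated formula for $\Fstar{}$. For $\Gstar{}$, the definition reduces to the support function $\sup_{\uVec\geq 0}\scprod{\xVec}{\uVec}$ of $\mathbb{R}^N_+$, which equals $0$ when $\xVec\leq 0$ (attained at $\uVec=0$) and $\infty$ as soon as some $x_n>0$ (let that coordinate tend to $\infty$); this is exactly the claimed indicator of the nonpositive orthant.

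Finally I would evaluate the two partial extrema of $\f{\xVec,\wVec}=\scprod{\AMat\xVec}{\wVec}+\G{\xVec}-\Fstar{\wVec}$. For $\sup_{\wVec}\f{\xVec',\wVec}$ with $\xVec'\geq 0$ the term $\G{\xVec'}$ vanishes and the supremum is $\sup_{\wVec}\left(\scprod{\AMat\xVec'}{\wVec}-\Fstar{\wVec}\right)={\Fstar{}}^\ast(\AMat\xVec')=\F{\AMat\xVec'}=\norm{\AMat\xVec'-\yVec}_1$, equivalently the support function of the $\ell_\infty$ ball evaluated at $\AMat\xVec'-\yVec$. For $\inf_{\xVec}\f{\xVec,\wVec'}$ with $\norm{\wVec'}_\infty\leq 1$ I would write $\scprod{\AMat\xVec}{\wVec'}=\scprod{\xVec}{\AMat^T\wVec'}$, pull out the constant $-\Fstar{\wVec'}=-\scprod{\wVec'}{\yVec}$, and recognize the remaining $\inf_{\xVec}\left(\scprod{\xVec}{\AMat^T\wVec'}+\G{\xVec}\right)=-\Gstar{-\AMat^T\wVec'}$, which by the formula for $\Gstar{}$ is $0$ exactly when $\AMat^T\wVec'\geq 0$ and $-\infty$ otherwise.

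The main obstacle, and the only genuine subtlety, is the degenerate branches where an indicator takes the value $\infty$, which threaten indeterminate $\infty-\infty$ values of $\f{}$. I would resolve these by evaluating $\f{}$ at a convenient corner rather than taking a limit: when $\xVec'\not\geq 0$, the choice $\wVec=0$ (where $\Fstar{0}=0$) gives $\f{\xVec',0}=\infty$, forcing the supremum to be $\infty$; when $\norm{\wVec'}_\infty>1$, the choice $\xVec=0$ (where $\G{0}=0$) gives $\f{0,\wVec'}=-\infty$, forcing the infimum to be $-\infty$. This pins down the extremal value in every case without ever manipulating an undefined expression.
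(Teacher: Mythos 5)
Your proposal is correct and follows essentially the same route as the paper's proof: establish that $\F{}$ and $\G{}$ are proper, convex and lower semicontinuous, compute $\Gstar{}$ as the support function of $\mathbb{R}^N_+$ and $\Fstar{}$ by splitting off the translation (the paper derives the conjugate of $\norm{\cdot}_1$ coordinate-wise where you cite it as known), and then reduce the partial extrema to $\G{\xVec'}+\F{\AMat\xVec'}$ via ${\Fstar{}}^\ast=\F{}$ and to $-\Fstar{\wVec'}-\Gstar{-\AMat^T\wVec'}$ respectively. Your explicit treatment of the degenerate $\infty-\infty$ branches by evaluating $\f{}$ at $\wVec=0$ and $\xVec=0$ is a small but welcome extra precaution that the paper leaves implicit.
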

\begin{proof}
	From the definition it is clear that $\F{},\G{}$ are proper, convex and lower semicontinuous.
	Hence $\Fstar{}$ and $\Gstar{}$ are also proper, convex and lower semicontinuous.
	By a direct calculation we have
	\begin{align}\notag
		\Gstar{\xVec}
		=\sup_{\xVec^\ast\in\mathbb{R}^N}\scprod{\xVec}{\xVec^\ast}-\G{\xVec^\ast}
		=\sup_{\xVec^\ast\geq 0}\scprod{\xVec}{\xVec^\ast}
		=\begin{Bmatrix}
			0 & \TextIf & \xVec\leq 0
			\\
			\infty & \TextElse &
		\end{Bmatrix}.
	\end{align}
	For the other fenchel conjugate we calculate
	\begin{align}\notag
		\Fstar{\wVec}
		=&\sup_{\wVec^\ast\in\mathbb{R}^M}\scprod{\wVec}{\wVec^\ast}-\norm{\wVec^\ast-\yVec}_1
		=\scprod{\wVec}{\yVec}+\sup_{\wVec^\ast\in\mathbb{R}^M}\scprod{\wVec}{\wVec^\ast}-\norm{\wVec^\ast}_1
		\\\notag
		=&\scprod{\wVec}{\yVec}+
			\sup_{\wVec^\ast\in\mathbb{R}^M}\sum_{m\in\SetOf{M}}w_m w^\ast_m-\abs{w^\ast_m}
		=\scprod{\wVec}{\yVec}+\sum_{m\in\SetOf{M}}
			\sup_{w^\ast\in\mathbb{R}}w_m w^\ast-\abs{w^\ast},
	\end{align}
	where in the last step we used that each summand depends on exactly one component of $\wVec^\ast$.
	Now $w_m w^\ast-\abs{w^\ast}$ is larger for $\sgn{w^\ast}=\sgn{w_m}$, than for $\sgn{w^\ast}\neq\sgn{w_m}$. Hence,
	we can restrict the supremum to the case $\sgn{w^\ast}=\sgn{w_m}$ and obtain
	\begin{align}
		\notag
		\Fstar{\wVec}
		=&\scprod{\wVec}{\yVec}+\sum_{m\in\SetOf{M}}
			\sup_{w^\ast\in\mathbb{R}_+}\left(\abs{w_m}-1\right)w^\ast
		=\scprod{\wVec}{\yVec}+\sum_{m\in\SetOf{M}}
			\begin{Bmatrix}
				0 & \TextIf & \abs{w_m}\leq 1
				\\
				\infty & \TextIf & \abs{w_m}>1
			\end{Bmatrix}
		\\\notag
		=&\begin{Bmatrix}
				\scprod{\wVec}{\yVec} & \TextIf & \norm{\wVec}_\infty\leq 1
				\\
				\infty & \TextIf & \norm{\wVec}_\infty>1
			\end{Bmatrix}.
	\end{align}
	Since $\F{}$ is proper, convex and lower semicontinuous, we have ${\Fstar{}}^\ast=\F{}$. Thus,
	\begin{align}\notag
		\sup_{\wVec\in\mathbb{R}^M}\f{\xVec',\wVec}
		=\G{\xVec'}+\sup_{\wVec\in\mathbb{R}^M}\scprod{\AMat\xVec'}{\wVec}-\Fstar{\wVec}
		=\G{\xVec'}+\F{\AMat\xVec'}
		=\begin{Bmatrix}
			\norm{\AMat\xVec'-\yVec}_1 & \TextIf & \xVec'\geq 0
			\\
			\infty & \TextElse &
		\end{Bmatrix}.
	\end{align}
	And lastly we have
	\begin{align}\notag
		\inf_{\xVec\in\mathbb{R}^N}\f{\xVec,\wVec'}
		=&-\Fstar{\wVec'}+\inf_{\xVec\in\mathbb{R}^N}\scprod{\xVec}{\AMat^T\wVec'}+\G{\xVec}
		=-\Fstar{\wVec'}-\sup_{\xVec\in\mathbb{R}^N}\scprod{\xVec}{-\AMat^T\wVec'}-\G{\xVec}
		\\\notag
		=&-\Fstar{\wVec'}-\Gstar{-\AMat^T\wVec'}
		=\begin{Bmatrix}
			-\scprod{\wVec'}{\yVec} & \TextIf & \norm{\wVec'}_\infty\leq 1 \TextAnd \AMat^T\wVec'\geq 0
			\\
			-\infty & \TextElse & 
		\end{Bmatrix},
	\end{align}
	which finishes the proof.	
\end{proof}
Further, we need to calculate the iterates for this choice of $\F{}$ and $\G{}$ and thus the proximal point operators.
It is well known that the proximal point operator of the $\ell_1$-norm is the soft thresholding operator.
Using Moreau's identity \cite[Theorem~31.5]{book_convex} one can find the desired iterates directly. See for instance
\cite[Example~15.7]{Introduction_CS}. For the sake of completeness we added a proof.
\begin{Lemma}\label{Lemma:Prox_Op}
	Let $\tau,\sigma>0$, $\yVec\in\mathbb{R}^M$ as well as
	\begin{align}\notag
		\F{\wVec}:=\norm{\wVec-\yVec}_1
		\TextAnd \G{\xVec}
		:=\begin{Bmatrix}
			0 & \TextIf & \xVec\in\mathbb{R}^N_+
			\\
			\infty & \TextElse & 
		\end{Bmatrix}.
	\end{align}
	Then
	\begin{align}\label{Equation:Lemma:Prox_Op:EQ1}
		\Prox{\sigma\F{}}{\wVec}
		=\left(\begin{Bmatrix}
			y_m & \TextIf & \abs{w_m-y_m}\leq \sigma
			\\w_m-\sigma & \TextIf & w_m-y_m>\sigma
			\\w_m+\sigma & \TextIf & w_m-y_m<-\sigma
		\end{Bmatrix}\right)_{m\in\SetOf{M}}
		\TextForAll\wVec\in\mathbb{R}^M
	\end{align}
	and in particular
	\begin{align}
		\Prox{\tau\G{}}{\xVec}
		=\ProjToSet{\mathbb{R}^N_+}{\xVec}
		\TextAnd
		\Prox{\sigma\Fstar{}}{\wVec}
		=\left(\min\left\{1,\abs{w_m-\sigma y_m}\right\}\sgn{w_m-\sigma y_m}\right)_{m\in\SetOf{M}}.
		\label{Equation:Lemma:Prox_Op:EQ2}
	\end{align}
\end{Lemma}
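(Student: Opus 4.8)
The plan is to establish \eqref{Equation:Lemma:Prox_Op:EQ1} first, since both identities in \eqref{Equation:Lemma:Prox_Op:EQ2} then follow with little extra work. For the soft thresholding formula I would start from the definition: $\Prox{\sigma\F{}}{\wVec}$ is the unique minimizer of $\frac{1}{2}\norm{\uVec-\wVec}_2^2+\sigma\norm{\uVec-\yVec}_1$. This objective is separable across the $M$ coordinates, so it suffices to minimize the scalar function $u\mapsto\frac{1}{2}(u-w_m)^2+\sigma\abs{u-y_m}$ independently for each $m\in\SetOf{M}$.

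For the scalar problem I would substitute $t:=u-y_m$ and $a:=w_m-y_m$, reducing it to minimizing $\frac{1}{2}(t-a)^2+\sigma\abs{t}$, whose minimizer is the classical soft threshold $\sgn{a}\max\{0,\abs{a}-\sigma\}$. This is verified directly from the optimality condition $0\in t-a+\sigma\partial\abs{t}$ by distinguishing the three cases $\abs{a}\leq\sigma$, $a>\sigma$ and $a<-\sigma$. Re-substituting $u=y_m+t$ recovers exactly the three branches of \eqref{Equation:Lemma:Prox_Op:EQ1}.

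The first identity in \eqref{Equation:Lemma:Prox_Op:EQ2} is then immediate: since $\tau>0$ we have $\tau\G{}=\G{}$, which is the indicator function of the closed convex set $\mathbb{R}^N_+$, so its proximal operator is by definition the Euclidean projection $\ProjToSet{\mathbb{R}^N_+}{\cdot}$ introduced in the preliminaries. For the second identity I would invoke the extended Moreau decomposition from \cite[Theorem~31.5]{book_convex}, which in the form I need reads $\Prox{\sigma\Fstar{}}{\wVec}=\wVec-\sigma\Prox{\sigma^{-1}\F{}}{\sigma^{-1}\wVec}$, using that ${\Fstar{}}^\ast=\F{}$ because $\F{}$ is proper, convex and lower semicontinuous. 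Substituting \eqref{Equation:Lemma:Prox_Op:EQ1} with parameter $\sigma^{-1}$ evaluated at the point $\sigma^{-1}\wVec$, and noting that the defining inequalities transform as $\abs{\sigma^{-1}w_m-y_m}\leq\sigma^{-1}\iff\abs{w_m-\sigma y_m}\leq 1$, the three cases collapse to the output $w_m-\sigma y_m$ when $\abs{w_m-\sigma y_m}\leq 1$ and to $\pm 1$ otherwise, which is precisely $\min\{1,\abs{w_m-\sigma y_m}\}\sgn{w_m-\sigma y_m}$.

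I expect no serious obstacle here; the only point requiring care is the bookkeeping of the scaling factors $\sigma$ and $\sigma^{-1}$ in Moreau's identity, where a slip would produce the wrong threshold. As an independent cross-check one could instead compute $\Prox{\sigma\Fstar{}}$ componentwise directly from the explicit form of $\Fstar{}$ obtained in \thref{Lemma:Saddle_to_Min}, which should yield the same expression.
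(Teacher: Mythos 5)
Your proposal is correct and follows essentially the same route as the paper: both establish \refP{Equation:Lemma:Prox_Op:EQ1} by verifying the subdifferential optimality condition in the three cases (your scalar reduction via $t=u-y_m$ is just a tidier presentation of the same check), both identify $\Prox{\tau\G{}}{\cdot}$ as the projection onto $\mathbb{R}^N_+$, and both obtain $\Prox{\sigma\Fstar{}}{\cdot}$ from Moreau's identity in the form $\wVec-\sigma\Prox{\sigma^{-1}\F{}}{\sigma^{-1}\wVec}$ with the same bookkeeping of the scaling factors. No gaps.
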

\begin{proof}
	The proximal point operator of an indicator function of a closed, convex set is always the projection to the set,
	hence the identity for $\G{}$ follows.
	For $\F{}$ this is more difficult. Note that $\wVec'$ is a minimizer of
	\begin{align}\notag
		\argmin{\wVec^\ast\in\mathbb{R}^M}\frac{1}{2}\norm{\wVec^\ast-\wVec}_2^2+\sigma\norm{\wVec^\ast-\yVec}_1
	\end{align}
	if and only if zero is in the subdifferential at $\wVec'$, which is given by the set
	\begin{align}\notag
		\left\{\wVec'-\wVec+\sigma \tilde{\wVec}\in\mathbb{R}^M\TextSuchThat
			\begin{Bmatrix}
				\tilde{w}_m=\sgn{w'_m-y_m} & \TextIf & w'_m-y_m\neq 0
				\\\tilde{w}_m\in\left[-1,1\right] & \TextIf & w'_m-y_m=0
			\end{Bmatrix}
		\right\}.
	\end{align}
	Since the minimizer for the proximal operator is always unique,
	it remains to verify that zero is in the subdifferential at the vector from the statement.
	So let $\wVec'$ be the vector from the right hand side of \refP{Equation:Lemma:Prox_Op:EQ1}
	and $m\in\SetOf{M}$.
	If $w_m-y_m>\sigma$, then
	$w'_m=w_m-\sigma>y_m$ and thus
	\begin{align}\notag
		\left(w'_m-w_m\right)+\sigma\sgn{w'_m-y_m}
		=-\sigma+\sigma=0.
	\end{align}
	If $w_m-y_m<-\sigma$, then
	$w'_m=w_m+\sigma<y_m$ and thus
	\begin{align}\notag
		\left(w'_m-w_m\right)+\sigma\sgn{w'_m-y_m}
		=\sigma-\sigma=0.
	\end{align}
	If $\abs{w_m-y_m}\leq \sigma$,
	we have $w'_m=y_m$ and
	\begin{align}\notag
		\abs{\left(w'_m-w_m\right)}=\abs{y_m-w_m}\leq \sigma,
	\end{align}
	and hence $\sigma^{-1}\left(w'_m-w_m\right)\in\left[-1,1\right]$.
	It follows that zero is a possible subgradient, i.e. the subdifferential contains zero.
	Hence, $\wVec'$ is the unique minimizer.
	To prove \refP{Equation:Lemma:Prox_Op:EQ2}, we apply the first statement to calculate
	\begin{align}\notag
		\Prox{\sigma^{-1}\F{}}{\sigma^{-1}\wVec}
		=\left(\begin{Bmatrix}
			y_m & \TextIf & \abs{w_m-\sigma y_m}\leq 1
			\\\sigma^{-1}\left(w_m-1\right) & \TextIf & w_m-\sigma y_m>1
			\\\sigma^{-1}\left(w_m+1\right)& \TextIf & w_m-\sigma y_m<-1
		\end{Bmatrix}\right)_{m\in\SetOf{M}}.
	\end{align}
	It follows that
	\begin{align}\notag
		\wVec-\sigma\Prox{\sigma^{-1}\F{}}{\sigma^{-1}\wVec}
		=&\left(\begin{Bmatrix}
			w_m-\sigma y_m & \TextIf & \abs{w_m-\sigma y_m}\leq 1
			\\ 1 & \TextIf & w_m-\sigma y_m>1
			\\ -1 & \TextIf & w_m-\sigma y_m<-1
		\end{Bmatrix}\right)_{m\in\SetOf{M}}
		\\\notag
		=&\left(\min\left\{1,\abs{w_m-\sigma y_m}\right\}\sgn{w_m-\sigma y_m}\right)_{m\in\SetOf{M}}.
	\end{align}
	Using Moreau's identity \cite[Theorem~31.5]{book_convex} yields
	\begin{align}\notag
		\Prox{\sigma\Fstar{}}{\wVec}
		=&\wVec-\Prox{\left(\sigma\Fstar{}\right)^\ast}{\wVec}
		=\wVec-\sigma\Prox{\sigma^{-1}\F{}}{\sigma^{-1}\wVec}
		\\\notag
		=&\left(\min\left\{1,\abs{\wVec_m-\sigma\yVec_m}\right\}\sgn{\wVec_m-\sigma\yVec_m}\right)_{m\in\SetOf{M}},
	\end{align}
	which finishes the proof.
\end{proof}
After proving these auxiliary statements it remains to prove the main result of \refP{Section:APP} about
the convergence to a minimizer of NNLAD.
\begin{proof}[Proof of \thref{Proposition:NNLAD_APP}]
	We set
	\begin{align}\notag
		\F{\wVec}:=\norm{\wVec-\yVec}_1
		\TextAnd \G{\xVec}
		:=\begin{Bmatrix}
			0 & \TextIf & \xVec\geq 0
			\\
			\infty & \TextElse & 
		\end{Bmatrix}
		\TextAnd \f{\xVec,\wVec}:=\scprod{\AMat\xVec}{\wVec}+\G{\xVec}-\Fstar{\wVec}.
	\end{align}
	By \thref{Lemma:Saddle_to_Min} $\F{},\G{},\Fstar{},\Gstar{}$ are proper, convex and lower-semicontinuous.
	Thus, the requirements of \thref{Theorem:APP} are fulfilled, which yields that
	$\f{}$ has a saddle point and thus the duality gap is zero.
	By \thref{Lemma:Saddle_to_Min} and the fact that the duality gap is zero,
	it follows that
	\begin{align}\label{Equation:saddle:Proposition:NNLAD_APP}
		\left(\xVec^\#,\wVec^\#\right) \text{is a saddle point}
		\Leftrightarrow
		\xVec^\# \in\argmin{\xVec\geq 0}\norm{\AMat\xVec-\yVec}_1
		\TextAnd\wVec^\#\in\argmax{\wVec\in\mathbb{R}^M:\AMat^T\wVec\geq 0,\norm{\wVec}_\infty\leq 1}
			-\scprod{\wVec}{\yVec}.
	\end{align}
	If $\left(\xVec',\wVec'\right)$ are any points with
	$\xVec'\geq 0$, $\norm{\wVec'}_\infty\leq 1$ and $\AMat^T\wVec'\geq 0$, then we have
	by \thref{Lemma:Saddle_to_Min}
	\begin{align}\notag
		\norm{\AMat\xVec'-\yVec}_1+\scprod{\yVec}{\wVec'}
		=\sup_{\wVec\in\mathbb{R}^M}\f{\xVec',\wVec}
			-\inf_{\xVec\in\mathbb{R}^N}\f{\xVec,\wVec'}.
	\end{align}
	If this is non-positive, \refP{Equation:EQ2:Section:proof_of_convergence_result} and
	\refP{Equation:saddle:Proposition:NNLAD_APP} yield that $\xVec'$ is a minimizer of NNLAD.
	Hence, it holds true that
	\begin{align}\label{Equation:stop:Proposition:NNLAD_APP}
		\norm{\AMat\xVec'-\yVec}_1+\scprod{\yVec}{\wVec'}\leq 0
		\TextAnd\xVec'\geq 0
		\TextAnd\norm{\wVec'}_\infty\leq 1
		\TextAnd\AMat^T\wVec'\geq 0
		\Rightarrow
		\xVec' \text{ is minimizer of NNLAD}.
	\end{align}
	Lastly, by \thref{Lemma:Prox_Op} the iterates calculated in
	\refP{Equation:EQ1:Proposition:NNLAD_APP}, \refP{Equation:EQ2:Proposition:NNLAD_APP} and
	\refP{Equation:EQ3:Proposition:NNLAD_APP} are exactly the iterates calculated in
	\refP{Equation:EQ1:Theorem:APP}, \refP{Equation:EQ2:Theorem:APP} and \refP{Equation:EQ3:Theorem:APP} respectively.
	We will now prove all statements.\\
	By \thref{Theorem:APP} the sequence $\left(\xVec^k,\wVec^k\right)$ converges to some saddle point
	$\left(\xVec^\#,\wVec^\#\right)$. Hence, $\xVec^k$ converges to $\xVec^\#$, which is a minimizer
	of NNLAD by \refP{Equation:saddle:Proposition:NNLAD_APP}.
	Since any sequence of averages converges to the same value
	as the original sequence, statement (1) follows.\\
	Since $\xVec^k$ and $\wVec^k$ are in the image of the proximal point operator of $\G{}$ and $\Fstar{}$ respectively,
	they need to obey $\G{\xVec^k}<\infty$ and $\Fstar{\wVec^k}<\infty$. \thref{Lemma:Saddle_to_Min}
	yields the $\xVec^k\geq 0$ and $\norm{\wVec^k}_\infty\leq 1$.
	By convexity we obtain also $\bar{\xVec}^k\geq 0$ and $\norm{\bar{\wVec}^k}_\infty\leq 1$.
	Statement (2) is proven.\\
	By \thref{Theorem:APP} the sequence $\left(\xVec^k,\wVec^k\right)$ converges to some saddle point
	$\left(\xVec^\#,\wVec^\#\right)$. By taking the limit, statement (2) yields
	$\xVec^\#\geq 0$ and $\norm{\wVec^\#}_\infty\leq 1$.
	The saddle point property and \thref{Lemma:Saddle_to_Min} implies
	\begin{align}\notag
		\inf_{\xVec\in\mathbb{R}^N}\f{\xVec,\wVec^\#}
		=\sup_{\wVec\in\mathbb{R}^M}\f{\xVec^\#,\wVec}
		=\norm{\AMat\xVec^\#-\yVec}_1<\infty.
	\end{align}
	By \thref{Lemma:Saddle_to_Min} again, this is only possible if $\wVec^\#$ is feasible, i.e.
	\begin{align}\label{Equation:EQ4:Proposition:NNLAD_APP}
		\AMat^T\wVec^\#\geq 0.
	\end{align}
	Hence, $\lim_{k\rightarrow\infty}\AMat^T\wVec^k\geq 0$ follows.
	By \thref{Lemma:Saddle_to_Min} and the feasibility of $\xVec^\#$ and $\wVec^\#$ we have
	\begin{align}\notag
		\lim_{k\rightarrow\infty}\norm{\AMat\xVec^k -\yVec}_1+\scprod{\yVec}{\wVec^k}
		=\norm{\AMat\xVec^\#-\yVec}_1+\scprod{\yVec}{\wVec^\#}
		=\sup_{\wVec\in\mathbb{R}^M}\f{\xVec^\#,\wVec}
			-\inf_{\xVec\in\mathbb{R}^N}\f{\xVec,\wVec^\#}
	\end{align}
	which is zero, since $\left(\xVec^\#,\wVec^\#\right)$ is a saddle point. This yields the convergence in statement (3).
	Since any sequence of averages converges to the same value
	as the original sequence, we also get the convergence of statement (4).
	The in particular part of statements (3) and (4) follows from \refP{Equation:stop:Proposition:NNLAD_APP} and statement (2).
	Hence, statements (3) and (4) are proven.\\
	To prove the the remaining statement (5) we choose
	$B_1:=\left\{\xVec^\#\right\}$ and $B_2:=\left\{\wVec:\norm{\wVec}_\infty\leq 1 \right\}$.
	The bound of \thref{Theorem:APP} becomes
	\begin{align}\notag
		\sup_{\wVec\in B_2}\f{\bar{\xVec}^k,\wVec}
			-\f{\xVec^\#,\bar{\wVec}^k}
		\leq&\frac{1}{k}\left(\frac{1}{2\sigma}\norm{\xVec^\#-\xVec^0}_2^2
			+\frac{1}{2\tau}\sup_{\norm{\wVec}_\infty\leq 1}\norm{\wVec-\wVec^0}_2^2\right)
		\\\label{Equation:EQ5:Proposition:NNLAD_APP}
		=&\frac{1}{k}\left(\frac{1}{2\sigma}\norm{\xVec^\#-\xVec^0}_2^2
			+\frac{1}{2\tau}\left(\norm{\wVec^0}_2^2+2\norm{\wVec^0}_1+M\right)\right).
	\end{align}
	By using \thref{Lemma:Saddle_to_Min} and the feasibility of $\xVec^\#$ we get
	\begin{align}\label{Equation:EQ6:Proposition:NNLAD_APP}
		\f{\xVec^\#,\bar{\wVec}^k}
		\leq\sup_{\wVec\in\mathbb{R}^M}\f{\xVec^\#,\wVec}
		=\norm{\AMat\xVec^\#-\yVec}_1.
	\end{align}
	Now let $\tilde{\wVec}$ be a maximizer of $\sup_{\wVec\in \mathbb{R}^M}\f{\bar{\xVec}^k,\wVec}$.
	By statement (2) we get $\G{\bar{\xVec}^k}=0$ and thus
	$\tilde{\wVec}$ is also a minimizer of the convex function
	$\wVec\rightarrow\scprod{-\AMat\bar{\xVec}^k}{\wVec}+\Fstar{\wVec}$.
	Hence, the subdifferential of this function needs to contain zero at $\tilde{\wVec}$.
	Since $\partial\Fstar{\wVec}=\emptyset$ whenever $\norm{\wVec}_\infty>1$, we get
	$\norm{\tilde{\wVec}}_\infty\leq 1$.
	This together with the feasibility of $\bar{\xVec}^k$ yields
	\begin{align}\label{Equation:EQ7:Proposition:NNLAD_APP}
		\sup_{\wVec\in B_2}\f{\bar{\xVec}^k,\wVec}
		=\f{\bar{\xVec}^k,\tilde{\wVec}}
		=\sup_{\wVec\in \mathbb{R}^M}\f{\bar{\xVec}^k,\wVec}
		=\norm{\AMat\bar{\xVec}^k-\yVec}_1.
	\end{align}
	Combining \refP{Equation:EQ5:Proposition:NNLAD_APP}, \refP{Equation:EQ6:Proposition:NNLAD_APP} and
	\refP{Equation:EQ7:Proposition:NNLAD_APP} yields
	\begin{align}\notag
		\norm{\AMat\bar{\xVec}^k-\yVec}_1-\norm{\AMat\xVec^\#-\yVec}_1
		\leq\frac{1}{k}\left(\frac{1}{2\sigma}\norm{\xVec^\#-\xVec^0}_2^2
			+\frac{1}{2\tau}\left(\norm{\wVec^0}_2^2+2\norm{\wVec^0}_1+M\right)\right)
	\end{align}
	and finishes the proof.
\end{proof}
We want to remark that the other feasibility assumptions $\AMat^T\wVec^k\geq 0$ and
$\AMat^T\bar{\wVec}^k\geq 0$ does not need to hold.
	\begin{small}
		\bibliography{Bibliography/Bibliography}

\newcommand{\etalchar}[1]{$^{#1}$}
\begin{thebibliography}{KDXH11}

\bibitem[AJS19]{group_testing}
Matthew Aldridge, Oliver Johnson, and Jonathan Scarlett.
\newblock Group testing: An information theory perspective.
\newblock {\em Foundations and Trends® in Communications and Information
  Theory}, 15(3-4):196--392, 2019.
\newblock \href {https://doi.org/10.1561/0100000099}
  {\path{doi:10.1561/0100000099}}.

\bibitem[AP16]{NNLS_first_order}
Hedy Attouch and Juan Peypouquet.
\newblock The rate of convergence of nesterov's accelerated forward-backward
  method is actually faster than $1/k^2$.
\newblock {\em SIAM Journal on Optimization}, 26(3):1824--1834, 2016.
\newblock \href {https://doi.org/10.1137/15M1046095}
  {\path{doi:10.1137/15M1046095}}.

\bibitem[BEZ08]{Bruckstein2008a}
Alfred~M. Bruckstein, Michael Elad, and Michael Zibulevsky.
\newblock {On the uniqueness of non-negative sparse \& redundant
  representations}.
\newblock {\em ICASSP, IEEE International Conference on Acoustics, Speech and
  Signal Processing - Proceedings}, (796):5145--5148, 2008.
\newblock \href {https://doi.org/10.1109/ICASSP.2008.4518817}
  {\path{doi:10.1109/ICASSP.2008.4518817}}.

\bibitem[BGI{\etalchar{+}}08]{l1NSP_berinde}
R.~{Berinde}, A.~C. {Gilbert}, P.~{Indyk}, H.~{Karloff}, and M.~J. {Strauss}.
\newblock Combining geometry and combinatorics: A unified approach to sparse
  signal recovery.
\newblock In {\em 2008 46th Annual Allerton Conference on Communication,
  Control, and Computing}, pages 798--805, 2008.

\bibitem[{Boy}14]{subgradient_methods}
Stephen {Boyd (with help from Jaehyun Park)}.
\newblock {“Subgradient Methods”, Notes for EE364b}, 2013–14.
\newblock URL:
  \url{http://stanford.edu/class/ee364b/lectures/subgrad_method_notes.pdf}.

\bibitem[BT09]{NNLS_fista}
Amir Beck and Marc Teboulle.
\newblock A fast iterative shrinkage-thresholding algorithm for linear inverse
  problems.
\newblock {\em SIAM Journal on Imaging Sciences}, 2(1):183--202, 2009.
\newblock \href {https://doi.org/10.1137/080716542}
  {\path{doi:10.1137/080716542}}.

\bibitem[CP11]{APP}
Antonin Chambolle and Thomas Pock.
\newblock A first-order primal-dual algorithm for convex problems with
  applications to imaging.
\newblock {\em Journal of Mathematical Imaging and Vision}, 40(1):120--145, May
  2011.
\newblock \href {https://doi.org/10.1007/s10851-010-0251-1}
  {\path{doi:10.1007/s10851-010-0251-1}}.

\bibitem[CRT06]{cs_candes}
E.~J. {Candes}, J.~{Romberg}, and T.~{Tao}.
\newblock Robust uncertainty principles: exact signal reconstruction from
  highly incomplete frequency information.
\newblock {\em IEEE Transactions on Information Theory}, 52(2):489--509, Feb
  2006.
\newblock \href {https://doi.org/10.1109/TIT.2005.862083}
  {\path{doi:10.1109/TIT.2005.862083}}.

\bibitem[DLR18]{Gap}
S.~{Dirksen}, G.~{Lecué}, and H.~{Rauhut}.
\newblock On the gap between restricted isometry properties and sparse recovery
  conditions.
\newblock {\em IEEE Transactions on Information Theory}, 64(8):5478--5487,
  2018.

\bibitem[{Don}06]{cs_donoho}
D.~L. {Donoho}.
\newblock Compressed sensing.
\newblock {\em IEEE Transactions on Information Theory}, 52(4):1289--1306,
  2006.
\newblock \href {https://doi.org/10.1109/TIT.2006.871582}
  {\path{doi:10.1109/TIT.2006.871582}}.

\bibitem[Dor43]{dorfman_original}
Robert Dorfman.
\newblock The detection of defective members of large populations.
\newblock {\em The Annals of Mathematical Statistics}, 14(4):436--440, 1943.
\newblock URL: \url{http://www.jstor.org/stable/2235930}.

\bibitem[DT05]{non-neg_BP}
David~L. Donoho and Jared Tanner.
\newblock Sparse nonnegative solution of underdetermined linear equations by
  linear programming.
\newblock {\em Proceedings of the National Academy of Sciences of the United
  States of America}, 102(27):9446--9451, 2005.
\newblock URL: \url{http://www.jstor.org/stable/3375994}.

\bibitem[DT10]{Donoho2010}
David~L. Donoho and Jared Tanner.
\newblock {Counting the faces of randomly-projected hypercubes and orthants,
  with applications}.
\newblock {\em Discrete and Computational Geometry}, 43(3):522--541, 2010.
\newblock \href {https://doi.org/10.1007/s00454-009-9221-z}
  {\path{doi:10.1007/s00454-009-9221-z}}.

\bibitem[FR13]{Introduction_CS}
Simon Foucart and Holger Rauhut.
\newblock {\em A Mathematical Introduction to Compressive Sensing}.
\newblock Birkh\"auser Basel, 2013.
\newblock \href {https://doi.org/10.1007/978-0-8176-4948-7}
  {\path{doi:10.1007/978-0-8176-4948-7}}.

\bibitem[GB14]{CVX1}
Michael Grant and Stephen Boyd.
\newblock {CVX}: Matlab software for disciplined convex programming, version
  2.1.
\newblock \url{http://cvxr.com/cvx}, March 2014.

\bibitem[GI10]{combinatorial_l1NSP_gilbert}
A.~{Gilbert} and P.~{Indyk}.
\newblock Sparse recovery using sparse matrices.
\newblock {\em Proceedings of the IEEE}, 98(6):937--947, 2010.

\bibitem[HTW15]{book_LASSO}
Trevor Hastie, Robert Tibshirani, and Martin Wainwright.
\newblock {\em Statistical Learning with Sparsity: The Lasso and
  Generalizations}.
\newblock Chapman \& Hall/CRC, 2015.

\bibitem[JJ20]{NNLR_matrix}
Fabian Jaensch and Peter Jung.
\newblock Robust recovery of sparse nonnegative weights from mixtures of
  positive-semidefinite matrices, 2020.
\newblock URL: \url{https://arxiv.org/abs/2003.12005}, \href
  {http://arxiv.org/abs/2003.12005} {\path{arXiv:2003.12005}}.

\bibitem[JXHC09]{combinatorial_Jafarpour}
S.~{Jafarpour}, W.~{Xu}, B.~{Hassibi}, and R.~{Calderbank}.
\newblock Efficient and robust compressed sensing using optimized expander
  graphs.
\newblock {\em IEEE Transactions on Information Theory}, 55(9):4299--4308,
  2009.

\bibitem[Kü19]{BPDN_misstuning}
Christian Kümmerle.
\newblock {\em Understanding and Enhancing Data Recovery Algorithms}.
\newblock Dissertation, Technische Universität München, München, 2019.
\newblock URL:
  \url{http://nbn-resolving.de/urn/resolver.pl?urn:nbn:de:bvb:91-diss-20191219-1521436-1-8}.

\bibitem[KDXH11]{combinatoril_l1NSP_khaje}
M.~A. {Khajehnejad}, A.~G. {Dimakis}, W.~{Xu}, and B.~{Hassibi}.
\newblock Sparse recovery of nonnegative signals with minimal expansion.
\newblock {\em IEEE Transactions on Signal Processing}, 59(1):196--208, 2011.

\bibitem[KJ18]{NNLS_first}
R.~{Kueng} and P.~{Jung}.
\newblock Robust nonnegative sparse recovery and the nullspace property of 0/1
  measurements.
\newblock {\em IEEE Transactions on Information Theory}, 64(2):689--703, 2018.
\newblock URL: \url{https://arxiv.org/abs/1603.07997}, \href
  {http://arxiv.org/abs/1603.07997} {\path{arXiv:1603.07997}}.

\bibitem[KKRT16]{low_rank}
Maryia Kabanava, Richard Kueng, Holger Rauhut, and Ulrich Terstiege.
\newblock {Stable low-rank matrix recovery via null space properties}.
\newblock {\em Information and Inference: A Journal of the IMA}, 5(4):405--441,
  08 2016.
\newblock \href
  {http://arxiv.org/abs/https://academic.oup.com/imaiai/article-pdf/5/4/405/8395013/iaw014.pdf}
  {\path{arXiv:https://academic.oup.com/imaiai/article-pdf/5/4/405/8395013/iaw014.pdf}},
  \href {https://doi.org/10.1093/imaiai/iaw014}
  {\path{doi:10.1093/imaiai/iaw014}}.

\bibitem[MC16]{super_res}
Veniamin~I. Morgenshtern and Emmanuel~J. Candès.
\newblock Super-resolution of positive sources: The discrete setup.
\newblock {\em SIAM Journal on Imaging Sciences}, 9(1):412--444, 2016.
\newblock \href {http://arxiv.org/abs/https://doi.org/10.1137/15M1016552}
  {\path{arXiv:https://doi.org/10.1137/15M1016552}}, \href
  {https://doi.org/10.1137/15M1016552} {\path{doi:10.1137/15M1016552}}.

\bibitem[MS08]{CVX2}
Grant M.C. and Boyd S.P.
\newblock Graph implementations for nonsmooth convex programs.
\newblock In V.~Blondel, S.~Boyd, and H.~Kimura, editors, {\em Recent Advances
  in Learning and Control}, volume 371 of {\em Lecture Notes in Control and
  Information Sciences}, pages 95--110. Springer-Verlag Limited, 2008.
\newblock \href {https://doi.org/10.1007/978-1-84800-155-8_7}
  {\path{doi:10.1007/978-1-84800-155-8_7}}.

\bibitem[Nes04]{nesterov_book}
Yurii~E. Nesterov.
\newblock {\em Introductory Lectures on Convex Optimization - {A} Basic
  Course}, volume~87 of {\em Applied Optimization}.
\newblock Springer, 2004.
\newblock \href {https://doi.org/10.1007/978-1-4419-8853-9}
  {\path{doi:10.1007/978-1-4419-8853-9}}.

\bibitem[PJ20]{rLASSO}
Hendrik~Bernd Petersen and Peter Jung.
\newblock Robust instance-optimal recovery of sparse signals at unknown noise
  levels.
\newblock {\em to appear on arXiv}, 2020.

\bibitem[Pol87]{polyak_book}
Boris~T. Polyak.
\newblock {\em Introduction to optimization}.
\newblock Translations series in mathematics and engineering. New York:
  Optimization Software, Inc, 1987.

\bibitem[Roc70]{book_convex}
R.~Tyrrell Rockafellar.
\newblock {\em Convex analysis}.
\newblock Princeton Mathematical Series. Princeton University Press, Princeton,
  N. J., 1970.

\bibitem[SH11]{Slawski2011}
Martin Slawski and Matthias Hein.
\newblock Sparse recovery by thresholded non-negative least squares.
\newblock pages 1926--1934, 2011.
\newblock URL:
  \url{http://papers.nips.cc/paper/4231-sparse-recovery-by-thresholded-non-negative-least-squares.pdf}.

\bibitem[SH13]{Slawski2013a}
Martin Slawski and Matthias Hein.
\newblock Non-negative least squares for high-dimensional linear models:
  Consistency and sparse recovery without regularization.
\newblock {\em Electron. J. Statist.}, 7:3004--3056, 2013.
\newblock \href {https://doi.org/10.1214/13-EJS868}
  {\path{doi:10.1214/13-EJS868}}.

\bibitem[SJC19]{NNLS}
Yonatan Shadmi, Peter Jung, and Giuseppe Caire.
\newblock Sparse non-negative recovery from biased subgaussian measurements
  using {NNLS}.
\newblock {\em CoRR}, abs/1901.05727, 2019.
\newblock \href {http://arxiv.org/abs/1901.05727} {\path{arXiv:1901.05727}}.

\bibitem[Vad12]{pseudorandomness}
Salil~P. Vadhan.
\newblock Pseudorandomness.
\newblock {\em Foundations and Trends® in Theoretical Computer Science},
  7(1–3):1--336, 2012.
\newblock \href {https://doi.org/10.1561/0400000010}
  {\path{doi:10.1561/0400000010}}.

\bibitem[vdBF09]{SPGL1}
Ewout van~den Berg and Michael~P. Friedlander.
\newblock Probing the pareto frontier for basis pursuit solutions.
\newblock {\em SIAM Journal on Scientific Computing}, 31(2):890--912, 2009.
\newblock \href {https://doi.org/10.1137/080714488}
  {\path{doi:10.1137/080714488}}.

\bibitem[vdBF19]{SPGL1_Site}
E.~van~den Berg and M.~P. Friedlander.
\newblock {SPGL1}: A solver for large-scale sparse reconstruction, December
  2019.
\newblock URL: \url{https://friedlander.io/spgl1}.

\bibitem[WXT11]{Wang2011}
Meng Wang, Weiyu Xu, and Ao~Tang.
\newblock {A unique "nonnegative" solution to an underdetermined system: From
  vectors to matrices}.
\newblock {\em IEEE Transactions on Signal Processing}, 59(3):1007--1016, 2011.
\newblock \href {https://doi.org/10.1109/TSP.2010.2089624}
  {\path{doi:10.1109/TSP.2010.2089624}}.

\bibitem[XH07]{combinatorial_Xu}
W.~{Xu} and B.~{Hassibi}.
\newblock Efficient compressive sensing with deterministic guarantees using
  expander graphs.
\newblock In {\em 2007 IEEE Information Theory Workshop}, pages 414--419, 2007.

\end{thebibliography}
		\bibliographystyle{alphaurl}
	\end{small}
\end{document}